\newtheorem{theorem}{Theorem}
\newtheorem{lemma}{Lemma}
\newtheorem*{lemmanonum}{Lemma}
\newtheorem*{theoremnonum}{Theorem}
\newtheorem*{conjecturenonum}{Conjecture}
\theoremstyle{definition}
\newtheorem{definition}{Definition}
\theoremstyle{remark}
\newtheorem{remark}{Remark}
\def \R{\mathbb R}
\newcommand{\sset}[1]{\left\{ #1\right\}}
\newcommand{\fwh}[1]{\; \left| \; #1 \right.}
\newcommand{\card}[1]{\left| #1 \right|}
\newcommand{\probabilityext}[2]{\ensuremath{\mathrm{Pr}_{#1}\left[#2\right]}}
\newcommand{\rev}{\ensuremath{\text{\rm\sc Rev}}}
\newcommand{\srev}{\ensuremath{\text{\rm\sc SRev}}}
\newcommand{\union}{\cup}
\newcommand{\bigunion}{\bigcup}     
\newcommand{\map}{\longrightarrow}
\newcommand{\ifif}{\Longleftrightarrow} 
\newcommand{\inters}{\cap}     
\newcommand{\then}{\Longrightarrow} 
\DeclareMathOperator*{\expect}{\mathbb E}
\newcommand{\vecc}[1]{\ensuremath{\mathbf{#1}}}
\DeclareMathOperator{\chull}{\mathcal H}
\DeclareMathOperator{\permuts}{\mathcal P}
\DeclareMathOperator{\dclosure}{\mathcal D}
\newcommand{\biglor}{\bigvee}
\newcommand{\bigland}{\bigwedge}
\newcommand{\ssets}[1]{\{ #1\}}
\newcommand{\cards}[1]{|#1|}  
\newcommand{\uniformunit}{\ensuremath{\mathcal{U}}}
\newcommand{\discube}[1]{\ensuremath{\Delta(#1)}}
\newcommand{\slice}[3]{\left.#1\right|_{{#2}:#3}}  
\newcommand{\slicesmall}[3]{{#1}|_{{#2}:#3}}  
\title{Duality and Optimality of Auctions for Uniform Distributions\footnote{The research leading to these results has received funding from the European Research Council under the European Union's Seventh Framework Programme (FP7/2007-2013) / ERC grant agreement 321171. 
\newline \indent
A preliminary version of this paper appeared in~\cite{gk2014-ec}.}
}
\author{Yiannis Giannakopoulos\thanks{Department of Computer Science, University of Oxford. Email: \href{mailto:ygiannak@cs.ox.ac.uk}{\nolinkurl{ygiannak@cs.ox.ac.uk} }} \and Elias Koutsoupias\thanks{Department of Computer Science, University of Oxford. Email: \href{mailto:elias@cs.ox.ac.uk}{\nolinkurl{elias@cs.ox.ac.uk} }}}
\date{January 15, 2018}
\begin{document}
\maketitle

\begin{abstract}
We develop a general duality-theory framework for revenue maximization in additive Bayesian auctions.
The framework extends linear programming duality and complementarity to constraints with partial derivatives. The dual system reveals the geometric nature of the problem and highlights its connection with the theory of bipartite graph matchings.
We demonstrate the power of the framework by applying it to a multiple-good monopoly setting where the buyer has uniformly distributed valuations for the items, the canonical long-standing open problem in the area.
We propose a deterministic selling mechanism called \emph{Straight-Jacket Auction (SJA)}, which we prove to be \emph{exactly} optimal for up to 6 items, and conjecture its optimality for any number of goods.
The duality framework is used not only for proving optimality, but perhaps more importantly for deriving the optimal mechanism itself; as a result, SJA is defined by natural geometric constraints.
\end{abstract}

\section{Introduction}

The problem of maximizing revenue in multidimensional Bayesian auctions is one of the most prominent within the area of Mechanism Design. An auctioneer wants to sell a number of items to some potential buyers (bidders). Each bidder has a value for every item; this is the maximum price that she is willing to pay to get the item and it is a private information. The value of a set of items is simply the sum of the values of the items in the set (\emph{additive valuations}). The buyers submit their bids and the auctioneer must decide, perhaps with randomization, what items to allocate to each player and how much to charge each one of them for this transaction. The seller has some prior (incomplete) knowledge about how much each player values the items, captured by a (joint) probability distribution over the space of all possible valuations. However, 
assuming standard selfish game-theoretic behavior, the players would lie about their true values and submit false bids if this is to increase their personal gain. The goal is to design auction protocols that maximize the total expected revenue of the seller, by also ensuring the truthful participation of the bidders.

For the single-dimensional case where only one item is to be auctioned among the players, the seminal work of \citet{Myerson:1981aa} has completely settled the problem. His solution is simple and elegant: the optimal auction is deterministic and easy to describe by a ``virtual valuations'' transformation and reduction to a social welfare maximization problem which can be solved using the well-studied VCG auction~\citep{Nisan:2007zr,Hartline:2007aa}.

Unfortunately, for the many-items setting 
these elegant properties and results do not hold in general. It is very likely that there is no simple closed-form description of optimal revenue auctions, especially in a unified way similar to Myerson's solution. However, for the most commonly studied probability distributions, e.g., the uniform and normal, we would like to have such clear, closed-form descriptions of the optimal auctions, or at least algorithms---preferably simple and intuitive---that compute optimal auctions (their allocation and payment functions). \emph{But we are far from such a goal. There exists no interesting continuous probability distribution for which we know the optimal auction for more than three items.} The difficulty of the problem is illustrated by the lack of general results for the canonical case of uniform i.i.d.~valuations in the unit interval $[0,1]$ even for a single bidder. In this work, we resolve this case for up to 6 items. We give an exact, analytic and intuitive way of computing the optimal prices; the solution is in closed-form, but involves roots of polynomials of degree equal to the number of items.
We do that as a special application of a much more general construction:
a duality-theory framework for proving exact and approximate optimality of many-bidder multi-item auctions for arbitrary continuous distributions. We expect this framework to be essential for helping generalizing Myerson's solution to many-items settings, the holy grail of auction theory.

It is known that even in the simple case of one bidder, randomized auctions can perform strictly better than deterministic ones~\citep{Hart:2012zr,Hart:2012ys,Manelli:2006vn,Pycia:2006pd,Daskalakis:2013vn}. \citet{Manelli:2006vn} provide some sufficient conditions for deterministic auctions to be optimal, 
but these are quite involved, in the form of functional inequalities that incorporate abstract partitions of the valuation space, and admittedly difficult to interpret. They were able to instantiate them though for the case of two and three uniform i.i.d.\ distributions and completely determine an optimal deterministic auction.
For more items, it is not known whether the optimal auction is a deterministic one. Our results here show that the optimal auction for up to 6 items is indeed deterministic. \emph{We conjecture that this is true for any number of items}; we also conjecture that for more than one bidder the optimal auction is not deterministic. \citet{Hart:2012uq} have provided a very simple sufficient condition in the case of two i.i.d.~items for the deterministic full-bundle auction to be optimal and deploy it to show that this is the case for the equal-revenue distribution. Finally, \citet{Daskalakis:2013vn} were also able to deal with the special case of two items and independent (not necessarily identical) exponential distributions and give an exact solution, which in this case is randomized. 
Essentially this is all that was known prior to our work regarding exact descriptions of optimal auctions with continuous probability distributions,

Given the difficulty of designing optimal auctions, \citet{Hart:2012uq} study the performance of the two most straightforward deterministic mechanisms for the single-buyer setting: the one that sells all items in a full bundle and the one that sells each item independently. They provide elegant approximation ratio guarantees (logarithmic with respect to the number of items) that hold universally for all product (independent) distributions, without even assuming standard regularity conditions (as, e.g., in~\citep{Chawla:2007aa,Manelli:2006vn,Myerson:1981aa}). \citet{Li:2013ty} further improved their results.
The difficulty of providing \emph{exact} optimal solutions for multi-item settings is further supported by a recent computational hardness result by \citet{Daskalakis:2012fk}, where it is shown that even for a single buyer and independent (but not identical) valuations with finite support of size $2$, it is {\#}P-hard to compute the allocation function of an optimal auction. However this does not exclude the possibility of efficiently computing approximate solutions. In fact, \citet{Cai:2013kx} and \citet{Daskalakis:2012aa} have presented PTAS (polynomial-time approximation schemes) for i.i.d.~settings.

\citet*{Daskalakis:2013vn} have also published a duality approach to the problem, inspired by optimal transport theory. With its use, they gave optimal mechanisms for two-item settings for exponential distributions. Their approach assumes independent item distributions that either have unbounded interval supports and decrease more steeply than $1/x^2$ or bounded ones but they vanish to zero at the right bound of the interval. Thus their method cannot be directly applied to uniform valuations. Our aim is to provide a duality theory framework for multi-item optimal auctions, which is as general and clean as possible for many bidders and arbitrary joint distributions (not necessarily independent ones). For that reason, we deploy a ``proof-from-scratch'' approach directly inspired by linear programming duality which is easily comprehensible and applicable, and with which the reader will immediately feel familiar. At their core, the two duality frameworks are based on similar ideas; although we expect our framework to have wider applicability, we also believe that there will be special cases in which the framework of \citep{Daskalakis:2013vn} will be more suitable to apply.

Finally, we mention some very recent developments after the initial conference version~\citep{gk2014-ec} of our paper: in a ground-breaking work \citet{Babaioff:2014ys} showed that a constant approximation of the optimal revenue can be achieved for the case of one bidder and independent items by very simple deterministic mechanisms, using a core-tail decomposition technique inspired by~\cite{Li:2013ty}, and \citet{Yao:2014vn} later generalized this idea to many-player settings. 
\subsection{Model and Notation}\label{sec:model}

We denote the real unit interval by $I=[0,1]$, the nonnegative reals by $\R_+=[0,\infty)$. We consider auctions of $n$ bidders who are interested in buying any subset of $m$ items. For any positive integer $m$ we use the notation $[m]=\sset{1,2,\dots,m}$. The value of bidder $i$ for item $j$ is in interval $D_{i,j}=[L_{i,j}, H_{i,j}]\subseteq\R_+$; we denote by $D_i=\prod_{j=1}^mD_{i,j}$ the hyperrectangle of all possible values of bidder $i$, and by $D=\prod_{i=1}^nD_i$ the space of all valuation inputs to the mechanism. The seller knows some probability distribution over $D$ with an almost everywhere\footnote{With respect to the standard Lebesgue measure $\mu$ in $\R^{n\times m}$.} (a.e.) differentiable density function $f$. Intervals $D_{i,j}$ need not be bounded; that is, we allow $H_{i,j}\in\R\union\sset{\infty}$.

Let $\vecc 0_m=(0,0,\dots,0)$ and $\vecc 1_m=(1,1,\dots,1)$ denote the
$m$-dimensional zero and unit vectors, respectively. We will drop subscript $m$ whenever this causes no confusion. For two $m$-dimensional vectors $\vecc x=(x_1,x_2,\dots,x_m)$ and $\vecc y=(y_1,y_2,\dots,y_m)$ we write $\vecc x\leq \vecc y$ as a shortcut for $x_j\leq y_j$ for all $j\in[m]$. 
For any matrix $\vecc x\in \R^{n\times m}$, $\vecc x_i$ will denote its $i$-th ($m$-dimensional) row vector.
For a function $f:\R^{n\times m}\to\R$ and $i\in[n]$ we denote $\nabla_i f(\vecc x)\equiv(\frac{\partial f(\vecc x)}{\partial x_{i,1}}, \frac{\partial f(\vecc x)}{\partial x_{i,2}}, \ldots, \frac{\partial f(\vecc x)}{\partial x_{i,m}})$; notice how only the derivatives with respect to the variables in row $\vecc x_i$ appear.
Finally, we use the standard game theoretic notation $\vecc x_{-j}=(x_1,x_2\dots,x_{j-1},x_{j+1},\dots,x_m)$ to denote the resulting vector if we remove $\vecc x$'s $j$-th coordinate. Then, $\vecc x=(\vecc x_{-j},x_j)$. Similarly, $\vecc x_{-(i,j)}$ will denote all values of the $n\times m$ matrix $\vecc x$ when we remove the $(i,j)$-th entry.
For a large part of the paper we will restrict our attention to a single bidder. In this case, we drop the subscript $i$ completely; for example, we write $L_j$ instead of $L_{1,j}$.

\subsubsection{Mechanisms and Truthfulness}

In this paper we study auctions for selling $m$ items to $n$ bidders when bidder $i\in[n]$ has a nonnegative valuation $x_{i,j}\in D_{i,j}$ for item $j\in[m]$. This is private information of the bidder, and intuitively represents the amount of money she is willing to pay to get this item. The seller has only some incomplete prior knowledge of the valuations $\vecc x$ in the form of a joint probability distribution $F$ over $D$ from which $\vecc x$ is drawn.

A direct revelation mechanism (auction) $\mathcal M=(\vecc a, \vecc p)$ on this setting is a protocol which, after receiving a bid vector $\vecc x_i'$ from each bidder $i$ as input (the bidder may lie about her true valuations $\vecc x_i$ and misreport $\vecc x_i'\neq\vecc x_i$), offers item $j$ to bidder $i$ with probability $a_{i,j}(\vecc x')\in[0,1]$, and bidder $i$ pays $p_i(\vecc x')\in\R$. We assume that each item can only be sold to at most one bidder, or equivalently $\sum_i a_{i,j}(\vecc x') \leq 1$. The total revenue extracted from the auction is $\sum_i p_i(\vecc x')$. If we want to restrict our attention only to deterministic auctions, we take $a_{i,j}(\vecc x')\in\ssets{0,1}$. Notice also that we do not demand nonnegative payments $\vecc p\geq \vecc 0$, i.e., we don't assume what is known as the No Positive Transfers (NPT) condition, since that is not explicitly needed for our results. However, as argued, e.g., in~\citep[Sect.~2.1]{Hart:2012uq}, assuming such a condition would be without loss of generality for the revenue maximization problem. 

More formally, a mechanism consists of an \emph{allocation} function $\vecc a\, : \, D\map I^{n\times m}$, which satisfies $\sum_i a_{i,j}(\vecc x) \leq 1$ for all $\vecc x\in D$ and all items $j\in [m]$, paired with \emph{payment} functions $p_i\, : \, D\map\R$. We consider each bidder having \emph{additive} valuations for the items, her ``happiness'' when she has (true) valuations $\vecc x_i$ and players report $\vecc x'=(\vecc x_{-i}',\vecc x_i')$  to the mechanism being captured by her \emph{utility} function 
\begin{equation}\label{eq:utilitydef}
u_i(\vecc x'|\vecc x_i)\equiv\vecc a_i(\vecc x')\cdot\vecc x_i-p_i(\vecc x')=\sum_{j=1}^ma_{i,j}(\vecc x')x_{i,j}-p_i(\vecc x'), 
\end{equation}
the expected sum of the valuations she receives from the items she manages to purchase minus the payment she has to submit to the seller for this purchase. The player is completely rational and selfish, wanting to maximize her utility, and that's why she will not hesitate to misreport $\vecc x_i'$ instead of her private values $\vecc x_i$ if this is to give her a higher utility in~\eqref{eq:utilitydef}. On the other hand, the seller's happiness is captured by the total \emph{revenue} of the mechanism 
\begin{equation}\label{eq:revenue}
\sum_{i=1}^n p_i(\vecc x')=\sum_{i=1}^n\left( \vecc a_i(\vecc x')\cdot \vecc x_i-u_i(\vecc x'|\vecc x_i)\right),
\end{equation}
which is a simple rearrangement of~\eqref{eq:utilitydef}.

It is standard in Mechanism Design to ask for auctions to respect the following two properties, for any player $i\in[n]$:
\begin{itemize}
\item \emph{Individual Rationality (IR)}: $u_i(\vecc x|\vecc x_i)\geq 0$ for all $\vecc x\in D$
\item \emph{Incentive Compatibility (IC)}: $u_i(\vecc x|\vecc x_i)\geq u_i((\vecc x_{-i},\vecc x_i')|\vecc x_i)$ for all $\vecc x \in D$ and $\vecc x_i'\in D_i$
\end{itemize}
The IR constraint corresponds to the notion of voluntary participation, that is, a bidder cannot harm herself by truthfully taking part in the auction, while IC captures the fundamental property that truthtelling is a dominant strategy\footnote{In this work, we consider Dominant Strategy Incentive Compatibility (DSIC), the strongest notion of incentive compatibility in which the bidders know all values.} for the bidder in the underlying game, i.e.~she will never receive a better utility by lying. Auctions that satisfy IC are also called \emph{truthful}. From now on we will focus on truthful IR mechanisms, and so we will relax notation $u_i(\vecc x|\vecc x_i)$ to just $u_i(\vecc x)$, considering bidder's utility as a function $u_i:D\map\R_+$. 
The following is an elegant, extremely useful analytic characterization of truthful mechanisms due to~\citet{Rochet:1985aa}. For proofs of this we recommend~\citep{Hart:2012uq,Manelli:2007kx}.
\begin{theorem}\label{thm:truthfulconvex}
An auction $\mathcal M=(\vecc a,\vecc p)$ is truthful (IC) if and only if the utility functions $u_i$ that induces have the following properties with respect to the $i$-th row coordinates, for all bidders $i$:
\begin{enumerate}
\item $u_i(\vecc x_{-i},\cdot)$ is a \emph{convex} function
\item $u_i(\vecc x_{-i},\cdot)$ is almost everywhere (a.e.) differentiable with \label{cond:convexity}
\begin{equation}\label{eq:allocationgradient}
\frac{\partial u_i(\vecc x)}{\partial x_{i,j}}=a_{i,j}(\vecc x)\quad\text{for all items $j\in[m]$ and a.e.~$\vecc x\in D$}.
\end{equation} 
The allocation function $\vecc a_i$ is a \emph{subgradient} of $u_i$.
\end{enumerate}
\end{theorem}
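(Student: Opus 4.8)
The plan is to prove the two implications separately, in both cases exploiting one simple structural fact: for fixed reports $\vecc x_{-i}$ of the other bidders, the utility of bidder $i$ is an \emph{affine} function of her \emph{true} type once her \emph{reported} type is held fixed.

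For the ``$\Rightarrow$'' direction, fix $i$ and $\vecc x_{-i}$ and, for each candidate report $\vecc z\in D_i$, consider the affine map $\ell_{\vecc z}(\vecc y)\equiv\vecc a_i(\vecc x_{-i},\vecc z)\cdot\vecc y-p_i(\vecc x_{-i},\vecc z)$; by~\eqref{eq:utilitydef} this is exactly the utility of a bidder with true type $\vecc y$ who reports $\vecc z$. The IC constraint says that truthtelling is optimal, i.e.
\[
u_i(\vecc x_{-i},\vecc y)=\sup_{\vecc z\in D_i}\ell_{\vecc z}(\vecc y)\qquad\text{for all }\vecc y\in D_i,
\]
with the supremum attained at $\vecc z=\vecc y$. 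A pointwise supremum of affine functions is convex, which is property~1; moreover each $\ell_{\vecc z}$ lies weakly below $u_i(\vecc x_{-i},\cdot)$ and touches it at $\vecc z$, so $\vecc a_i(\vecc x_{-i},\vecc z)$ is a subgradient of $u_i(\vecc x_{-i},\cdot)$ at $\vecc z$ for \emph{every} $\vecc z$. Since finite convex functions are differentiable almost everywhere and at a point of differentiability the subgradient is unique and equals the gradient, this yields~\eqref{eq:allocationgradient} for a.e.\ $\vecc x_i$, hence (by Fubini) for a.e.\ $\vecc x\in D$.

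For the ``$\Leftarrow$'' direction I would unwind the definitions. Evaluating~\eqref{eq:utilitydef} at a truthful report gives $p_i(\vecc x_{-i},\vecc x_i')=\vecc a_i(\vecc x_{-i},\vecc x_i')\cdot\vecc x_i'-u_i(\vecc x_{-i},\vecc x_i')$; plugging this into $u_i((\vecc x_{-i},\vecc x_i')|\vecc x_i)=\vecc a_i(\vecc x_{-i},\vecc x_i')\cdot\vecc x_i-p_i(\vecc x_{-i},\vecc x_i')$ rewrites the target IC inequality $u_i(\vecc x)\ge u_i((\vecc x_{-i},\vecc x_i')|\vecc x_i)$ as
\[
u_i(\vecc x_{-i},\vecc x_i)-u_i(\vecc x_{-i},\vecc x_i')\ \ge\ \vecc a_i(\vecc x_{-i},\vecc x_i')\cdot(\vecc x_i-\vecc x_i'),
\]
which is exactly the subgradient inequality for the convex function $u_i(\vecc x_{-i},\cdot)$ with subgradient $\vecc a_i(\vecc x_{-i},\vecc x_i')$ at $\vecc x_i'$ --- precisely what property~2 (read together with the subgradient remark) provides. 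Hence IC holds for every pair $\vecc x_i,\vecc x_i'\in D_i$.

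The real work is concentrated in convex analysis rather than in the mechanism-design bookkeeping. In the ``$\Rightarrow$'' direction I would invoke the classical a.e.\ differentiability of finite convex functions and the uniqueness of the supporting hyperplane at differentiability points. In the ``$\Leftarrow$'' direction the subtle point --- and the one I expect to require the most care --- is that~\eqref{eq:allocationgradient} identifies $\vecc a_i$ with $\nabla_i u_i$ only \emph{almost} everywhere, whereas IC is a statement about \emph{every} reported type $\vecc x_i'$; to bridge this one must check that $\vecc a_i(\vecc x_{-i},\vecc x_i')$ still belongs to the subdifferential $\partial u_i(\vecc x_{-i},\vecc x_i')$ at the measure-zero set of non-differentiability points, which holds because the subdifferential of a convex function is nonempty, closed and convex and contains every limit of gradients at nearby points of differentiability (equivalently, one reads the second condition under the usual convention that $\vecc a,\vecc p$ are determined only up to null sets). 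One should also bear in mind that the intervals $D_{i,j}$ may be unbounded, but neither convexity nor the subgradient inequality is affected by this; for these routine details I would follow the treatments in~\citep{Hart:2012uq,Manelli:2007kx}.
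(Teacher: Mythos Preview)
The paper does not itself give a proof of this theorem; it attributes the result to \citet{Rochet:1985aa} and refers the reader to~\citep{Hart:2012uq,Manelli:2007kx} for proofs. Your argument is the standard one found in those references: in the forward direction write $u_i(\vecc x_{-i},\cdot)$ as a pointwise supremum of the affine ``report-$\vecc z$'' functionals and invoke a.e.\ differentiability of convex functions; in the backward direction unfold the definitions to recognize IC as exactly the subgradient inequality. This is correct, and there is nothing meaningfully different to compare it against in the paper.

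One remark on the ``$\Leftarrow$'' direction you already flagged: note that the theorem statement separates the a.e.\ identity~\eqref{eq:allocationgradient} from the stronger sentence ``The allocation function $\vecc a_i$ is a subgradient of $u_i$,'' which is naturally read as holding at \emph{every} point. With that reading the subgradient inequality is available for every report $\vecc x_i'$ and your derivation of IC goes through without any null-set caveat; your closed/convex-subdifferential argument is then only needed if one insists on working from~\eqref{eq:allocationgradient} alone.
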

\Cref{thm:truthfulconvex} essentially establishes a kind of correspondence between truthful mechanisms and utility functions. Not only does every auction induce well-defined utility functions for the bidders, but also conversely, given nonnegative convex functions that satisfy the properties of the theorem, we can fully recover a corresponding  mechanism from expressions~\eqref{eq:allocationgradient} and~\eqref{eq:revenue}.

\subsubsection{Optimal Auctions}\label{sec:optintro}
In this paper we study the problem of maximizing the seller's expected revenue based on his prior knowledge of the joint distribution $F$, under the IR and IC constraints, thus (by \cref{thm:truthfulconvex} and~\eqref{eq:revenue})
\begin{mdframed}
\begin{equation}\label{eq:totalrevenue}
\sup_{u_1,\dots,u_n}\sum_{i=1}^n \int_{D}\left(\nabla u_i(\vecc x)\cdot \vecc x_i - u_i(\vecc x)\right)\, d F(\vecc x)
\end{equation} 
over the space of \emph{nonnegative convex functions} $u_i$ on $D$ having the properties
\begin{align}
  \sum_{i=1}^n \frac{\partial u_i(\vecc x)}{\partial x_{i,j}} &\leq 1 \tag{\text{$z_j(\vecc x)$}} \\
  \frac{\partial u_i(\vecc x)}{\partial x_{i,j}} &\geq 0 \tag{\text{$s_{i,j}(\vecc x)$}}
\end{align}
for a.e.~$\vecc x\in D$, all $i\in [n]$ and $j \in [m]$.
\end{mdframed}

\subsubsection{Deterministic Auctions}
Given the characterization of \cref{thm:truthfulconvex}, in case one wants to focus on deterministic auctions then it is enough to consider only utility functions that are the maximum of affine hyperplanes with slopes either $0$ or $1$ with respect to any direction (see, e.g., \cite{Rockafellar:1997aa}). So, for example, any single-bidder ($n=1$) deterministic and symmetric\footnote{This means that the auction does not discriminate between items, i.e., any permutation of the valuations profile $\vecc x$ results to the same permutation of the output allocation vector $\vecc a(\vecc x)$.} auction corresponds to a utility function of the form
$$
u(\vecc x)=\max_{J\subseteq [m]}\left(\sum_{j\in J}x_i-p_{\card{J}}\right),
$$
where $p_{r}$ is the price offered to the buyer for any bundle of $r$ items, $r\in [m]$.

\section{Outline of Our Work}
\label{sec:outline-our-work}

We give here an outline of our work which bypasses many technical issues but brings out a few central ideas. The reader may also find it helpful to revisit this outline during the more technical exposition later on.

\subsection{Duality for a Single Bidder}
\label{sec:dualityoutline}

We first develop a general duality framework that applies to almost all interesting continuous probability distributions (\cref{sec:duality}). We view the problem of maximizing revenue as an optimization problem in which the unknowns are the utility functions $u_i(\vecc x)$ of the bidders (\hyperref[eq:totalrevenue]{Program~\eqref{eq:totalrevenue}}). There are two main restrictions imposed to these functions by truthfulness (see \cref{thm:truthfulconvex}): the convexity restriction (the utility function $u_i(\vecc x)$  must be convex with respect to the private values $\vecc x_i$ of bidder $i$) and the gradient restriction (the derivatives of this function 
must be nonnegative and they have to be at most $1$ for every item).

We simplify things by dropping the convexity constraint and keep only the gradient constraints. Surprisingly, the convexity constraint can be recovered for free from the optimal solution of the remaining constraints for a large class of distributions which includes the uniform distribution. 
We view the resulting formulation as an infinite linear program with variable the utility function of the bidder. Its essential constraints (labeled by $(z_j)$ in~\eqref{eq:totalrevenue}) are that the derivatives for each item must be at most $1$ and its objective is to maximize the expected value of $\sum_i\nabla_i u_i(\vecc x)\cdot x - u_i(\vecc x)$. We carefully rewrite the integral in~\eqref{eq:totalrevenue} to bring it into a form which does not include any derivatives. Remarkably, Myerson's solution for the special case of one item is based on a different rewriting of the system in which the primal variables are the derivatives of the utility, instead of the utility itself. In fact, since the allocation constraints involve exactly the derivatives, this is the most natural choice of primal variables. Unfortunately \emph{such an approach does not seem to work for the case of many items}, since the partial derivatives are not independent functions and, if we treat them as such, we run the risk of violating the gradient constraints.

Having rewritten the original system in terms of the utility functions $u_i(\vecc x)$, we define a proper dual system (\hyperref[eq:dual]{Program~\eqref{eq:dual}}) with variables functions $z_j(\vecc x)$, one for every item, and functions $s_{i,j}(\vecc  x)$, one for every pair of bidder and item. The dual constraints require that functions $z_j-s_{i,j}$ take small values at the lower boundary of the domain and high values at the upper boundary of the domain. Furthermore, the objective is to minimize the sum of $z_j's$ integrals (\cref{fig:dualprogram}). This would have been a trivial problem---for example, each $z_j$ could crawl at the minimum possible value until it reached the upper boundary and then shoot up to the required high value---had it not existed another constraint which requires that the sum of the derivatives of these functions is bounded above (and therefore the functions have to start rising sooner, to be able to reach the high value at the upper boundary).

Although the derivation of the dual program is natural and straightforward, there is no guarantee that the dual optimal value matches the primal optimal value, since these are infinite, indeed uncountable, linear programs. We directly show that the two systems satisfy the weak duality property (\cref{lemma:weakdualitymany}). This gives a general framework to prove optimality of a mechanism, by finding a dual solution and showing that their values match. Unfortunately, in most cases this is extremely hard, since the optimal value may be very complicated (for example, it turns out that the optimal value for the uniform distribution of $m$ items consists of algebraic numbers of degree $m$). Instead we prove a complementarity theorem, which allows one to prove optimality by giving primal and dual solutions that satisfy the complementary slackness conditions. In fact, we prove a generalization of complementarity (\cref{lemma:complementaritymany}), which allows us later to seek finite combinatorial solutions instead of continuous ones (\cref{fig:2discretecoloring}).

A similar duality, limited to a single bidder and to a restricted set of probability distributions, was used by \citet{Daskalakis:2013vn}. Their duality framework does not apply to the uniform distribution, the canonical example of continuous probability. Our approach manages to handle a much wider class of probability distributions, which includes the uniform distribution, by taking care of the boundary issues.

\subsection{Duality for the Uniform Distribution and a Single Bidder}
\label{sec:dual-unif-distr}

Then, we zoom in to the canonical problem for revenue maximization: \emph{we consider uniform distributions over $[0,1]^m$ of $m$ items and only one bidder}. This may seem like a special case, and in fact it is; however, despite being the canonical case of a very important problem, it has been open since the work of \citet{Myerson:1981aa}, except for some specialized approaches which successfully resolved the problem for two and three items (mostly using complicated necessary conditions and rather involved computations). Our approach gives an elegant framework to solve these cases and provides a natural description and understanding of the solution. It also gives rise to beautiful problems; in particular, for the case of 2 items we know (but not include here; see, e.g., \cite{g2014_2}) at least five different solutions for the problem, each with its own merits.

Our dual formulation of the problem can be rephrased as follows (see \cref{th:weakdualityuniform}): In the unit hypercube of $m$ dimensions, we seek functions $z_j(\vecc x)$, one for each dimension; each $z_j$ starts at value $0$ on the edge $(0, \vecc x_{-j})$ of the hypercube and rises up to value $1$ at the opposite edge $(1, \vecc x_{-j})$ of the hypercube. Given that the functions cannot rise rapidly (more precisely, the sum of their slopes cannot exceed $m+1$ at each point of the hypercube), find the functions with minimum sum of integrals. Alternatively, we can view it as a problem in the $m+1$ hypercube: each function $z_j$ defines a hypersurface which starts at the edge $(0, \vecc x_{-j})$ of the hypercube, ends at the opposite edge $(0, \vecc x_{-j})$, and they collectively cannot grow rapidly; we seek to minimize the sum of volumes beneath these surfaces (\cref{fig:2dimdual}). The remaining dual constraints $s_{j}$ do not appear anywhere, since for this application to the case of uniform distribution, we make the choice to relax even further the primal \hyperref[eq:totalrevenue]{Program~\eqref{eq:totalrevenue}} by dropping the corresponding $(s_{i,j})$ constraints that require the derivatives to be nonnegative; as we'll see, this is again without loss for the revenue optimality.  

\subsection{The Straight-Jacket Auction (SJA)}
\label{sec:stra-jack-auct}

This dual system suggests in a natural way a selling mechanism, the Straight-Jacket Auction (SJA). We explain the intuition behind the mechanism and give a formal definition in \cref{sec:SJAdef}. SJA is defined so that for every bundle of items $A$ with $\card{A}=r$, \emph{the price $p_{r}$ for $A$ is determined by the requirement that the volume of the $r$-dimensional body in which the mechanism sells a nonempty subset of $A$ is exactly equal to $r/(m + 1)$}.

The aim of the remaining and more technical part of the paper is to develop the toolkit to prove that SJA is optimal for any number of items; however, we manage to prove optimality only for up to 6 items.

The straightforward way for proving the optimality of SJA would be to find a pair of primal and dual solutions that have the same value. Although we know such explicit solutions for the case of two items, there does not seem to exist a natural solution of the dual program which can be easily described for more that two items. How then can we show optimality in such cases? \emph{We do not give an explicit dual solution, but we only show that a proper solution exists and rely on complementarity to show optimality.}

\subsection{Proof of Optimality of SJA}
\label{sec:proof-optimality-sja}

A central notion in our development is the notion of deficiency: the $k$-deficiency of  a body $S$ in $m$ dimensions is $|S|-k\, (\sum_j |S_{[m]\setminus \ssets{j}}|)$, where $S_{[m]\setminus \ssets{j}}$ denotes the projection of $S$ on the hyperplane $x_j=0$ (this is an $(m-1)$-dimensional body). In particular, we are interested in the deficiency of the subsets of $\overline U_{\emptyset}$, the valuation subspace in which the auction sells a nonempty bundle. The main tool for proving the optimality of SJA is the following: \emph{To show that the SJA is optimal it suffices to show that no set $S$ of points inside $\overline U_{\emptyset}$ has positive $\frac{1}{m+1}$-deficiency} (\cref{theorem:weak-deficiency}). 

The fact that this is sufficient is based mainly on the observation that finding a feasible dual solution is, in disguise, a perfect matching problem between the hypercube and its boundaries (taken with appropriate multiplicities). If Hall's condition for perfect matchings (see, e.g., \citep[Theorem 1.1.3]{Lovasz:1986qf}) could apply to infinite graphs, under some continuity assumptions the sufficiency of the above would be evident. However, Hall's theorem does not hold for infinite graphs in general~\citep{Aharoni:1991ab} and, even worse, the continuity assumptions seem hard to establish. We bypass both problems by considering an interesting discretized version of the problem that takes advantage of Hall's theorem and the piecewise continuity; we then apply approximate complementarity to prove optimality.

The technical core in our proof for the optimality of SJA consists of establishing that no positive deficiency subset of $\overline U_{\emptyset}$ exists. Let us call such a set a \emph{counterexample}. To prove that no counterexample exists, we first argue that such a counterexample would have certain properties and then show that no counterexample with these properties exists. We first show that we can restrict our attention to special types of counterexamples, those that are upwards closed and symmetric (\cref{th:wlogsymmetric}). Ideally, we would like to restrict our attention even further to box-like counterexamples, those that are the intersection of an $m$-dimensional box and $\overline U_{\emptyset}$. This would restrict significantly the search of counterexamples, and in fact a well-known isoperimetric lemma by \citet{Loomis:1949ul} (see \cref{th:loomis}), and a generalization by \citet{Bollobas:1995fr} show that this is actually true when we remove the restriction that the counterexample must lie inside some fixed body (in our case, inside $\overline U_{\emptyset}$). 
Unfortunately, we can only establish this claim for 2 items.
Instead, we prove a weaker version of it: we show that if a counterexample exists, it must be closed under taking the convex hull of all symmetric images of a point (\cref{th:pclosure}). Furthermore, the requirement on deficiency provides a lower bound on the volume of the counterexample (\cref{lemma:sizeboundcounter,th:counterchainwidths}).

By exploiting these properties, we show that no counterexample exists for 6 or fewer items (\cref{theorem:nopositivesubSIMs}). The case of 4 or fewer items is straightforward, but the case of 5 items is qualitatively more challenging. The main reason for this difficulty is that the optimal mechanism for 5 items never sells a bundle of 4 items (equivalently, the price for 4 items is equal to the price of 5 items). The case of 6 items is similar to the case of 5 items; the optimal mechanism does not sell any bundles of 5 items. However, all these cases are being treated in a unified way in the proof of the theorem that avoids tiresome case analysis.
We must point out here that \cref{theorem:nopositivesubSIMs} is essentially \emph{the only ingredient of this paper whose proof does not work for more than $6$ items}.

\section{Duality}\label{sec:duality}

Motivated by traditional linear programming duality theory, we develop a duality theory framework that can be applied to the problem \eqref{eq:totalrevenue} of designing auctions with optimal expected revenue. By interpreting the derivatives as differences, we can view this as an (infinite) linear program and we can find its dual. The variables of the primal linear program are the values of the functions $u_i(\vecc x)$.  The labels $(z_j(\vecc x))$ and $(s_{i,j}(\vecc x))$ on the constraints of Program \eqref{eq:totalrevenue} are the analog of the dual variables of a linear program.

To find its dual program, we first rewrite the objective function in terms of the $u_i$'s instead of their derivatives. In particular, by integration by parts  we have
\begin{align*}
  \hspace{-0.8cm}\int_D \frac{\partial u_i(\vecc x)}{\partial x_{i,j}} x_{i,j} f(\vecc x) \, d\vecc x &= \int_{D_{-(i,j)}} \left[ u_i(\vecc x) x_{i,j} f(\vecc x) \right]_{x_{i,j}=L_{i,j}}^{x_{i,j}=H_{i,j}} \, d\vecc x_{-(i,j)} - \int_D u_i(\vecc x) \frac{\partial (x_{i,j} f(\vecc x))}{\partial x_{i,j}} \, d\vecc x\\ 
&= \int_{D_{-(i,j)}} \left[ u_i(\vecc x) x_{i,j} f(\vecc x) \right]_{x_{i,j}=L_{i,j}}^{x_{i,j}=H_{i,j}} \, d\vecc x_{-(i,j)} - \int_D u_i(\vecc x) f(\vecc x) \, d\vecc x - \int_D u_i(\vecc x) x_{i,j} \frac{\partial f(\vecc x)}{\partial x_{i,j}} \, d\vecc x
\end{align*}
to rewrite the objective of the primal program as
\begin{align}
\label{eq:primal objective}
  \sum_{i=1}^n \int_{D}\left(\nabla u_i(\vecc x)\cdot \vecc x_i - u_i(\vecc
  x)\right)\, d F(\vecc x)   =& \sum_{i=1}^n\sum_{j=1}^m \int_{D_{-(i,j)}}  H_{i,j} \, u_i(H_{i,j},  \vecc x_{-(i,j)}) \, f(H_{i,j}, \vecc x_{-(i,j)}) \, dx_{-(i,j)}  \\ 
& - \sum_{i=1}^n\sum_{j=1}^{m} \int_{D_{-(i,j)}}  L_{i,j} \,  u_i(L_{i,j}, \vecc x_{-(i,j)})  \, f(L_{i,j}, \vecc x_{-(i,j)}) \, dx_{-(i,j)}  \nonumber \\ 
& - \sum_{i=1}^n \int_D u_i(\vecc x) \left( (m+1) f(\vecc x)+ \vecc x_i \cdot \nabla_i f(\vecc x) \right) \, d \vecc x. \nonumber
\end{align}
Notice that some of the above expressions make sense only for bounded domains (i.e., when $H_{i,j}$ is not infinity), but it is possible to extend the duality framework to unbounded domains, by carefully replacing these expressions with their limits when they exist or by appropriately truncating the probability distributions. For the main results in this work we deal only with bounded domains, but for completeness and future reference we provide a treatment of the general case in \cref{sec:dualityinfinity}.

\emph{We also relax the original problem} by replacing the convexity constraint
by the much milder constraint of \emph{absolute continuity}; absolute continuity allows us to express functions as integrals of their derivatives. We can restate this as follows: truthfulness in general imposes two conditions on the solution of allocating the items to bidders (see \cref{thm:truthfulconvex}): the first condition is that the utility is convex; the second one is that the allocations must be gradients of the utility. It seems that in most cases, including the important Myersonian case of one item and regular distributions, when we optimize revenue the convexity constraint is redundant. Later on, when we will be applying the duality framework to the case of uniform distributions we will also drop the constraints of nonnegative allocation probabilities (i.e., the $(s_{j}(\vecc x))$ constraints in~\eqref{eq:totalrevenue}). 
In many cases, dropping these constraints might have no effect on the value of the program. However, there are cases in which these constraints are essential. In particular, they are needed even for the case of one item when the probability distributions are not regular. We give an in-depth discussion of this topic in \cref{sec:convexity-duality}.

To find the dual program, we have to take extra care on the boundaries of the domain, since the derivatives correspond to differences from which one term is missing (the one that corresponds to the variables outside the domain). This is a point where our approach differs from that of \citet{Daskalakis:2013vn}, which applies only to special distributions and in particular it does not apply to the uniform distribution.  Inside the domain, the dual constraint that corresponds to the primal variable $u_i(\vecc x)$ is $\sum_j \frac{\partial z_j(\vecc x)}{\partial x_{i,j}} \leq (m+1) f(\vecc x)+ \vecc x_i \cdot \nabla_i f(\vecc x) $. 

So, the dual program that we propose is
\begin{mdframed}
\begin{equation}
  \label{eq:dual}
  \inf_{z_1,\ldots,z_m} \int \sum_{j=1}^m z_j(\vecc x) \, d\vecc x
\end{equation}
subject to
\begin{align*}
  \tag{$u_i(\vecc x)$}   \sum_{j=1}^m \left(\frac{\partial z_j(\vecc x)}{\partial x_{i,j}}- \frac{\partial s_{i,j}(\vecc x)}{\partial x_{i,j}} \right)&\leq
   (m+1) f(\vecc x)+ \vecc x_i \cdot \nabla_i f(\vecc x)  \\
  \tag{$u_i(L_{i,j}, \vecc x_{-(i,j)})$}    z_j(L_{i,j}, \vecc x_{-(i,j)})-s_{i,j}(L_{i,j}, \vecc x_{-(i,j)}) &\leq
  L_{i,j} f(L_{i,j}, \vecc x_{-(i,j)}) \\
  \tag{$u_i(H_{i,j}, \vecc x_{-(i,j)})$}   z_j(H_{i,j}, \vecc x_{-(i,j)}) - s_{i,j}(H_{i,j}, \vecc x_{-(i,j)}) &\geq
  H_{i,j} f(H_{i,j}, \vecc x_{-(i,j)}) \\
  z_j(\vecc x),s_{i,j}(\vecc x) &\geq 0
\end{align*}
\end{mdframed}

The above intuitive derivation of this dual is used only for illustration and for explaining how we came up with it. None of the results rely on the actual way of coming up with the dual problem. However, the derivation is useful for intuition and for suggesting traditional linear programming machinery for these infinite systems; for example, although we don't directly use any results from the theory of linear programming duality, we are motivated by it to prove similar connections between our primal and dual programs.

One can interpret this dual as follows: For the sake of clarity, assume a single bidder and drop the $s_{i,j}$ constraints;  we seek $m$ functions $z_j$ defined inside the hyperrectangle $[L_1, H_1]\times\cdots\times [L_m, H_m]$ such that
\begin{itemize}
\item in the $j$-th direction, function $z_j$ starts at value (at most) $L_j f(L_j, \vecc x_{-j})$ and ends at value (at least) $H_j f(H_j, \vecc x_{-j})$; this must hold for all $\vecc x_{-j}$.
\item at every point of the domain, the sum of the derivatives of functions $z_j$ cannot exceed $(m+1) f(\vecc x)+ \vecc x\cdot\nabla f(\vecc x)$.
\item the sum of the integrals of these functions is minimized.
\end{itemize}

\begin{figure}[t]
\centering
\begin{subfigure}{0.485\textwidth}
\includegraphics[width=1\textwidth]{./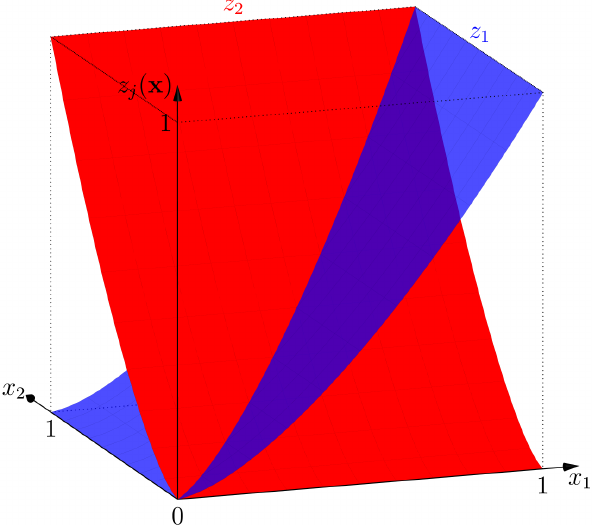}
\caption{\footnotesize Feasible solutions $z_1,z_2$ to the two-items dual program. Each function $z_j$ has to start at $0$ on the entire axis $x_j=0$ and rise to $1$. At no point of the $2$-dimensional cube the sum of their slopes is allowed to exceed $3$, and the objective is to keep them as low as possible, i.e., minimize the volume under their curves.}
\label{fig:2dimdual}
\end{subfigure}
~
\begin{subfigure}{0.485\textwidth}
\includegraphics[width=1\textwidth]{./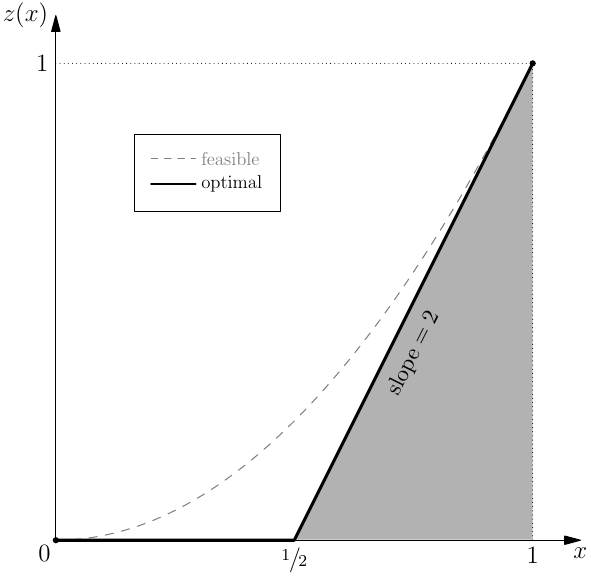}
\caption{\footnotesize For the special case of a single item, the dual feasible function $z$ has to start at $0$ and rise to $1$ or higher when $x=1$, with a slope of at most $2$. The optimal function minimizes the area below it. It is not difficult to see that the optimal solution is to remain at value 0 until $x=1/2$ and then increase steadily to $1$; the optimal dual objective is equal to the gray area. This corresponds exactly to the well-known optimal solution of Myerson with reserve price of $1/2$.}
\label{fig:1dimdual}
\end{subfigure}
\caption{\footnotesize Geometric interpretation of the dual \hyperref[eq:dual]{Program~\eqref{eq:dual}} for the case of a single bidder and $m=1,2$ uniform i.i.d.\ items.}
\label{fig:dualprogram}
\end{figure}

For a significant portion of this paper, we materialize this duality framework by applying it to the case of i.i.d.~uniform distributions over the unit interval $I^m$. Therefore, let's clearly state our dual constrains for ease of reference:

\begin{remark}[Duality for Uniform Domains]\label{th:weakdualityuniform}
The dual constraints (in \hyperref[eq:dual]{Program~\eqref{eq:dual}}) for the single-bidder $m$-items uniform i.i.d.\ setting over $I^m$ become

\begin{align*}\nopagebreak
  \tag{$u(\vecc x)$}   \sum_ {j=1}^m \frac{\partial z_j(\vecc x)}{\partial x_j} &\leq
   m+1 \\
  \tag{$u(0, \vecc x_{-j})$}    z_j(0, \vecc x_{-j}) &= 0 \\
  \tag{$u(1, \vecc x_{-j})$}   z_j(1, \vecc x_{-j}) &\geq 1 \\
  z_j(\vecc x) &\geq 0
\end{align*}
A geometric interpretation of this dual for the case of one and two items, based on the previous discussion, can be found in \cref{fig:dualprogram}. 
\end{remark}

Let us also mention parenthetically that one can derive Myerson's results by selecting as variables not the utilities $u_i(\vecc x)$, but their derivatives. In fact, since the allocation constraints involve exactly the derivatives, this is the natural choice of primal variables. Unfortunately, such an approach does not seem to work for more than one item because the derivatives are not independent functions. If we treat them as independent, we lose the power of the gradients constraint.

\subsection{Duality and Complementarity}
\label{sec:dual-compl}

The way that we derived the dual system does not yet provide any rigorous connection with the original primal system. We now prove that this is indeed a weak dual, in the sense that the value of the dual minimization \hyperref[eq:dual]{Program~\eqref{eq:dual}} cannot be less than the value of the primal program. 

\begin{mdframed}
\begin{lemma}[Weak Duality]\label{lemma:weakdualitymany}
  The value of every feasible solution of the primal \hyperref[eq:totalrevenue]{Program~\eqref{eq:totalrevenue}} does not exceed the value of any feasible solution of the dual \hyperref[eq:dual]{Program~\eqref{eq:dual}}.
\end{lemma}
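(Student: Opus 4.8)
The plan is to establish weak duality by the standard Lagrangian pairing: take any primal-feasible tuple $(u_1,\dots,u_n)$ and any dual-feasible tuple $(z_j, s_{i,j})$, and show that the primal objective is at most the dual objective $\int\sum_j z_j$. The engine of the argument is integration by parts, exactly as in the heuristic derivation of~\eqref{eq:dual}: I would start from the rewritten primal objective in~\eqref{eq:primal objective}, which already expresses everything in terms of the values of $u_i$ (on the interior and on the two boundary faces $x_{i,j}=L_{i,j}$ and $x_{i,j}=H_{i,j}$), with no derivatives of $u_i$ left. Then for each pair $(i,j)$ I would integrate $\int_D \frac{\partial u_i}{\partial x_{i,j}}(z_j - s_{i,j})\,d\vecc x$ by parts in the $x_{i,j}$ variable, moving the derivative off $u_i$ and onto $z_j - s_{i,j}$; this produces an interior term $-\int_D u_i\,\frac{\partial(z_j-s_{i,j})}{\partial x_{i,j}}$ plus boundary terms $\int_{D_{-(i,j)}} [u_i(z_j-s_{i,j})]_{x_{i,j}=L_{i,j}}^{x_{i,j}=H_{i,j}}$.

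Next I would sum these identities over $j\in[m]$ and compare term by term with~\eqref{eq:primal objective}. The key point is that $u_i \ge 0$ everywhere (primal feasibility), so each dual inequality can be applied with the correct sign:
\begin{itemize}
\item on the interior, the constraint labeled $(u_i(\vecc x))$ gives $\sum_j \frac{\partial(z_j-s_{i,j})}{\partial x_{i,j}} \le (m+1)f(\vecc x) + \vecc x_i\cdot\nabla_i f(\vecc x)$, and multiplying by $u_i\ge 0$ and integrating bounds the troublesome interior term of~\eqref{eq:primal objective} from above;
\item on the face $x_{i,j}=H_{i,j}$, the constraint $(u_i(H_{i,j},\vecc x_{-(i,j)}))$ gives $z_j - s_{i,j} \ge H_{i,j} f$, again multiplied by the nonnegative boundary value of $u_i$;
\item on the face $x_{i,j}=L_{i,j}$, the constraint $(u_i(L_{i,j},\vecc x_{-(i,j)}))$ gives $z_j - s_{i,j} \le L_{i,j} f$, which has the right sign because that boundary term enters~\eqref{eq:primal objective} with a minus.
\end{itemize}
Assembling these, the primal objective is bounded by $\sum_{i,j}\left(\int_D \frac{\partial u_i}{\partial x_{i,j}}(z_j-s_{i,j})\,d\vecc x\right)$ rearranged suitably; finally, since $\frac{\partial u_i}{\partial x_{i,j}}\ge 0$ and $\sum_i \frac{\partial u_i}{\partial x_{i,j}}\le 1$ (the primal $(z_j)$ and $(s_{i,j})$ constraints) together with $z_j,s_{i,j}\ge 0$, one has $\sum_{i}\frac{\partial u_i}{\partial x_{i,j}}(z_j - s_{i,j}) \le \sum_i \frac{\partial u_i}{\partial x_{i,j}} z_j \le z_j$, so summing over $j$ and integrating over $D$ yields the bound $\int\sum_j z_j\,d\vecc x$, which is the dual objective. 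This chain of inequalities is precisely weak duality.

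The main obstacle is not the algebra but the regularity/integrability bookkeeping that makes the integration by parts legitimate: $u_i$ is only convex (hence Lipschitz on compacts and a.e.\ differentiable) and the $z_j, s_{i,j}$ need only be absolutely continuous in the relevant coordinate, while $f$ is only a.e.\ differentiable; so I would need to justify that all the integrals above are finite and that the per-coordinate fundamental theorem of calculus (Fubini plus one-dimensional integration by parts for absolutely continuous functions) applies for a.e.\ fixed value of the remaining coordinates. On bounded domains $D$ — the case the lemma is stated for — this is routine: $u_i$ is bounded and Lipschitz, $f$ is bounded (or at least integrable) with $x_{i,j}f$ absolutely continuous in $x_{i,j}$ for a.e.\ slice, and one applies Fubini to peel off the $x_{i,j}$ integral. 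I would also remark that the boundary evaluations $u_i(H_{i,j},\vecc x_{-(i,j)})$ etc.\ are well-defined a.e.\ by continuity of $u_i$. The unbounded case is deferred, consistent with the paper's stated scope.
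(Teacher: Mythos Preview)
Your proposal is correct and follows essentially the same approach as the paper. Both arguments hinge on the identity $\sum_{i,j}\int_D \frac{\partial u_i}{\partial x_{i,j}}(z_j-s_{i,j})\,d\vecc x = -\sum_i\int_D u_i\sum_j\frac{\partial(z_j-s_{i,j})}{\partial x_{i,j}}\,d\vecc x + \text{boundary terms}$ obtained by integration by parts, and then use the dual constraints (multiplied by $u_i\ge 0$) to bound one side by the primal objective and the primal constraints (multiplied by $z_j,s_{i,j}\ge 0$) to bound the other side by the dual objective; the only difference is that the paper organizes the computation as ``dual minus primal $\ge 0$'' while you sandwich both objectives around the intermediate bilinear quantity $\sum_{i,j}\int \frac{\partial u_i}{\partial x_{i,j}}(z_j-s_{i,j})$, which is cosmetically different but algebraically identical.
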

\end{mdframed}

\begin{proof}
The proof is essentially a straightforward adaptation of the proof of traditional weak
duality for finite linear programs. Take a pair of feasible solutions for
the primal and the dual programs and consider the difference
between the dual objective~\eqref{eq:dual} and the primal objective~\eqref{eq:primal objective}:
\begin{align}
& \sum_{j=1}^m \int_D z_j(\vecc x) \, d\vecc x  + \sum_{i=1}^n \int_D u_i(\vecc
x) \, \left( (m+1) f(\vecc x)+ \vecc x_i \cdot \nabla_i f(\vecc x) \right) \,
d\vecc x \label{eq:weak_duality_ref3} \\
& - \sum_{i=1}^n\sum_{j=1}^m \int_{D_{-(i,j)}}  H_{i,j} \, u_i(H_{i,j},  \vecc
x_{-(i,j)}) \, f(H_{i,j}, \vecc x_{-(i,j)}) \, d\vecc x_{-(i,j)} \notag\\
& + \sum_{i=1}^n\sum_{j=1}^m \int_{D_{-(i,j)}}  L_{i,j} \,  u_i(L_{i,j}, \vecc
x_{-(i,j)})  \, f(L_{i,j}, \vecc x_{-(i,j)}) \, d\vecc x_{-(i,j)}  \notag
\end{align}
Using the constraints of the programs, the first two terms of this expression can be bounded from below by
\begin{multline}
\label{eq:weak_duality_ref1}
  \sum_{j=1}^m \int_D z_j(\vecc x) \sum_{i=1}^n \frac{\partial u_i(\vecc x)}{\partial x_{i,j}} \, d\vecc x - \sum_{i=1}^n\sum_{j=1}^m \int_D s_{i,j}(\vecc x) \frac{\partial u_i(\vecc x)}{\partial x_{i,j}} \, d\vecc x\\  
  + \sum_{i=1}^n \int_D u_i(\vecc x)\left( \sum_{j=1}^m\frac{\partial z_j(\vecc x)}{\partial x_{i,j}} - \sum_{j=1}^m\frac{\partial s_{i,j}(\vecc x)}{\partial x_{i,j}}\right) \, d\vecc x  
\end{multline}
which equals
$$
\sum_{i=1}^n\sum_{j=1}^m \int_D \frac{\partial \left[(z_j(\vecc x)-s_{i,j}(\vecc x)) u_i(\vecc x)\right]}{\partial x_{i,j}} \, d\vecc x.
$$
Similarly, the other two terms of the expression can be bounded from below by
\begin{multline}
\label{eq:weak_duality_ref2}
  - \sum_{i=1}^n\sum_{j=1}^m \int_{D_{-(i,j)}}  u_i(H_{i,j},  \vecc
  x_{-(i,j)}) \left(z_j(H_{i,j},  \vecc x_{-(i,j)}) - s_{i,j}(H_{i,j},  \vecc x_{-(i,j)})\right) \, d\vecc x_{-(i,j)}
  \\
+ \sum_{i=1}^n\sum_{j=1}^m \int_{D_{-(i,j)}}  u_i(L_{i,j},  \vecc x_{-(i,j)}) \left(z_j(L_{i,j},  \vecc x_{-(i,j)}) - s_{i,j}(L_{i,j},  \vecc x_{-(i,j)}) \right) \, d\vecc x_{-(i,j)} 
\end{multline}
and they cancel out the first two terms. Bringing everything together, the difference of the dual and primal objectives is bounded from below by zero.
\end{proof}

We can use weak duality to show optimality: it suffices to have a pair of feasible primal and dual solutions that give the same value. In many cases, such as the case of uniform distributions, computing the optimal value is not easy or it may not even be expressible in a closed form. In such a case, a useful tool to prove optimality is through complementarity. In fact, we will prove a slight generalization of traditional Linear Programming complementarity which will allow us later to discretize the domain and consider approximate solutions. Specifically, instead of requiring the product of primal and corresponding dual constraints to be zero, we generalize it to be bounded above by a constant:

\begin{mdframed}
\begin{lemma}[Complementarity]\label{lemma:complementaritymany}
Suppose that $u_i(\vecc x)$ is a feasible solution of the primal \hyperref
[eq:totalrevenue]{Program~\eqref{eq:totalrevenue}} and $z_j(\vecc
x)$, $s_{i,j}(\vecc x)$ is a feasible solution of the dual \hyperref[eq:dual]
{Program~\eqref{eq:dual}}. Fix some parameter $\varepsilon\geq 0$. If the
following complementarity constraints hold for all $i\in[n]$, $j\in[m]$ and a.e. $\vecc x \in D$,
\begin{align*}
  u_i(\vecc x) \cdot  \left( (m+1) f(\vecc x)+ \vecc x_i \cdot \nabla_i f(\vecc x)
    -\sum_{j =1}^m \frac{\partial z_j(\vecc x)}{\partial x_{i,j}}+\sum_{j=1}^m \frac{\partial s_{i,j}(\vecc x)}{\partial x_{i,j}}
  \right)& \leq \varepsilon f(\vecc x) \\
  u_i(L_{i,j}, \vecc x_{-(i,j)}) \cdot \left( L_{i,j}f(L_{i,j}, \vecc x_{-(i,j)})
    - z_j(L_{i,j}, \vecc x_{-(i,j)}) +s_{i,j}(L_{i,j}, \vecc x_{-(i,j)})\right) & \leq \varepsilon f(L_{i,j}, \vecc x_{-(i,j)}) \\
  u_i(H_{i,j}, \vecc x_{-(i,j)}) \cdot \left( z_j(H_{i,j}, \vecc x_{-(i,j)}) - s_{i,j}(H_{i,j}, \vecc x_{-(i,j)})
    -H_{i,j}f(H_{i,j}, \vecc x_{-(i,j)}) \right) & \leq \varepsilon f(H_{i,j}, \vecc x_{-(i,j)}) \\
  z_j(\vecc x) \cdot \left( 1 - \sum_{i=1}^n\frac{\partial
        u_i(\vecc x)}{\partial x_{i,j}} \right) & \leq \varepsilon f(\vecc x)\\
  s_{i,j}(\vecc x) \cdot \frac{\partial
        u_i(\vecc x)}{\partial x_{i,j}} & \leq \varepsilon f(\vecc x),  
\end{align*}
then the primal and dual objective values differ by at most
$(n+m+3nm)\varepsilon$. In particular, if the conditions are satisfied
with $\varepsilon=0$, both solutions are optimal.
\end{lemma}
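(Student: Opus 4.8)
The plan is to re-run the proof of weak duality (Lemma~\ref{lemma:weakdualitymany}) while \emph{keeping track of} the nonnegative quantities that are simply discarded there. Starting again from the difference $\Delta$ between the dual objective~\eqref{eq:dual} and the rewritten primal objective~\eqref{eq:primal objective}, every inequality used in that argument replaces some term by a smaller one, and in each case the quantity lost is the integral of a product ``(primal or dual variable, which is $\ge 0$) $\times$ (slack of a feasibility constraint)''. Hence $\Delta$ is exactly $0$ plus a finite sum of such nonnegative slack-integrals; this re-derives $\Delta\ge 0$ and reduces the theorem to showing that each of those slack-integrals is at most $\varepsilon$.

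There are four kinds of slack-integrals. (i) For the interior constraint $(u_i(\vecc x))$ the slack is $u_i(\vecc x)\bigl((m+1)f(\vecc x)+\vecc x_i\cdot\nabla_i f(\vecc x)-\sum_j\partial z_j(\vecc x)/\partial x_{i,j}+\sum_j\partial s_{i,j}(\vecc x)/\partial x_{i,j}\bigr)$, which the first complementarity hypothesis bounds by $\varepsilon f(\vecc x)$; integrating over $D$ and summing over the $n$ bidders contributes at most $n\varepsilon$ (using $\int_D f\,d\vecc x=1$). (ii)--(iii) For the two boundary constraints $(u_i(L_{i,j},\vecc x_{-(i,j)}))$ and $(u_i(H_{i,j},\vecc x_{-(i,j)}))$ the slacks are precisely the two middle complementarity expressions, bounded by $\varepsilon f$ on the relevant face; integrating over $D_{-(i,j)}$ and summing over the $nm$ pairs contributes at most $nm\varepsilon$ each. (iv) For the constraint $(z_j(\vecc x))$, i.e.\ $\sum_i\partial u_i/\partial x_{i,j}\le 1$, the slack is $z_j(\vecc x)\bigl(1-\sum_i\partial u_i(\vecc x)/\partial x_{i,j}\bigr)$; since the hypothesis only controls $z_j(\vecc x)\bigl(1-\partial u_i(\vecc x)/\partial x_{i,j}\bigr)$ for a \emph{single} bidder, I would use that each $\partial u_i/\partial x_{i,j}\ge 0$ (so $1-\sum_i\partial u_i/\partial x_{i,j}\le 1-\partial u_i/\partial x_{i,j}$ for any fixed $i$) together with $z_j\ge 0$ to replace it by the single-bidder expression, picking one bidder per item; this contributes at most $m\varepsilon$. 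Summing, $0\le\Delta\le(n+m+2nm)\varepsilon$, which is the claim; with $\varepsilon=0$ we get $\Delta=0$, so the primal value equals the dual value and, by weak duality, both solutions are optimal.

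The one item of bookkeeping to watch is the $s_{i,j}$ variables: reassembling the interior integrals through the product rule exactly as in the weak-duality proof leaves a residual $\sum_{i,j}\int_D s_{i,j}(\vecc x)\,\partial u_i(\vecc x)/\partial x_{i,j}\,d\vecc x\ge 0$, coming from the primal constraint $(s_{i,j}(\vecc x))$ paired with $s_{i,j}\ge 0$. In every application in this paper the $s_{i,j}$ are taken identically $0$ (cf.\ Remark~\ref{th:weakdualityuniform}), so this residual vanishes; otherwise it is absorbed by the analogous complementarity bound $s_{i,j}(\vecc x)\,\partial u_i(\vecc x)/\partial x_{i,j}\le\varepsilon f(\vecc x)$. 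The only genuinely delicate step is the integration by parts that turns $\int_D\partial[(z_j-s_{i,j})u_i]/\partial x_{i,j}\,d\vecc x$ into boundary integrals over $D_{-(i,j)}$ and cancels them against the $H_{i,j}$/$L_{i,j}$ terms of~\eqref{eq:primal objective}; but this is carried out inside the proof of Lemma~\ref{lemma:weakdualitymany} (which I may invoke), and rests only on the absolute-continuity relaxation and the boundary care already emphasized. Consequently I expect no real obstacle here beyond getting the constant to come out as exactly $(n+m+2nm)$; the sole nonroutine observation is the monotonicity reduction in step~(iv).
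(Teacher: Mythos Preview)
Your proposal is correct and follows essentially the same approach as the paper, which simply sums and integrates the four complementarity inequalities and asserts (via the weak-duality transformations) that the left-hand side equals the dual-minus-primal gap. You are in fact more careful about the bookkeeping than the paper's short proof: the monotonicity reduction in step~(iv) and your remark on the $s_{i,j}$ residual fill in details the paper leaves implicit.
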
  
\end{mdframed}
\begin{proof}
  We take the sum of all complementarity constraints and integrate in the domain:
  \begin{multline*}
   \sum_{i=1}^n \int_D  u_i(\vecc x)  \left( (m+1) f(\vecc x)+ \vecc x_i \cdot \nabla_i f(\vecc x) -\sum_{j=1}^m \frac{\partial z_j(\vecc x)}{\partial x_{i,j}}+\sum_{j=1}^m \frac{\partial s_{i,j}(\vecc x)}{\partial x_{i,j}}
  \right) \, d\vecc x \\
  +\sum_{i=1}^n\sum_{j=1}^m \int_{D_{-(i,j)}}  u_i(L_{i,j}, \vecc x_{-(i,j)})  \left( L_{i,j}f(L_{i,j}, \vecc x_{-(i,j)})
    - z_j(L_{i,j}, \vecc x_{-(i,j)}) + s_{i,j}(L_{i,j}, \vecc x_{-(i,j)}) \right) \, d\vecc x_{-(i,j)} \\
   +\sum_{i=1}^n\sum_{j=1}^m \int_{D_{-(i,j)}}  u_i(H_{i,j}, \vecc x_{-(i,j)})  \left( z_j(H_{i,j}, \vecc x_{-(i,j)}) - s_{i,j}(H_{i,j}, \vecc x_{-(i,j)})
    -H_{i,j}f(H_{i,j}, \vecc x_{-(i,j)}) \right) \, d\vecc x_{-(i,j)}  \\
  +\sum_{j=1}^m \int_D z_j(\vecc x) \left( 1 - \sum_{i=1}^n\frac{\partial
        u_i(\vecc x)}{\partial x_{i,j}} \right) \, d \vecc x +\sum_{i=1}^n\sum_
        {j=1}^m\int_D s_{i,j}(\vecc x) \cdot \frac{\partial
        u_i(\vecc x)}{\partial x_{i,j}}\, d\vecc x\leq (n+m+3nm) \varepsilon
  \end{multline*}
  It suffices to notice that the left-hand side is just the sum of \eqref{eq:weak_duality_ref3},
  \eqref{eq:weak_duality_ref1} and \eqref{eq:weak_duality_ref2}, so by using the
  same transformations
  that we used to prove the Weak Duality \cref{lemma:weakdualitymany}, it is
  equal to the dual objective minus the primal objective.
\end{proof}

\section{The Straight-Jacket Auction (SJA)}
\label{sec:tight-slic-mech}
\label{sec:SJAdef}

\begin{figure}[t]
\centering
\includegraphics[width=0.44\textwidth]{./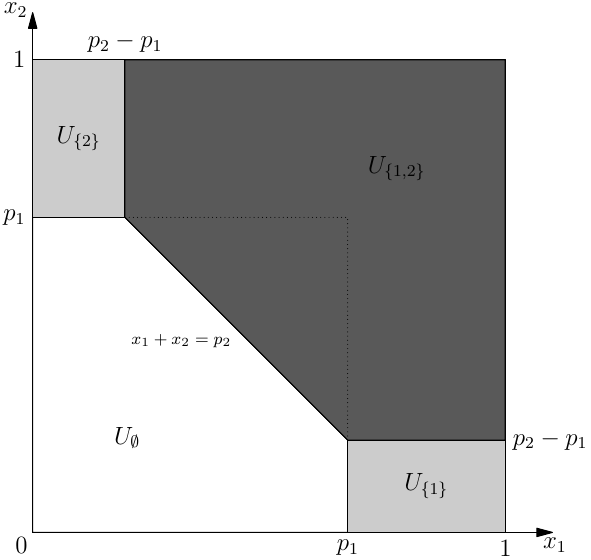}
\includegraphics[width=0.55\textwidth]{./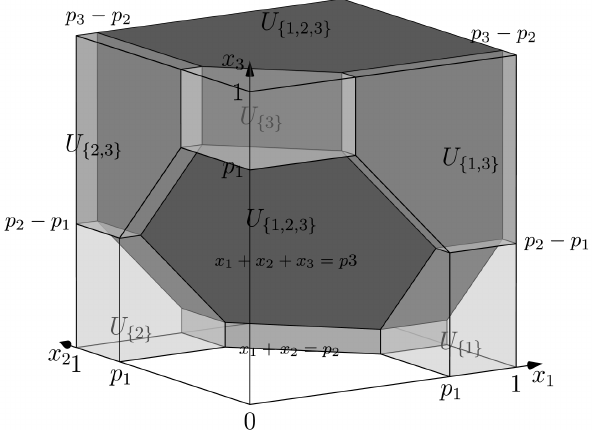}
\caption{\footnotesize The allocation spaces of the optimal SJA mechanisms for $m=2$ and $m=3$ items. The payments are given by $p_1=\frac{m}{m+1}$, $p_2=\frac{2m-\sqrt{2}}{m+1}$, and $p_3=3-\frac{7.0971}{m+1}$. The mechanism sells at least one item within the gray areas $\overline U_{\emptyset}$, and all items within the \emph{dark} gray areas $U_{[m]}$. If we flip around these dark gray areas by $\vecc x\mapsto \vecc 1-\vecc x$, so that $\vecc 1$ is mapped to the origin $\vecc 0$, they are exactly the SIM-bodies defined in \cref{sec:SIMbodies2}, for $k=\frac{1}{m+1}$. These SIM-bodies can be seen in \cref{fig:2dimSIM,fig:3dimSIM}, respectively.}
\label{fig:SJA2and3}
\end{figure}

In the rest of the paper we demonstrate the power and usage of the duality framework developed in \cref{sec:duality}, by applying it to the canonical open problem of revenue maximization in the economics literature: that of a single bidder setting where item valuations come i.i.d.~from a uniform distribution over $[0,1]$. Recall that now the general dual \hyperref[eq:dual]{Program~\eqref{eq:dual}} takes the simple form shown in \cref{th:weakdualityuniform}, where the $s_{i,j}$ variables do not appear since we have chosen to relax even further the primal \hyperref[eq:totalrevenue]{Program~\eqref{eq:totalrevenue}} by dropping the nonnegative derivatives constraint; this will end up being without loss to optimality. 

The duality conditions are not only useful in establishing optimality; they can in fact \emph{suggest the optimal auction in a natural way}. We illustrate this by considering the case of 2 items. Starting from \cref{fig:2dimdual}, we need to find two functions $z_1$ and $z_2$ that satisfy the boundary constraints and the slope constraint. If we had only one function, say $z_1$, the solution would be obvious and similar to the solution for one item (\cref{fig:1dimdual}): $z_1(\vecc x)$ would be 0 up to $x_1=2/3$ and then increase with a maximum slope of~3. But if we do the same for both functions $z_1$ and $z_2$, we obtain an infeasible solution: in the square $[2/3,1]\times[2/3,1]$ the total slope would be 6 instead of 3. This implies that the functions need more space to grow; in fact, the area of growth needs to be at least equal to the area of the square $[2/3,1]\times[2/3,1]$. The natural way to get this space is to add a triangle of area $1/9$ in the way indicated in the left part of \cref{fig:SJA2and3} (the triangle defined by the lines $x_j=p_1=2/3$, $j=1,2$, and $x_1+x_2=p_2$). We then seek a dual solution in which only $z_j$ grows in area $U_{\{j\}}$ and both functions grow in $U_{\{1,2\}}$ (\cref{fig:SJA2and3}). The corresponding primal solution is that only item $j$ is sold in $U_{\{j\}}$ and both items are sold in $U_{\{1,2\}}$.

The remarkable fact is that the optimal mechanism is completely determined by the obvious requirement that the area of the triangle must be (at least) equal to $1/9$. To put it in another way: suppose that we knew that the optimal mechanism is deterministic; then the dual program requires that
\begin{itemize}
	\item the price $p_1$ for one item must satisfy $p_1\leq 2/3$ so that $z_1$ has enough space to grow from 0 to 1 with the maximum slope 3
	\item the price $p_2$ for the bundle of both items must be such that the area of the region $\overline U_{\emptyset}=U_{\ssets{1}}\union U_{\ssets{2}}\union U_{\ssets{1,2}}$, in which the mechanism allocates at least one item, is at least $2/3$ so that both functions have enough space to grow to 1
\end{itemize}
The central point of this work is that these necessary conditions (which we call \emph{slice conditions}) are also sufficient. This intuition naturally extends to more items: the price for a bundle of $r$ items is determined by the slice condition that the $r$-dimensional volume in which the mechanism sells at least one item of the bundle is exactly equal to $r/(m+1)$.

Using this intuition, we define here the \emph{Straight-Jacket Auction (SJA)}. This selling mechanism is deterministic and symmetric; as such, it is defined by a payment vector $p^{(m)}=(p_1^{(m)},\ldots,p_m^{(m)})$; $p_r^{(m)}$ is the price offered by the mechanism to the bidder for every subset of $r$ items, $r\in [m]$. We will drop the superscript when there is no confusion about the number of available items. The utility of the bidder is then given by $u(x)=\max_{J\subseteq [m]} \left( \sum_{j\in J} x_j - p_{|J|} \right).$

The prices are defined by the slice conditions. For a subset of items $J \subseteq [m]$, let $\Pr(J, \vecc x_{-J})$ be the probability that at least one item in $J$ is sold when the remaining items have values $\vecc x_{-J}$. The $r$-th dimensional slice condition is that for every $J$ with $|J|=r$ and every $\vecc x_{-J}$: $\Pr(J, \vecc x_{-J}) \geq |J|/(m+1)$. The SJA is the deterministic mechanism which satisfies the slice conditions for all dimensions \emph{as tightly as possible} (hence its name), in the following sense: determine the prices $p_1, p_2, \ldots, p_m$ in this order so that, having fixed the previous ones, select $p_r$ as large as possible to satisfy all $r$-dimensional slice conditions. In particular, this guarantees that the $m$-dimensional slice is tight, or equivalently, that the probability that at least one item is sold is $m/(m+1)$.

\begin{mdframed}
\begin{definition}[Straight-Jacket Auction (SJA)]
\label{def:SJAmechanism}
  SJA for $m$ items is the deterministic symmetric selling mechanism whose
  prices $p_1^{(m)}$, \ldots, $p_m^{(m)}$, where $p_r^{(m)}$ is the
  price of selling a bundle of size $r$, are determined as follows:
  for each $r\in[m]$, having fixed $p_1^{(m)}$, \ldots, $p_{r-1}^{(m)}$, price
  $p_r^{(m)}$ is selected to satisfy
  \begin{equation}
    \label{eq:sja}
    \probabilityext{\vecc x\sim {\uniformunit}^m}{\bigland_{J\subseteq[r]}\sum_{j\in J} x_j
  <p_{|J|}^{(m)}}=1-r\cdot k,
  \end{equation}
  where $k=\frac{1}{m+1}$.  In words, $p_r^{(m)}$ is selected so that the
  probability of selling no item when $r$ values are drawn from the
  uniform probability distribution (and the remaining values of the
  $m-r$ items are set to 0) is equal to $1-r \cdot k$.
  We will refer to constraints~\eqref{eq:sja} as \emph{slice conditions}. 
\end{definition}
\end{mdframed}
If we take the complement of the above probability, an equivalent definition would be to ask for the probability of selling at least one of items $[r]$, when all other bids for items $[r+1...m]$ are fixed to zero, to be $rk$. That is, if for any dimension $m$ and positive $\alpha_1,\alpha_2,\dots, \alpha_m$ we define
  \begin{equation}
    \label{eq:disjunctionspace}
    V(\alpha_1,\ldots,\alpha_m)\equiv
    \sset{\vecc x\in I^m \fwh{\biglor_{J\subseteq[m]} \sum_{j\in J}  x_j \geq \alpha_{\card{J}}
   }},
  \end{equation}
the volume of the $r$-dimensional body $V(p_1^{(m)},\dots,p_r^{(m)})$, let's
denote it by $v(p_1^{(m)},\dots,p_r^{(m)})$, must be $r k$ (for all $r\in[m]$).
Notice also that it is not immediate that SJA is in general well-defined for any
dimension $m$:  there should exist prices $p_r^{(m)}$ that satisfy~\eqref
{eq:sja}.

The specific value on the right-hand side of~\eqref{eq:sja} depends on the parameter $k$, which, in turn, depends on the total number of items $m$; the exact dependence arises from the specific values of the primal and dual program. It is, however, useful in providing a unifying approach to carry out the discussion and analysis for an arbitrary (albeit small, $k\leq \frac{1}{m+1}$) parameter $k$ and to plug in the specific value $k=1/(m+1)$ only when this is absolutely necessary.

The main technical result of this work is showing that the SJA mechanism is
optimal for $m\leq 6$:
\begin{mdframed}
\begin{theorem}\label{th:mainresultSJA6}
The Straight-Jacket Auction is a revenue optimal mechanism for selling up to 6
goods to a single additive buyer having uniformly i.i.d.\ valuations over $
[0,1]$.
\end{theorem}
\end{mdframed}

Our proof of this theorem relies significantly on the geometry of these mechanisms. We conjecture that the theorem holds for any number of items:

\begin{conjecturenonum}\label{th:SJAoptimalany}
The Straight-Jacket Auction is a revenue optimal mechanism for selling \emph{any} number of
goods to a single additive buyer having uniformly i.i.d.\ valuations over $
[0,1]$.
\end{conjecturenonum}

Here is how to use the slice conditions~\eqref{eq:sja} to compute the prices of SJA: The $1$-dimensional condition on a $1$-dimensional hypercube simply means that $p_1^{(m)} = 1-1/(m+1)$, because we only have condition $x_1<p_1^{(m)}$. The $2$-dimensional  condition on a $2$-dimensional boundary requires that the region $\{\vecc x\, :\, x_1+x_2< p_2 \;\;\text{and}\;\; x_1< p_1 \;\;\text{and}\;\; x_2 < p_1\}$ inside the unit square must have area equal to $1-2/(m+1)$. In other words, we want to find where to move the line $x_1+x_2=p_2$ so that the area that it cuts satisfies the slice condition (in the left part of \cref{fig:SJA2and3}, $U_{\ssets{1}}$, $U_{\ssets{2}}$, and $U_{\ssets{1,2}}$ have total volume $2/(m+1)$); this gives $p_2=2-(2+\sqrt{2})/(m+1)$. We can proceed in the same way to higher dimensions: fix some dimension $m$ and an order $r>1$. If the prices $p_1,p_2,\dots,p_r$ are such that $p_{j}-p_{j-1}$ is a nonnegative and (weakly) decreasing sequence, then
\begin{align}
v\left(p_1,\dots,p_r\right) &=\int_0^{p_{r}-p_{r-1}}v\left(p_1,\dots,p_{r-1}\right)\,dt+\int_{p_{r}-p_{r-1}}^{p_{r-1}-p_{r-2}}v\left(p_1,\dots,p_{r-2},p_{r}-t\right)\,dt\notag\\
+\dots &+\int_{p_{2}-p_{1}}^{p_{1}}v\left(p_2-t,\dots,p_{r-1}-t,p_{r}-t\right)\,dt+\int_{p_{1}}^1 1\,dt.\label{eq:recsliceSJAvol}
\end{align}
This is a recursive way to compute the expressions for the volumes $v\left(p_1,\dots,p_r\right)$. In case that the sequence $p_1,p_2,\dots,p_r$ of the prices up to order $r$ breaks the requirement to be increasing at the last step, i.e.~$p_{r}<p_{r-1}$, then simply $v\left(p_1,\dots,p_r\right)=v\left(p_1,\dots,p_{r-2},p_r,p_r\right)$  and we can still deploy the previous recursion.

An exact, analytic computation of these values for up to $r=6$ using the above recursion is given in \cref{sec:exactcompute6}, but we also list them below for quick reference. In the following we will often use the transformation
\begin{equation}\label{eq:transfpm}
p_r=r-\frac{\mu_r}{m+1}
\end{equation} 
so that prices will be determined with respect to some parameters $\mu_r$. It will be algebraically convenient to also assume $p_0=0$. 

\begin{itemize}
\item For $r\leq 4$ and \emph{any} number of items $m\geq r$:
\begin{align}
p_1 &= \frac{m}{m+1} &	p_2 &= \frac{2m-\sqrt{2}}{m+1}&	p_3 &\approx 3-\frac{7.0972}{m+1}&	p_4 &\approx 4-\frac{11.9972}{m+1} \label
{eq:prices_numeric_1_4}\\
\mu_1 &=1 &	\mu_2 &=2+\sqrt{2}&	\mu_3 &\approx 7.0972&	\mu_4 &\approx 11.9972
\notag
\end{align}

\item For $r=5,6$:
\begin{align}
p_5^{(5)} &\approx 1.9856 &	p_5 &\approx 5-\frac{18.0843}{m+1}\;\; 
(m\geq 6)&	p_6^{(6)} &\approx 2.3774 \label
{eq:prices_numeric_5_6}\\
\mu_5^{(5)} &\approx 18.0865 &	\mu_5 &\approx 18.0843\;\; (m\geq 6)&	\mu_6^{(6)}
&\approx 25.3585 \notag
\end{align}

\end{itemize}

\subsection{Optimality of SJA}
\label{sec:optimality-SJA-gather}
In this section we gather the key elements that form the backbone of our proof for the optimality of the SJA mechanism. 

\begin{definition}
  We denote by $U_J^{(m)}$ the subdomain in which SJA allocates exactly the
  bundle $J\subseteq [m]$ of items:
  \begin{equation}
    \label{eq:def-U_A}
    U_J^{(m)}\equiv\sset{ \vecc x\in I^m \fwh{\bigland_{L\subseteq[m]} \sum_{j\in J} x_j - p_{|J|}^{(m)} \geq \sum_{j\in
          L} x_j - p_{|L|}^{(m)}}}.
  \end{equation}
\end{definition}

Let $\slice{U_J^{(m)}}{-J}{\vecc t}$ denote the $|J|$-dimensional slice of
  $U_J^{(m)}$ when we fix the values of the remaining $[m]\setminus J$ items to $\vecc t$:
  \begin{equation*}
    \label{eq:def-slice}
    \slice{U_J^{(m)}}{-J}{\vecc t}  = \{ x_J \,:\, (\vecc x_J,\vecc t) \in U_J \}. 
  \end{equation*}
For example, the slices of
$U_{\{1\}}^{(m)}$ are the horizontal ($1$-dimensional) intervals; when $J=[m]$,
$U_J^{(m)}$ has only one slice: itself. \Cref{fig:SJA2and3} shows the various subdomains $U_J^{(m)}$ for $m=2,3$.

We next define the notion of deficiency of a body, which is one of the key geometric ingredients in this paper. It captures how large an $m$-dimensional body $A$ is with respect to its $(m-1)$-dimensional projections, which are denoted by $A_{[m]\setminus \ssets{j}}$ (see the beginning of the next section for a formal definition). The $k$-deficiency is the difference of the volume $\card{A}$ of the body $A$ minus the volumes $k \card{A_{[m]\setminus \ssets{j}}}$ of each $m$-dimensional prism that results when we extend a $(m-1)$-dimensional projection by height $k$. This is inspired by the deficiency notion in bipartite graphs defined by~\citet{Ore:1955fk}.  

\begin{definition}[Deficiency]\label{def:deficiency} For any $k>0$, we will call \emph{$k$-deficiency} of an $m$-dimensional body $A\subseteq \R_+^m$ the quantity
\begin{equation}
\label{eq:deficiency}
\delta_{k}(A)\equiv \card{A}-k\sum_{j=1}^m\card{A_{[m]\setminus \ssets{j}}}.
\end{equation}
\end{definition}
From now on we will sometimes drop the subscript in the deficiency notation $\delta_k$ whenever it is clear from the context what $k$ we are referring to, or if we want to make a general statement that holds for all values of parameter $k$ (see, e.g., \cref{th:deficienciesunioninters}).

\begin{definition}[SIM-bodies]
\label{def:sim-body}
  For positive $\alpha_1\leq \cdots\leq \alpha_r$, let
  \begin{equation}
    \label{eq:sim-body}
    \Lambda(\alpha_1,\ldots,\alpha_r)\equiv
    \sset{\vecc x\in \R^r_+ \fwh{\bigland_{J\subseteq[r]} \left(\sum_{j\in J}  x_j \;\;\leq \sum_{j=r-|J|+1}^{r}
    \alpha_j \right) }}.
  \end{equation}
  We call these \emph{SIM-bodies}\footnote{The naming is inspired by the
  familiar, characteristic shape of mobile phone SIM cards; see \cref
  {fig:2dimSIM} for the apparent resemblance in $2$-dimensional space.}. We will
  also use the following notation:
$$q\cdot\Lambda(\alpha_1,\ldots,\alpha_r)\equiv\Lambda(q \cdot \alpha_1,\ldots,q \cdot \alpha_r) 
$$
for any positive real $q$.
\end{definition}
It turns out that SIM-bodies (see \cref{fig:dimSIM}) are essentially the building blocks of the allocation space of SJA:
\begin{lemma}
\label{lemma:SIMallocspaceisom}
 Every nonempty slice $\slice{U_J^{(m)}}{-J}{\vecc t}$ of SJA is isomorphic to the
 SIM-body $k\cdot \Lambda(\lambda_1^{(m)},\ldots,\lambda_{|J|}^{(m)})$, where $k=1/(m+1)$. The parameters $\lambda_r^{(m)}$ depend on the payments of SJA as follows:
\begin{equation}
\label{eq:lambdasdef}
\lambda_r^{(m)}\equiv\mu_r^{(m)}-\mu_{r-1}^{(m)},
\end{equation}
where $\mu_r^{(m)}$ is defined in \eqref{eq:transfpm}.
\end{lemma}
This essentially establishes a correspondence between SJA subdomains  $U^{(m)}_{J}$ and SIM-bodies $\Lambda(\lambda_1^{(m)},\dots,\lambda_{\card{J}}^{(m)})$, for every $J\subseteq [m]$.
The main geometric property of SJA is captured by the following theorem, the proof of which appears in \cref{sec:optimality-sja}.
\begin{theorem}
\label{theorem:nopositivesubSIMs}
  For $m\leq 6$ and for the $\lambda$'s defined in \eqref{eq:transfpm}, no SIM-body $\Lambda(\lambda_1^{(m)},\ldots,\lambda_{\card{J}}^{(m)})$ corresponding to a nonempty subdomain $U_J^{(m)}$
  contains positive 1-deficiency sub-bodies.
\end{theorem}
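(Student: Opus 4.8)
By Lemma~\ref{lemma:SIMallocspaceisom} and the elementary scaling identity $\delta_1(k\cdot S)=k^{r-1}\,\delta_k(S)$ for $S\subseteq\R_+^r$ (read off from~\eqref{eq:deficiency}), the statement is equivalent to the following: for every $m\le 6$ and every $r$ that occurs as the size of a \emph{nonempty} subdomain $U_J^{(m)}$, no measurable $S\subseteq\Lambda(\lambda_1^{(m)},\dots,\lambda_r^{(m)})$ has $\delta_1(S)>0$. The plan is to argue by contradiction: suppose such a \emph{counterexample} $S$ exists; first cut it down to a rigid canonical shape, then derive a lower bound on how large it must be, and finally contradict this against the exactly known size and slice structure of the host SIM-body.

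For the normalization step I would first invoke Lemma~\ref{th:wlogsymmetric} to replace $S$ by one that is invariant under all coordinate permutations of $[r]$ and is ``saturated'' (upward closed inside the downward-closed body $\Lambda$), and then Lemma~\ref{th:pclosure} to further assume $S$ is closed under $\permuts$-closure, i.e.\ contains the convex hull of the orbit of every one of its points. What one would really like here is the stronger reduction to \emph{box-like} counterexamples $S=B\cap\Lambda$ with $B$ an axis-parallel box --- for such sets $\delta_1(B\cap\Lambda)$ collapses to a one-variable inequality, and, in the absence of the ambient constraint ``$\subseteq\Lambda$'', the Loomis--Whitney lemma (Lemma~\ref{th:loomis}) together with its generalization to box-like sets would justify it --- but this reduction seems to go through only for $r=2$, so for $r\ge 3$ one is left with the weaker $\permuts$-closed structure. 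With the counterexample in this form, Lemmas~\ref{lemma:sizeboundcounter} and~\ref{th:counterchainwidths} should supply the crucial lower bounds: a normalized counterexample must have volume, and slice-widths along a maximal chain, bounded below in terms of $k=1/(m+1)$ and the parameters $\mu_r^{(m)}$ of~\eqref{eq:transfpm}.

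The contradiction then comes from a volume/width count against $\Lambda(\lambda_1^{(m)},\dots,\lambda_r^{(m)})$ itself, whose volume and chain-slice widths are computed from the slice conditions of Definition~\ref{def:SJAmechanism} via the recursion~\eqref{eq:recsliceSJAvol} (applying the substitution $v(p_1,\dots,p_r)=v(p_1,\dots,p_{r-2},p_r,p_r)$ at the top whenever the price sequence fails to be increasing). For $m\le 4$ the price sequence is increasing and the parameters $\mu_1,\dots,\mu_m$ are the universal ones; here the size and chain-width lower bounds, together with the forced symmetric/$\permuts$-closed shape, are straightforwardly incompatible with containment in the SIM-body, so no counterexample exists. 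The cases $m=5$ and $m=6$ are the substantive ones: the top price breaks monotonicity, so SJA for $m=5$ (resp.\ $m=6$) never sells a bundle of size $4$ (resp.\ $5$) and the level $r=4$ (resp.\ $r=5$) is vacuous --- its $U_J$ is empty --- leaving only the full-dimensional bodies $\Lambda(\lambda_1^{(5)},\dots,\lambda_5^{(5)})$ and $\Lambda(\lambda_1^{(6)},\dots,\lambda_6^{(6)})$ to require new work, the remaining lower-dimensional SIM-bodies for $m=5,6$ being covered by the $m\le4$ analysis. For these two bodies the plan is to run the same count with the explicit numerical values $\mu_5^{(5)}\approx 18.0865$ and $\mu_6^{(6)}\approx 25.3585$, phrasing the recursion so that the ``missing level'' created by the degenerate price is absorbed uniformly rather than split into separate case analyses.

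I expect the main obstacle to be precisely this last count for $m=5$ and $m=6$. Unlike the cases $m\le 4$, the clean box-like reduction is unavailable, so one must carry the weaker $\permuts$-closure structure and the chain-width bounds of Lemma~\ref{th:counterchainwidths} through the multi-level recursion~\eqref{eq:recsliceSJAvol} by hand, and the numerical margin between the forced counterexample volume and the actual SIM-body volume is slim --- so the inequalities must be tracked with genuine care, which is presumably why the argument, and hence the theorem, breaks at $m=7$, where one of these margins is expected to vanish or reverse. A secondary technical point is ensuring that the normalization operations (symmetrization and $\permuts$-closure) are legitimately applied in the degenerate geometry, since each of them perturbs the volume and all $(r-1)$-dimensional shadows of $S$ simultaneously and the net change in $\delta_1$ must still be shown nonnegative --- this is where the Loomis--Whitney-type estimates underpinning Lemmas~\ref{th:wlogsymmetric} and~\ref{th:pclosure} do the work.
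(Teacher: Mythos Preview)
Your proposal correctly identifies the relevant structural lemmas (symmetrization via Lemma~\ref{th:wlogsymmetric}, p-closure via Lemma~\ref{th:pclosure}, and the size/chain-width bounds of Lemmas~\ref{lemma:sizeboundcounter} and~\ref{th:counterchainwidths}), but the overall strategy---derive a volume/width lower bound on a putative counterexample and show it exceeds what the host SIM-body can accommodate---is a genuine gap. With $k=1$, those lemmas give $|A|\ge r^r$ and $(1,2,\dots,r)\in A$, but $\Lambda(\lambda_1^{(m)},\dots,\lambda_r^{(m)})$ comfortably contains sets meeting these bounds (already for $r=2$ one has $|\Lambda|=2+2\sqrt{2}>4$ and $(1,2)\in\Lambda$), so no contradiction arises this way. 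There is no ``slim numerical margin'' to track: the count you sketch simply does not close.

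The paper's proof is structurally different. It proceeds by \emph{induction on $r$}, and the goal at each step is not a size contradiction but to show that the maximum-volume, maximum-deficiency subset $A\subseteq\Lambda$ equals $\Lambda$ itself (which has zero $1$-deficiency by Lemma~\ref{lemma:SIMpositivedef}). The key idea you are missing is a \emph{construction}: one first shows the outermost slice $\slice{A}{\{1\}}{\lambda_r}$ is nonempty by arguing that, were it empty, one could \emph{add} there a lower-dimensional SIM-body $B=\Lambda(\lambda_1^{(m')},\dots,\lambda_{r-1}^{(m')})$ of zero deficiency (taking $m'=r-1$ when $U_{[r-1]}^{(m)}$ is empty and $m'=m$ otherwise---this single switch is what treats $m=5,6$ uniformly) without creating new projections in the $j=1$ direction, contradicting maximality of $|A|$. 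That no new $j=1$ projections appear is precisely where Lemma~\ref{th:counterchainwidths} and p-closure are used, together with the numerical inequalities $\lambda_j^{(m')}\le\lambda_j^{(m)}$ and $\lambda_j^{(m)}\le j+1$ for all $j\le m\le 6$; it is the latter inequality, not any volume margin, that fails for $m\ge 7$. Once $A$ reaches the boundary, the inductive hypothesis forces the full outermost slice into $A$, and p-closure (Lemma~\ref{th:pclosure}) together with Property~\ref{prop:SIM2} of Lemma~\ref{lemma:SIMbodiesprops2} then yields $A=\Lambda$.
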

  Using this geometric property, we  prove in \cref{sec:optimality-sja} the optimality of SJA:
\begin{theorem}
\label{theorem:weak-deficiency}
If for every nonempty subdomain $U^{(m)}_{J}$ of SJA the corresponding SIM-body $\Lambda(\lambda_1^{(m)},\dots,\lambda_{\card{J}}^{(m)})$ contains no sub-bodies of positive $1$-deficiency, then SJA is optimal. 
\end{theorem}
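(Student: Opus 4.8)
The plan is to derive optimality of SJA from the Complementarity Lemma~\ref{lemma:complementaritymany}, taking the utility function of SJA itself as the primal candidate and producing a (nearly) complementary feasible dual solution, whose existence is exactly what the hypothesis buys us. The primal candidate is $u(\vecc x)=\max_{J\subseteq[m]}\bigl(\sum_{j\in J}x_j-p_{|J|}^{(m)}\bigr)$: it is convex (a maximum of affine functions), nonnegative (take $J=\emptyset$, $p_0^{(m)}=0$), and a.e.\ differentiable with $\nabla u(\vecc x)$ equal to the indicator of the allocated bundle, so every partial derivative lies in $\{0,1\}$; hence $u$ is feasible for the (relaxed) primal Program~\eqref{eq:totalrevenue} in the single-bidder uniform setting. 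Specializing the complementarity conditions of Lemma~\ref{lemma:complementaritymany} to $n=1$, $f\equiv1$, $L_j=0$, $H_j=1$, and recalling that the $s_j$ constraints have been dropped (Remark~\ref{th:weakdualityuniform}), what we must supply is a dual-feasible tuple $z_1,\dots,z_m$ for which, up to an additive $\varepsilon$ in each condition: (i)~$\sum_{j=1}^m\partial z_j/\partial x_j=m+1$ a.e.\ on the selling region $\overline U_{\emptyset}=\bigcup_{J\neq\emptyset}U_J^{(m)}$ (where $u>0$); (ii)~$z_j=0$ a.e.\ off $U_{(j)}:=\bigcup_{J\ni j}U_J^{(m)}$, the region where item $j$ is sold (where $\partial u/\partial x_j=0$); (iii)~$z_j(0,\vecc x_{-j})=0$ and $z_j(1,\vecc x_{-j})=1$, the upper equality being unproblematic since $u(1,\vecc x_{-j})\ge 1-p_1^{(m)}>0$ is bounded away from $0$. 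By the quantitative form of the lemma the primal--dual gap is then at most $(3m+1)\varepsilon$, so it suffices to build such $z_j$ for every $\varepsilon>0$ and let $\varepsilon\to0$.

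Restricting attention (without loss) to $z_j$ monotone in its own coordinate and writing $g_j:=\partial z_j/\partial x_j\ge0$, so that $z_j(\vecc x)=\int_0^{x_j}g_j(\vecc x_{-j},t)\,dt$, conditions (i)--(iii) become: $g_j$ is supported on $U_{(j)}$; $\int_0^1 g_j(\vecc x_{-j},t)\,dt=1$ for a.e.\ $\vecc x_{-j}\in I^{m-1}$; and $\sum_{j=1}^m g_j\le m+1$ a.e., with equality a.e.\ on $\overline U_{\emptyset}$. This is a \emph{balanced} transportation problem: the total demand is $\sum_{j=1}^m|I^{m-1}|=m$, while the total supply is $(m+1)\,|\overline U_{\emptyset}|=(m+1)\cdot\frac{m}{m+1}=m$, using that SJA's $m$-th slice condition (Definition~\ref{def:SJAmechanism}) is tight. (That each direction-$j$ fiber actually meets $U_{(j)}$ -- so that its unit of demand can be placed -- follows because allocations are monotone in the bidder's own value, by Theorem~\ref{thm:truthfulconvex}, and item $j$ is sold at $x_j=1$: otherwise adding $j$ to the optimal bundle would be profitable, as every marginal price is at most $p_1^{(m)}=m/(m+1)<1$.) This is precisely the ``perfect matching between the hypercube and its boundaries, taken with appropriate multiplicities'' alluded to in the outline: cells of $\overline U_{\emptyset}$ are sources of capacity $m+1$ per unit volume, fibers are sinks of demand $1$, and a source is adjacent to a sink iff the cell lies in that fiber's selling segment. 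Since Hall's theorem fails for infinite bipartite graphs, we discretize: on a fine grid of $I^m$ we form the finite bipartite graph between grid-columns and grid-cells (with demands and capacities rescaled to integers), placing an edge whenever a cell lies in its column and inside $U_{(j)}$, and invoke the defect (Ore) version of Hall's theorem~\citep[Theorem~1.1.3]{Lovasz:1986qf}. A perfect fractional transportation plan on the grid exists iff no grid-set of columns violates Hall's inequality, which in the limit reads: no sub-body $S\subseteq\overline U_{\emptyset}$ has $(m+1)|S|>\sum_{j=1}^m|(S\cap U_{(j)})_{[m]\setminus\ssets{j}}|$. For $S$ upwards closed -- to which the worst case may be reduced, since $\overline U_{\emptyset}$ is upwards closed and item $j$ is sold at $x_j=1$ -- the restriction ``$\cap\,U_{(j)}$'' is vacuous and the condition becomes exactly $\delta_{1/(m+1)}(S)\le0$, i.e.\ $S$ has no positive $\tfrac{1}{m+1}$-deficiency; refining the grid (the discretization error being controlled by the fact that each $U_J^{(m)}$ is a polytope, Definition~\ref{def:SJAmechanism}) then yields the desired $\varepsilon$-feasible duals as $\varepsilon\to0$.

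It remains to deduce ``no positive $\tfrac1{m+1}$-deficiency sub-body of $\overline U_{\emptyset}$'' from the hypothesis, which speaks of the SIM-bodies $\Lambda(\lambda_1^{(m)},\dots,\lambda_{|J|}^{(m)})$. Equivalently -- and this is the cleaner way to organize the construction -- one builds the plan fiber-portion by fiber-portion: prescribe how much of $z_j$'s unit of growth is spent inside each region $U_J^{(m)}$ that its fiber passes through, reducing the global problem to, within each $U_J^{(m)}$ and each of its slices $\slice{U_J^{(m)}}{-J}{\vecc t}$, a lower-dimensional balanced transportation problem on that slice. By Lemma~\ref{lemma:SIMallocspaceisom} every such slice is isomorphic to $k\cdot\Lambda(\lambda_1^{(m)},\dots,\lambda_{|J|}^{(m)})$ with $k=\tfrac1{m+1}$, and the mass-balance across the fiber portions is guaranteed precisely by the slice conditions that define the SJA prices (Definition~\ref{def:SJAmechanism}); the slice-level Hall condition is then that no sub-body of $k\cdot\Lambda(\lambda_1^{(m)},\dots,\lambda_{|J|}^{(m)})$ has positive $k$-deficiency, which, via the scaling identity $\delta_k(kB)>0\iff\delta_1(B)>0$, is exactly the hypothesis that no sub-body of $\Lambda(\lambda_1^{(m)},\dots,\lambda_{|J|}^{(m)})$ has positive $1$-deficiency. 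With the hypothesis in force, every slice-level transportation problem is feasible, the pieces glue (over a discretized grid, then in the limit) into an $\varepsilon$-feasible dual, and Lemma~\ref{lemma:complementaritymany} yields optimality of SJA. Plugging Theorem~\ref{theorem:nopositivesubSIMs} in for $m\le6$ then gives Theorem~\ref{th:mainresultSJA6}.

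I expect the main obstacle to be precisely the infinite $\to$ finite $\to$ infinite passage: Hall's theorem is genuinely false for infinite bipartite graphs and the continuity one would need to take limits directly is hard to establish, so everything has to be routed through a discretization, with three things to verify -- that the discrete deficiency inequalities follow from the continuous hypothesis up to $O(\varepsilon)$ terms (controlled by the polyhedral structure of the allocation regions), that the finite transportation plans glue consistently as the grid is refined, and that the resulting approximately-feasible $z_j$ still satisfy the $\varepsilon$-complementarity conditions of Lemma~\ref{lemma:complementaritymany} with error vanishing in the limit. A secondary, more bookkeeping-heavy difficulty is the geometric reduction: disentangling the coupling of the $g_j$'s across the regions $U_J^{(m)}$ (equivalently, reducing the deficiency test on $\overline U_{\emptyset}$ to the one on each SIM-body), justifying the reduction to upwards-closed sets, and tracking the scaling factor $k=\tfrac1{m+1}$ through the isomorphisms of Lemma~\ref{lemma:SIMallocspaceisom} so that the hypothesis about the $1$-deficiency of the SIM-bodies plugs in cleanly.
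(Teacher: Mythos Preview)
Your plan is correct and follows essentially the same route as the paper: discretize, recast dual feasibility as a bipartite matching/coloring problem, invoke Hall's condition (which the no-positive-deficiency hypothesis supplies slice-by-slice via Lemma~\ref{lemma:SIMallocspaceisom} and the scaling identity), and conclude via the approximate complementarity of Lemma~\ref{lemma:complementaritymany} in the limit. The paper's execution differs only in technical packaging---it builds explicit ``inflated'' boundary regions $B_{J,j}$ plus thin extra strips $B^*_{J,j}$ to absorb the discretization error at the non-rectilinear boundary of $\overline U_\emptyset$, proves separately that the interior and the boundary each admit complete matchings (the latter using the zero-deficiency of slices, Lemma~\ref{lemma:zeroslicedefSJA}, in its complemented form), and then combines them via Cantor--Bernstein---but these are precisely the devices that fill in the ``infinite $\to$ finite $\to$ infinite passage'' you flagged as the main obstacle.
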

Notice that the last theorem applies to any number of items, but the proof of optimality is restricted to 6 items by \cref{theorem:nopositivesubSIMs}.  The rest of the paper focuses in formalizing these notions and proving the above theorems.

\section{Bodies and Deficiencies}
\label{sec:geometric}
In this section we develop the geometric theory that captures the critical structural properties of SJA mechanisms and  use this to prove our main result, \cref{th:mainresultSJA6}, that shows their optimality. First we will need to establish some notation and formally define some notions.

For any positive integer $m$, an $m$-dimensional \emph{body} $A$ is any compact subset of the nonnegative orthant $A\subseteq\R_+^m$. We will denote its volume simply by $\card{A}\equiv\mu(A)$ (where $\mu$ is the standard $m$-dimensional Lebesgue measure). For any index set $J\subseteq[m]$, the \emph{projection} of $A$ with respect to the $J$ coordinates is defined as
$$
A_{[m]\setminus J}\equiv\sset{\vecc x_{-J}\fwh{\vecc x\in A}}
$$
and is the remaining body of $A$ if we ``delete'' coordinates $J$.
For any $r\in[m]$, index set $J\subseteq[m]$ with $\card{J}=m-r$ and $\vecc t\in\R_+^{m-r}$ we define the  \emph{slice} of $A$ above the point $\vecc t$ with respect to coordinates $J$ as
$$
\slice{A}{J}{\vecc t}\equiv\sset{\vecc x_{-J}\fwh{\vecc x\in A\;\land\; \vecc x_{J}=\vecc t}}.
$$
It is the remaining of the body $A$ if we fix a vector $\vecc t$ at coordinates $J$. 
The operations of projecting and slicing bodies commute with each other, that is, $\slice{A_{[m]\setminus I}}{J}{\vecc t}=(\slice{A}{J}{\vecc t})_{[m]\setminus I}$ for all disjoint sets of indices $I,J\subseteq[m]$ and $\card{J}$-dimensional vector $\vecc t$.

For any set of points $S\subseteq \R_+^m$ we denote their convex hull by $\chull(S)$ and for any vector $\vecc x$ we will denote by $\permuts(\vecc x)$ the set of all permutations of $\vecc x$. We will say that a body $A$ is \emph{downwards closed} if, for any point of $A$, all points below it are also contained in A: 
$\vecc y\in A$ for all $\vecc y\in\R_+^m$ with $\vecc y\leq\vecc x\in A$. Body $A$ will be called \emph{symmetric} if it contains all permutations of its elements: $\permuts(\vecc x)\subseteq A$ for all $\vecc x\in A$. If an $m$-dimensional body $A$ is symmetric then one can define its \emph{width} to be the length of its projection towards any axis: $w(A)\equiv \cards{A_{\ssets{j}}}$ for any $j\in[m]$. In a similar way, if $A\subseteq S$ we will say that $A$ is \emph{upwards closed} (with respect to $S$) if, for any $\vecc x\in A$, we have $\vecc y\in A$ for any $\vecc x\leq \vecc y\in S$. For any set of points $S\subseteq\R_+^m$, its \emph{downwards closure} is defined to be all points below it: 
$
\dclosure(S)=\sset{\vecc x\in \R_+^m\fwh{\exists \vecc y\in S:\vecc x\leq \vecc y}}.
$
Finally, we describe a property that will play a key role in the following:
\begin{definition}[p-closure]
\label{def:p-closure}
We will say that a body $A$ is \emph{p-closed} if it contains the convex hull of the permutations of any of its elements. Formally:
$
\chull(\permuts(\vecc x))\subseteq A
$
for all $\vecc x\in A$.
\end{definition}
Notice that any p-closed body must be symmetric (but not necessarily convex) and that any convex symmetric body is p-closed.

A useful, trivial to prove property of the deficiency function (see \cref{def:deficiency}) is that it is \emph{supermodular}:
\begin{lemma}\label{th:deficienciesunioninters}
For any bodies $A_1,A_2$,
$$
\delta(A_1\union A_2)+\delta(A_1\inters A_2)\geq \delta(A_1)+\delta(A_2).
$$
\end{lemma}

The next lemma tells us that ``leaving gaps'' between the points of bodies and the orthant's faces can only reduce the deficiency.
\begin{lemma}\label{th:eliasdown}
For any bodies $A,B$ such that $B\subseteq A$ and $A$ is downwards closed, there exists a downwards closed sub-body $\tilde B\subseteq A$ such that $\delta(\tilde B)\geq\delta(B)$. 
\end{lemma}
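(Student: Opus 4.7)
The plan is to construct $\tilde B$ from $B$ by a sequence of one-dimensional \emph{Steiner-type compressions}, one per coordinate. For a body $C\subseteq\R_+^m$ and a direction $j\in[m]$, define
$$
C^{(j)}\equiv\sset{(\vecc x_{-j},y)\in\R_+^m\fwh{\vecc x_{-j}\in C_{[m]\setminus\ssets{j}}\text{ and }0\leq y\leq\card{\slicesmall{C}{[m]\setminus\ssets{j}}{\vecc x_{-j}}}}},
$$
i.e., slide each $1$-dimensional $j$-fiber of $C$ down to the face $x_j=0$, preserving its length. Setting $B_0\equiv B$ and $B_j\equiv B_{j-1}^{(j)}$ for $j=1,\ldots,m$, I would then take $\tilde B\equiv B_m$.

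The core claim is that each compression step enjoys the following four properties: \textbf{(i)} $\card{C^{(j)}}=\card{C}$ and $\card{C^{(j)}_{[m]\setminus\ssets{j}}}=\card{C_{[m]\setminus\ssets{j}}}$, by Cavalieri's principle; \textbf{(ii)} $\card{C^{(j)}_{[m]\setminus\ssets{i}}}\leq\card{C_{[m]\setminus\ssets{i}}}$ for every $i\neq j$; \textbf{(iii)} $C^{(j)}$ is downwards closed along direction $j$ by construction; and \textbf{(iv)} whenever $C\subseteq A$ with $A$ downwards closed, $C^{(j)}\subseteq A$---indeed, each $(\vecc x_{-j},y)\in C^{(j)}$ sits below the point $(\vecc x_{-j},\max\slicesmall{C}{[m]\setminus\ssets{j}}{\vecc x_{-j}})$, which belongs to $C\subseteq A$ (the maximum is attained because $C$ is compact). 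Combining (i) and (ii) gives $\delta(C^{(j)})\geq\delta(C)$.

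The one non-trivial ingredient is (ii), a projection-shrinking inequality in the flavour of classical rearrangement arguments. I would prove it by fixing an $\vecc w\in\R_+^{m-2}$ indexing the coordinates outside $\ssets{i,j}$ and passing to the two-dimensional slice $C_\vecc w$ in the $(y_i,y_j)$-plane. Write $\ell(\vecc w,y_i)$ for the $1$-D measure of the $j$-fiber of $C$ above $(\vecc w,y_i)$; by definition the $y_j$-projection of $C^{(j)}_{\vecc w}$ is the interval $[0,\sup_{y_i}\ell(\vecc w,y_i)]$, of length $\sup_{y_i}\ell(\vecc w,y_i)$. On the other hand, for \emph{each} individual $y_i$ the $j$-fiber in question is a subset of the $y_j$-projection of $C_\vecc w$, so $\ell(\vecc w,y_i)\leq\card{\pi_j(C_\vecc w)}$; taking the supremum in $y_i$ and integrating in $\vecc w$ via Fubini yields $\card{C^{(j)}_{[m]\setminus\ssets{i}}}\leq\card{C_{[m]\setminus\ssets{i}}}$. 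This is the step I expect to require the most care, in particular to keep the measurability bookkeeping clean.

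Finally, the compressions commute well enough to be iterated: if $C$ is already downwards closed along some direction $i\neq j$, then the fiber-length $\ell(\vecc x_{-j})$ is monotone non-increasing in $y_i$, and the subgraph $C^{(j)}$ inherits downwards closure along direction $i$ as well. Iterating compressions along directions $1,2,\ldots,m$ therefore produces a body $\tilde B=B_m$ which is downwards closed along every axis; descending coordinate-by-coordinate shows that this is the same as being downwards closed in $\R_+^m$. Combining with (iv) gives $\tilde B\subseteq A$, and chaining the deficiency inequalities yields $\delta(\tilde B)\geq\delta(B_{m-1})\geq\cdots\geq\delta(B_0)=\delta(B)$, as required.
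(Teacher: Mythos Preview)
Your proof is correct and follows essentially the same idea as the paper's: push $B$ down toward the coordinate hyperplanes via Steiner-type compressions that preserve volume while (weakly) shrinking each axis-projection, and observe that each step stays inside the downwards-closed $A$. The paper organises the same construction by induction on the ambient dimension---applying the $(m{-}1)$-dimensional map to every slice perpendicular to a fixed coordinate, then repeating for a second coordinate---and packages it as the stronger functorial statement of Lemma~\ref{lemma:pushingdown} (with $A\subseteq B\Rightarrow\chi(A)\subseteq\chi(B)$ and $\chi$ fixing downwards-closed sets), whereas you iterate the one-dimensional compression directly over all $m$ axes; the underlying mechanism and the verification of the key projection inequality are the same.
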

Instead of proving this lemma, we provide a stronger construction, given by the following \cref{lemma:pushingdown}. 

\begin{lemma}\label{lemma:pushingdown}
Let $\mathcal A_m$ be the set of $m$-dimensional bodies and $\mathcal
K_m\subseteq\mathcal A_m$ be the set of downwards closed ones. There is a mapping $\chi\,:\, {\cal A}_m \rightarrow {\cal K}_m$ such that for any 
$A,B\in\mathcal{A}_m$:
\begin{enumerate}
\item \label{item:chi1} $\card{\chi(A)}=\card{A}$ and, for every $J\subseteq [m]$, $\card{\chi(A)_J} \leq
 \card{A_J}$.
\item \label{item:chi2} $\chi(A) \cup \chi(B) \subseteq \chi(A \cup B)$. Equivalently,
 $A\subseteq B$ implies $\chi(A)\subseteq \chi(B)$.
\item \label{item:chi3}  
if $A\in\mathcal K_m$ then $\chi(A)=A$.
\end{enumerate}
\end{lemma}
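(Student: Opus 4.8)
\textbf{Proof proposal for Lemma~\ref{lemma:pushingdown}.}

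The plan is to define $\chi$ by a sequential \emph{compression} (Steiner-type symmetrization) procedure, one coordinate direction at a time, which is the classical device for turning an arbitrary body into a downwards closed one while preserving volume. For a fixed axis $j\in[m]$ and a body $A$, define the \emph{$j$-compression} $C_j(A)$ by replacing, for each line $\ell$ parallel to the $x_j$-axis, the one-dimensional set $A\cap\ell$ by the interval $[0,\lambda_j(\ell)]\cap\ell$, where $\lambda_j(\ell)$ is the one-dimensional Lebesgue measure of $A\cap\ell$. (Measure-theoretic care is needed so that $C_j(A)$ is measurable and compact up to a null set; one works with a representative of $A$ for which the slice-measure function is measurable, which is guaranteed by Fubini, and then takes the closure — the resulting boundary has measure zero and does not affect any of the claims.) I would then set $\chi(A) = C_m(C_{m-1}(\cdots C_1(A)\cdots))$, compressing in a fixed order $1,2,\dots,m$.

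The verification of the three properties proceeds as follows. Property~\ref{item:chi1}: each $C_j$ preserves $m$-volume by Fubini, since it preserves the measure of every axis-parallel slice; so $\card{\chi(A)}=\card{A}$. For the projection inequality $\card{\chi(A)_J}\le\card{A_J}$, the key observation is that a single compression $C_j$ never increases any projection: if $j\in J$ then $C_j$ does not change the projection onto the complement of $J$ at all (it only moves mass along the $x_j$ axis, which is collapsed in the projection), and if $j\notin J$ then $(C_j(A))_{[m]\setminus J}$ is contained in $(A)_{[m]\setminus J}$ because every point of the compressed slice sits below a point of the original slice — more carefully, for $j\notin J$ the projection of $C_j(A)$ onto the $[m]\setminus J$ coordinates equals the projection of $A$ (compression along an axis that survives the projection can only shrink the line-segments, not their footprint), so in fact $\card{(C_j(A))_{[m]\setminus J}}\le\card{(A)_{[m]\setminus J}}$. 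Iterating over the $m$ compressions and using that later compressions preserve downward-closedness already achieved in earlier directions (so the process does stabilize to a downwards closed set — this is the one point requiring the standard lemma that $C_i$ preserves $i'$-compressedness for $i'<i$) gives property~\ref{item:chi1} and also shows $\chi(A)\in\mathcal K_m$, so the codomain is correct. Property~\ref{item:chi3} is immediate: if $A$ is already downwards closed then each slice $A\cap\ell$ is already an initial interval $[0,\lambda_j(\ell)]$, so every $C_j$ fixes $A$, hence $\chi(A)=A$. Property~\ref{item:chi2} follows from monotonicity of each $C_j$ under inclusion: if $A\subseteq B$ then on every line $\ell$ we have $\card{A\cap\ell}\le\card{B\cap\ell}$, hence $[0,\card{A\cap\ell}]\subseteq[0,\card{B\cap\ell}]$, so $C_j(A)\subseteq C_j(B)$; composing, $\chi(A)\subseteq\chi(B)$, which is the stated equivalent form, and the union formulation $\chi(A)\cup\chi(B)\subseteq\chi(A\cup B)$ is then immediate from $A,B\subseteq A\cup B$.

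The main obstacle I expect is not any single inequality above but the interaction between them across the sequence of compressions: namely proving that after compressing in directions $1,\dots,j$ the body remains compressed in directions $1,\dots,j-1$, so that $\chi(A)$ is genuinely downwards closed rather than merely ``last-coordinate compressed.'' This is the content of the classical ``compressions commute well enough'' lemma (as in the Kruskal--Katona / Bollob\'as--Leader circle of ideas), and in the continuous setting it requires a careful slice-by-slice argument: fixing the coordinates other than $i$ and $j$, one reduces to a two-dimensional statement about a planar set that is already an initial segment in the $x_i$ direction, and checks that compressing in $x_j$ keeps it an initial segment in $x_i$ — which holds because on the plane the $x_i$-lengths are a monotone (non-increasing from the origin) function of $x_j$ after the first compression is irrelevant, and rearranging along $x_j$ preserves that monotonicity. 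Handling the measurability and the null-set boundary adjustments cleanly, so that ``compact body'' is preserved and all volumes and projection-volumes are the honest Lebesgue measures, is the remaining piece of bookkeeping. Everything else is a routine application of Fubini and monotonicity.
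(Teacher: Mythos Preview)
Your approach is correct and essentially the same as the paper's: both build $\chi$ by composing coordinate-wise ``push-down'' (Steiner-type) compressions and rely on the key fact that compressing along one axis preserves downward closedness in the others. The paper organizes this slightly differently---by induction on $m$, applying the $(m{-}1)$-dimensional $\chi$ to each slice perpendicular to a fixed axis $j$, and then composing with the same operation for a second axis $j'\neq j$---whereas you compose the $m$ one-dimensional compressions $C_m\circ\cdots\circ C_1$ directly; the underlying mechanism and the ``compressions commute well enough'' lemma you identify are the same in both.

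One genuine slip to fix: in your Property~\ref{item:chi1} argument, the case analysis for projections is garbled. When the compression axis $j$ is among the \emph{retained} coordinates of the projection, it is \emph{not} true that $(C_j(A))_{[m]\setminus J}\subseteq A_{[m]\setminus J}$, nor that the projections are equal---a single $j$-fibre of $C_j(A)$ is an interval $[0,\lambda]$, and its small-$x_j$ part need not lie under any point of $A$. What \emph{is} true is only the measure inequality: for each fixed value of the other retained coordinates, the $j$-fibre of the projected $C_j(A)$ has length $\sup \lambda_j(\cdot)$, which is at most the measure of the union of the corresponding $j$-fibres of $A$; integrating gives $\card{(C_j(A))_{[m]\setminus J}}\le\card{A_{[m]\setminus J}}$. (The paper handles this case by invoking the monotonicity Property~\ref{item:chi2} on the union of slices, which is a clean alternative.) With that correction your argument goes through.
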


It is straightforward to see how \cref{lemma:pushingdown} implies \cref{th:eliasdown}, by taking $\tilde B=\chi(B)$. Then, $\tilde B$ has the same volume as $B$ and (weakly) smaller projections (\hyperref[item:chi1]{Property~\ref*{item:chi1}}). This directly implies that $\delta(\tilde B)\geq\delta(B)$. It is also a subset of $A$ (by \hyperref[item:chi2]{Property~\ref*{item:chi2}}): $\tilde B=\chi(B)\subseteq\chi(A)=A$; the last equality follows from the fact that $A$ is already downwards closed and thus invariant under $\chi$ (\hyperref[item:chi3]{Property~\ref*{item:chi3}}). 

\begin{proof}[Proof of \cref{lemma:pushingdown}]
The lemma is proved by induction on $m$. For $m=1$ it is trivial: $\chi(A)$ is the interval starting at 0 with length equal to $|A|$.

Fix now a coordinate $j\in[m]$ and consider the $(m-1)$-dimensional slices $\slice{A}{\ssets{j}}{t}$ of $A$, ranging over $t$. Apply the lemma recursively (that is, use function~$\chi$ by the induction hypothesis from the previous dimension) to each such slice to obtain a body $A'$. Let $\chi'$ be this map from ${\cal A}_m$ to ${\cal A}_m$, i.e. $\chi'(A)=A'$. Notice that $A'$ may not be downwards closed, but we argue that $\chi'$ satisfies all three properties.

Indeed, for \hyperref[item:chi1]{Property~\ref*{item:chi1}}, we have two cases: If $j\in J$ then, by using \hyperref[item:chi1]{Property~\ref*{item:chi1}}, we get
$$\card{A_J'}=\int_t  \card{\slice{A_J'}{\ssets{j}}{t}}=\int_t \card{\left(\slice{A'}{\ssets{j}}{t}\right)_J} =\int_t \card{\left(\chi'(\slice{A}{\ssets{j}}{t})\right)_J} \leq\int_t \card{\left(\slice{A}{\ssets{j}}{t}\right)_J} =\int_t  \card{\slice{A_J}{\ssets{j}}{t}}=\card{A_J}.$$
In particular, the above holds with equality when $J=[m]$.
Otherwise, if $j\not\in J$, we have
\begin{equation*}
\card{A_J'}
= \card{ \left(\bigcup_t\slice{A'}{\ssets{j}}{t}\right)_J}
= \card{ \left(\bigcup_t\chi'\left(\slice{A}{\ssets{j}}{t}\right)\right)_J}
\leq \card{ \left(\chi'\left(\bigcup_t\slice{A}{\ssets{j}}{t}\right)\right)_J}
\leq \card{ \left(\bigcup_t\slice{A}{\ssets{j}}{t}\right)_J}
=\card{A_J},
\end{equation*}
the first inequality holding due to \hyperref[item:chi2]{Property~\ref*{item:chi2}}
and the second one due to the inequality at
\hyperref[item:chi1]{Property~\ref*{item:chi1}}.

Property 2 is also satisfied because, if $A\subseteq B$, then for every
$t$ it is $\slice{A}{\ssets{j}}{t}\subseteq \slice{B}{\ssets{j}}{t}$, and thus by induction $\chi'(\slice{A}{\ssets{j}}{t}) \subseteq 
\chi'(\slice{B}{\ssets{j}}{t})$, therefore
$$
\vecc x\in \chi'(A)\then \vecc x_{-j}\in \chi'(\slice{A}{\ssets{j}}{x_j}) \then \vecc x_{-j}\in \chi'(\slice{B}{\ssets{j}}{x_j})\then \vecc x\in \chi'(B).
$$

\hyperref[item:chi3]{Property~\ref*{item:chi3}} is satisfied, since if $A$ is already downwards closed, its slices are also downwards closed and, by induction, they will remain
unaffected by $\chi'$.

If $A$ is downwards closed with respect to some coordinate $i\in [m]$, then $\chi'(A)$ will remain closed downwards with respect to $i$: It is obvious by induction that $\chi'$ preserves downwards closure for every coordinate $i\neq j$. For coordinate $i=j$, it suffices to notice that downwards closure of $A$ is equivalent to $\slice{A}{\ssets{j}}{t}\subseteq \slice{A}{\ssets{j}}{t'}$ for all $t\geq t'$. Since $\chi'$ satisfies \hyperref[item:chi2]{Property~\ref*{item:chi2}}, the same holds for their images: $\chi'(\slice{A}{\ssets{j}}{t})\subseteq \chi'(\slice{A}{\ssets{j}}{t'})$.

Map $\chi'$ is not the desired map because if $A$ is not already downwards closed with respect to $j$, the result may not be downwards closed. However, we can select another coordinate $j'\neq j$ to create another map $\chi''$ similar to $\chi'$. Since $\chi''$ will satisfy all properties and preserve the downwards closure of coordinate $j'$, we conclude that $\chi=\chi''\circ\chi'$ has all the desired properties.
\end{proof}

The supermodularity of deficiency functions (\cref{th:deficienciesunioninters}) immediately implies that if bodies $A_1,A_2\subseteq S$ are of maximum deficiency (within $S$), then both their union and intersection are also of maximum deficiency. Based on this, the following can be shown:

\begin{lemma}\label{th:wlogsymmetric}
For any downwards closed and symmetric body $A$, there is a maximum volume sub-body of $A$ of maximum deficiency, which is also downwards closed and symmetric.
\end{lemma}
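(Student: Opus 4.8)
The plan is to start with an arbitrary maximum-deficiency sub-body $B\subseteq A$ of maximum volume (among those of maximum deficiency) and symmetrize it. First I would observe that, since $A$ is symmetric, for any $B\subseteq A$ and any permutation $\pi$, the permuted body $\pi(B)$ is again contained in $A$ and satisfies $\delta(\pi(B))=\delta(B)$, because volume and the sum of $(m-1)$-dimensional projections are both permutation-invariant. So if $B$ has maximum deficiency then so does every $\pi(B)$. Now form $B^\ast\equiv\bigcup_{\pi\in\permuts(\mathrm{id})}\pi(B)$, the union over all $m!$ coordinate permutations; this is clearly symmetric and still a subset of $A$.

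The key step is to show $\delta(B^\ast)$ is still maximum. I would do this by induction on the number of permutations being unioned in, using supermodularity of $\delta$ (Lemma~\ref{th:deficienciesunioninters}): if $B_1$ and $B_2$ are both of maximum deficiency within $A$, then $\delta(B_1\cup B_2)+\delta(B_1\cap B_2)\ge\delta(B_1)+\delta(B_2)=2\delta_{\max}$, and since $\delta(B_1\cap B_2)\le\delta_{\max}$ and $\delta(B_1\cup B_2)\le\delta_{\max}$ (both being sub-bodies of $A$), we must have $\delta(B_1\cup B_2)=\delta_{\max}$. Iterating this over the $m!$ permuted copies of $B$ shows $\delta(B^\ast)=\delta_{\max}$. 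Moreover $B^\ast\supseteq B$, so $\card{B^\ast}\ge\card{B}$; since $B$ was chosen of maximum volume among maximum-deficiency bodies, $\card{B^\ast}=\card{B}$, i.e.\ $B^\ast$ is also of maximum volume.

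It remains to restore downwards closure without destroying symmetry or deficiency. For this I would apply the map $\chi$ from Lemma~\ref{lemma:pushingdown} to $B^\ast$ and set $\tilde B\equiv\chi(B^\ast)$. By Property~\ref{item:chi1}, $\card{\tilde B}=\card{B^\ast}$ and every projection weakly shrinks, hence $\delta(\tilde B)\ge\delta(B^\ast)=\delta_{\max}$, so in fact $\delta(\tilde B)=\delta_{\max}$ and $\tilde B$ has maximum volume too. By Property~\ref{item:chi2}, $\tilde B=\chi(B^\ast)\subseteq\chi(A)=A$, the last equality by Property~\ref{item:chi3} since $A$ is downwards closed. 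The remaining subtlety — and the step I expect to be the real obstacle — is that $\chi$ as constructed in Lemma~\ref{lemma:pushingdown} proceeds coordinate by coordinate and is not manifestly symmetric: pushing down along coordinate $j$ and then $j'$ need not commute with arbitrary permutations. To handle this I would either (i) argue that the construction in Lemma~\ref{lemma:pushingdown}, when all coordinates are treated in a fixed cyclic order, maps symmetric bodies to symmetric bodies — because each one-coordinate push-down, applied slice-wise, is itself permutation-equivariant in the remaining $m-1$ coordinates, and a final re-symmetrization step $\tilde B\mapsto\bigcup_\pi\pi(\tilde B)$ followed by another appeal to supermodularity and Lemma~\ref{th:eliasdown} recovers both properties simultaneously — or (ii) alternate symmetrization and push-down finitely many times and show the process stabilizes, since volume is bounded and each push-down weakly decreases each projection while symmetrization weakly increases volume. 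Verifying that this alternation terminates (or that a single symmetric $\chi$ exists) is where the care is needed; everything else is a routine combination of supermodularity and the properties of $\chi$ already established.
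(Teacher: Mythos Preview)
Your ingredients are right, but you have the two operations in the wrong order, and this is precisely what creates the ``obstacle'' you flag. The paper's proof first applies Lemma~\ref{th:eliasdown} to replace $B$ by a downwards closed $\tilde B\subseteq A$ of the same (maximum) deficiency, and \emph{then} symmetrizes by taking $\hat B=\bigcup_{\pi}\pi(\tilde B)$. The point is that a union of downwards closed bodies is trivially downwards closed, so after symmetrizing a downwards closed body you are simply done: $\hat B$ is symmetric, downwards closed, contained in $A$ (by symmetry of $A$), and of maximum deficiency by the same supermodularity argument you give.

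By symmetrizing first and pushing down second, you are left worrying whether $\chi$ preserves symmetry, which it need not. Your fix~(i) does in fact work, but for a reason you do not state: once you push $B^\ast$ down to $\chi(B^\ast)$ and then re-symmetrize via $\bigcup_\pi \pi(\chi(B^\ast))$, the result is automatically downwards closed (union of downwards closed sets), so no further appeal to Lemma~\ref{th:eliasdown} and no alternation is needed. In other words, your initial symmetrization is wasted effort; the entire argument collapses to ``push down, then symmetrize,'' which is exactly the paper's proof. Your option~(ii) is unnecessary.
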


\begin{proof}
Let $B\subseteq A$ be of maximum deficiency. Then, by \cref{th:eliasdown}  there exists a downwards closed  $\tilde B\subseteq A$ such that $\delta(\tilde B)\geq \delta (B)$, and, due to the maximum deficiency of $B$, it must be that $\delta(\tilde B)= \delta (B)$. Now, let $\tilde B_1,\tilde B_2,\dots,\tilde B_{m!}$ be all possible permutations of the body $\tilde B$ (within the $m$-dimensional space) and take their union  $\hat B=\bigunion_{i=1}^{m!}\tilde B_i$. This new body $\hat B$ is clearly symmetric.
Also, because of the symmetry of $A$, all $\tilde B_i$ remain within $A$, so $\hat B\subseteq A$.

Now notice that all $B_i$'s have $\delta(\tilde B_i)=\delta(\tilde B)$, so they also have maximum deficiency within $A$. Remember that the deficiency function is supermodular (\cref{th:deficienciesunioninters}), so the union of maximum deficiency sets must also be of maximum deficiency. Thus, $\hat B$ is indeed of maximum deficiency.
Finally, it is not difficult to see that union preserves downwards closure and also, trivially, $\cards{\hat B}\geq\cards{\tilde B}$.
\end{proof}

The next lemma describes how global maximum deficiency implies also a kind of local one:
\begin{lemma}\label{th:counterslices}
Let $A\subseteq S$ be a maximum deficiency body (within $S$). Then, \emph{every} slice of $A$ must have nonnegative deficiency and must not contain subsets with higher deficiency.
\end{lemma}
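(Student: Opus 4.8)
The plan is to prove the contrapositive: if some slice of $A$ had negative deficiency, or contained a sub-body of strictly higher deficiency than the slice itself, we could modify $A$ to increase its total deficiency, contradicting maximality. First I would fix a coordinate direction, say $j$, and think of $A$ as the union of its $(m-1)$-dimensional slices $\slice{A}{\ssets{j}}{t}$ stacked over $t$. The key observation is that the deficiency of $A$ decomposes nicely along this slicing: by Fubini, $\card{A}=\int_t\card{\slice{A}{\ssets{j}}{t}}\,dt$, and similarly $\card{A_{[m]\setminus\ssets{i}}}=\int_t\card{(\slice{A}{\ssets{j}}{t})_{[m]\setminus\ssets{i}}}\,dt$ for every $i\neq j$ (using that projecting and slicing commute), while the one remaining term $\card{A_{[m]\setminus\ssets{j}}}$ is the volume of the projection $\bigunion_t\slice{A}{\ssets{j}}{t}$.

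Next I would handle the two failure modes. Suppose first that for a positive-measure set of values $t$ the slice $\slice{A}{\ssets{j}}{t}$ has negative $k$-deficiency (as an $(m-1)$-dimensional body). Then replacing each such slice by the empty set removes nonnegative volume $k\int\sum_{i\neq j}\card{(\slice{A}{\ssets{j}}{t})_{[m]\setminus\ssets{i}}}\,dt - \int\card{\slice{A}{\ssets{j}}{t}}\,dt>0$ worth of deficiency from the slice-terms, and can only shrink the leftover projection $A_{[m]\setminus\ssets{j}}$, so the total deficiency $\delta_k(A)$ strictly increases — contradiction. Second, suppose some slice $\slice{A}{\ssets{j}}{t_0}=:C$ contains a sub-body $C'\subsetneq C$ with $\delta_k(C')>\delta_k(C)$; I would argue that on a small interval of heights around $t_0$ we may replace $C$ by $C'$ (invoking whatever continuity/measurability of slices is available, or simply doing it on the single measure-zero fibre after first reducing to the ``bad set has positive measure'' case), which again strictly increases the slice contribution to $\delta_k(A)$ while not enlarging the projection onto $x_j=0$ — contradiction. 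Combining, both properties must hold for a.e.\ slice, and since $j$ was arbitrary, for every slice in every direction.

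The main obstacle I anticipate is the second part: passing from ``a single slice admits a higher-deficiency sub-body'' to a genuine modification of the full-dimensional body $A$ that strictly increases $\delta_k(A)$. A single slice has measure zero, so naively swapping it changes nothing; one needs either a measurable selection of improved sub-bodies over a positive-measure set of heights, or an argument that the ``offending'' heights form a positive-measure set in the first place. The cleanest route is probably to prove the two claims in the order given — first establish nonnegativity of \emph{every} slice's deficiency using the positive-measure version of the argument, and then for the ``no richer sub-body'' claim observe that if $\slice{A}{\ssets{j}}{t_0}$ had a strictly better sub-body, then by continuity of the map $t\mapsto\slice{A}{\ssets{j}}{t}$ (which holds up to measure zero when $A$ is compact, and in our application $A$ is in fact downwards closed and symmetric) nearby slices would too, giving the needed positive-measure set. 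I would keep this step at the level of "one checks that slices vary continuously enough, so the improvement can be spread over an interval of heights'' rather than grinding through the measure-theoretic details, since the supermodularity machinery (Lemma~\ref{th:deficienciesunioninters}) and the downward-pushing map $\chi$ (Lemma~\ref{lemma:pushingdown}) already do the structural heavy lifting elsewhere in this section.
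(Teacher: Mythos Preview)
Your approach is essentially the paper's: remove (an $\varepsilon$-neighborhood of) the offending slice, or replace it by its higher-deficiency sub-body, and check that $\delta_k(A)$ strictly increases, contradicting maximality. The paper handles the measure-zero issue you flag with exactly the $\varepsilon$-neighborhood trick you propose (and, like you, relegates the continuity details to a remark); note only that the lemma concerns slices $\slice{A}{J}{\vecc t}$ of arbitrary codimension, not just $|J|=1$, but your argument goes through verbatim in that generality.
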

\begin{proof}
To get to a contradiction, suppose that there exists such a slice $B=\slice{A}{J}{\vecc t}$ of $A$, such that $\delta(B)<0$. Then, let's remove the entire slice $B$ above $\vecc t$ from body $A$, to get a new body $A'$. This $(m-1)$-dimensional slice though is of measure $0$  in the larger $m$-dimensional space, so what we should really do is to remove an $\varepsilon$-neighborhood of $B$ (around $\vecc t$) within $A$, of ``parallel'' slices. This neighborhood has a volume of positive measure and is arbitrarily close to the slice\footnote{For ease of presentation, in the following we will use that procedure without making explicit mention to the underlying technicalities.}. This section removed from the body had the property of having volume strictly less than $k$ times its projections with respect to the coordinates not in $J$, i.e., the ``active'' coordinates in $B$ (because we are working close to $B$ for which $\delta(B)<0$). Regarding the other remaining projections with respect to the coordinates in $J$, by removing points they cannot possibly be increased. Since volumes have positive sign effect at the expression~\eqref{eq:deficiency} of the deficiency function, and projections have negative, we can deduce that the resulting body has strictly higher deficiency than $A$, which contradicts the maximum deficiency of $A$ within $S$.

The proof for subsets of the slice with higher deficiency is similar: replace the entire slice with its subset of higher deficiency, and the total deficiency must increase.
\end{proof}

As a consequence of \cref{th:counterslices} we get the following properties of maximum deficiency sub-bodies, which imply that these bodies must be ``large enough'' (\cref{lemma:sizeboundcounter,th:counterchainwidths}) and also demonstrate some kind of ``symmetric convexity'' (p-closure \cref{th:pclosure}, \cref{def:p-closure}). But first we will need an inequality that brings together volumes and projections of bodies, due to \citet{Loomis:1949ul}. An easy proof of this can be found in~\citep{Allan:1983pd}. 

\begin{lemma}[Loomis--Whitney]\label{th:loomis} For any $m$-dimensional body $A$,
$$\card{A}^{m-1}\leq \prod_{j=1}^m\card{A_{[m]\setminus \ssets{j}}}.$$
\end{lemma}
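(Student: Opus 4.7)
The plan is a standard induction on the dimension $m$ together with a slicing argument that reduces everything to the generalized Hölder inequality. The base case $m=2$ is immediate: any $2$-dimensional body $A$ is contained in $A_{\ssets{1}}\times A_{\ssets{2}}$, so $\cards{A}\leq \cards{A_{\ssets{1}}}\cdot \cards{A_{\ssets{2}}}$, which is exactly the claim for $m=2$.

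For the inductive step I would slice $A$ perpendicular to one axis, say $x_m$, writing $A_t\equiv\sset{\vecc y\in\R_+^{m-1}\fwh{(\vecc y,t)\in A}}$, so that $\cards{A}=\int \cards{A_t}\,dt$ by Fubini. Two bounds on each slice drive the argument. The inductive hypothesis gives
\[
\cards{A_t}^{m-2}\leq \prod_{j=1}^{m-1}\cards{(A_t)_{[m-1]\setminus\ssets{j}}},
\]
while the trivial inclusion $A_t\subseteq A_{[m]\setminus\ssets{m}}$ gives $\cards{A_t}\leq \cards{A_{[m]\setminus\ssets{m}}}$. Multiplying these two bounds and taking $(m-1)$-th roots yields
\[
\cards{A_t}\leq \cards{A_{[m]\setminus\ssets{m}}}^{\frac{1}{m-1}}\prod_{j=1}^{m-1}\cards{(A_t)_{[m-1]\setminus\ssets{j}}}^{\frac{1}{m-1}}.
\]

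The crucial step is then to integrate this inequality over $t$ and invoke the generalized Hölder inequality with $m-1$ equal exponents, each equal to $m-1$ (so that $\sum_{j=1}^{m-1}\tfrac{1}{m-1}=1$), obtaining
\[
\int\prod_{j=1}^{m-1}\cards{(A_t)_{[m-1]\setminus\ssets{j}}}^{\frac{1}{m-1}}\,dt\;\leq\;\prod_{j=1}^{m-1}\left(\int\cards{(A_t)_{[m-1]\setminus\ssets{j}}}\,dt\right)^{\frac{1}{m-1}}.
\]
To finish I would use the fact, already noted in Section~\ref{sec:geometric}, that projection out of the $j$-th coordinate and slicing at the $m$-th coordinate commute: the body $(A_t)_{[m-1]\setminus\ssets{j}}$ is exactly the slice at $x_m=t$ of the projection $A_{[m]\setminus\ssets{j}}$, so each inner $t$-integral evaluates to $\cards{A_{[m]\setminus\ssets{j}}}$.

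Putting everything together one arrives at $\cards{A}\leq \prod_{j=1}^{m}\cards{A_{[m]\setminus\ssets{j}}}^{1/(m-1)}$, and raising both sides to the $(m-1)$-th power produces the Loomis--Whitney inequality. The only genuine subtlety is identifying the correct choice of exponents in Hölder's inequality; the remainder of the argument is essentially bookkeeping of slices and projections, relying on the commutativity between slicing and projecting highlighted in Section~\ref{sec:geometric}.
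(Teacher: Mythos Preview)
Your argument is correct and is in fact the classical proof of the Loomis--Whitney inequality via induction on $m$ and the generalized H\"older inequality; the identification of exponents, the commutation of slicing and projecting, and the Fubini step are all handled correctly. Note, however, that the paper does not supply its own proof of Lemma~\ref{th:loomis}: it simply quotes the inequality and points the reader to the original paper of Loomis and Whitney and to an elementary proof in the reference by Allan. So there is nothing to compare against in the paper itself; your write-up effectively fills in what the paper left to the literature, and the approach you chose is precisely the standard one found in those references.
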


\begin{lemma}\label{lemma:sizeboundcounter}
Let $A\neq\emptyset$ be an $m$-dimensional body with nonnegative  $k$-deficiency. Then
$$
\card{A}\geq (km)^m.
$$
As a consequence, if $A$ is also symmetric and downwards closed, its width must be at least
$$
w(A)\geq km.
$$
\end{lemma}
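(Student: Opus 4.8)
The plan is to prove the volume bound $\card{A}\geq (km)^m$ directly from the deficiency hypothesis combined with the Loomis--Whitney inequality (Lemma~\ref{th:loomis}), and then deduce the width bound for symmetric downwards closed bodies as a short corollary. First I would unpack the hypothesis $\delta_k(A)\geq 0$, which by Definition~\ref{def:deficiency} says
\[
\card{A}\geq k\sum_{j=1}^m\card{A_{[m]\setminus\ssets{j}}}.
\]
Applying the AM--GM inequality to the right-hand side gives
\[
\card{A}\geq k\sum_{j=1}^m\card{A_{[m]\setminus\ssets{j}}}\geq km\left(\prod_{j=1}^m\card{A_{[m]\setminus\ssets{j}}}\right)^{1/m}.
\]
Now I would invoke Loomis--Whitney, $\prod_{j=1}^m\card{A_{[m]\setminus\ssets{j}}}\geq\card{A}^{m-1}$, to obtain $\card{A}\geq km\,\card{A}^{(m-1)/m}$. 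Assuming $\card{A}>0$ (which holds since $A\neq\emptyset$ and $\delta_k(A)\geq0$ forces $A$ to have positive measure, as otherwise all projections vanish too and $A$ is empty up to measure zero), I can divide by $\card{A}^{(m-1)/m}$ to get $\card{A}^{1/m}\geq km$, i.e.\ $\card{A}\geq(km)^m$, as desired.

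For the second part, suppose additionally that $A$ is symmetric and downwards closed. By symmetry, the width $w(A)=\card{A_{\ssets{j}}}$ is the same for every axis $j\in[m]$; call it $w$. Because $A$ is downwards closed, it is contained in the box $[0,w]^m$: indeed, every coordinate of every point of $A$ lies in $[0,w]$ (the projection onto any single axis has length $w$ and, being downwards closed, is an interval starting at $0$), so $A\subseteq\prod_{j=1}^m A_{\ssets{j}}\subseteq[0,w]^m$. Hence $\card{A}\leq w^m$. Combining with the volume bound just proved, $w^m\geq\card{A}\geq(km)^m$, and taking $m$-th roots yields $w(A)\geq km$.

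I do not anticipate a serious obstacle here; the only points requiring a little care are (i) justifying that $\card{A}>0$ before dividing, which follows from $A$ being nonempty and compact together with the deficiency inequality (a set of measure zero has all lower-dimensional projections of measure zero, contradicting $\delta_k(A)\geq0$ unless $A$ is itself negligible, in which case we may as well treat it as empty), and (ii) the inclusion $A\subseteq[0,w]^m$, which genuinely uses downwards closure—without it a symmetric body could be long and thin in a diagonal direction and have large width relative to its volume. The core of the argument is simply the one-line interplay between the arithmetic-mean lower bound coming from deficiency and the geometric-mean-type upper bound on the product of projections coming from Loomis--Whitney.
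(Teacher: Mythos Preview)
Your proof is correct and essentially identical to the paper's: both combine the deficiency hypothesis with AM--GM and Loomis--Whitney (Lemma~\ref{th:loomis}) to get $|A|\geq km\,|A|^{(m-1)/m}$, then deduce the width bound from $A\subseteq[0,w(A)]^m$. One small remark: your parenthetical justification that $|A|>0$ is slightly garbled (a measure-zero set can certainly have positive-measure projections), but the paper is equally informal here, simply writing ``since $|A|\neq 0$''---in context the intended reading is that $A$ has positive measure.
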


\begin{proof}
Since $\delta_k(A)\geq 0$,
we know that
$
\card{A}\geq k\sum_{j=1}^m\card{A_{[m]\setminus\ssets{j}}},
$
or equivalently
\begin{equation}\label{eq:lb1}
\sum_{j=1}^m\card{A_{[m]\setminus \ssets{j}}}\leq\frac{\card A}{k}.
\end{equation}
Also, by the Loomis--Whitney inequality (\cref{th:loomis}),
$
\card{A}^{m-1}\leq \prod_{j=1}^m\card{A_{[m]\setminus \ssets{j}}}
$;
so, by using the arithmetic-geometric means inequality we can derive that
$\card{A}^{m-1}\leq\left(\frac{1}{m}\sum_{j=1}^m\card{A_{[m]\setminus \ssets{j}}}\right)^m$,
or equivalently
\begin{equation}\label{eq:lb2}
\sum_{j=1}^m\card{A_{[m]\setminus \ssets{j}}}\geq m\card{A}^{\frac{m-1}{m}}.
\end{equation}

Combining~\eqref{eq:lb1} and~\eqref{eq:lb2} we get
$
m\card{A}^{\frac{m-1}{m}}\leq\frac{\card A}{k},
$
which completes the proof of the lemma (since $\card{A}\neq 0$). The inequality involving the body's width follows immediately from the observation that every symmetric and downwards closed body $A$ is included in the $m$-dimensional hypercube with edge length $w(A)$.
\end{proof}

\begin{lemma}\label{th:counterchainwidths}
If $A$ is a nonempty, symmetric, downwards closed body with nonnegative
$k$-deficiency then it must contain the point $(k, 2 k,\dots, m k)$. More generally, it must contain the point $(k, 2 k,\dots,(m-1)k,w(A))$.
\end{lemma}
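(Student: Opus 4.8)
The plan is to prove the claim by exploiting the "local" maximum-deficiency property of slices established in Lemma~\ref{th:counterslices}, combined with an induction on the dimension $m$ and the lower bound on width from Lemma~\ref{lemma:sizeboundcounter}. Since $A$ is symmetric and downwards closed with $\delta(A)\geq 0$, by the width bound we already know $w(A)\geq km$; so it is natural to try to show inductively that the chain of points $(k),(k,2k),\dots,(k,2k,\dots,(m-1)k)$ all lie in $A$, and then "top off" the last coordinate with the full width $w(A)$.

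First I would set up the induction. For $m=1$, a nonempty downwards closed body $A\subseteq\R_+$ with $\delta_k(A)=\card A-k\geq 0$ is an interval $[0,w(A)]$ with $w(A)\geq k$, so it contains the point $(k)$ (and in fact $(w(A))$), establishing the base case. For the inductive step, consider $A\subseteq\R_+^m$ as above. The key observation is that since $A$ is a maximum-deficiency body within itself (trivially) — or more carefully, since we may invoke Lemma~\ref{th:wlogsymmetric} to pass to a maximum-deficiency symmetric downwards closed sub-body without loss — every slice of $A$ must have nonnegative deficiency by Lemma~\ref{th:counterslices}. In particular, fix the last coordinate $x_m$ at various small values $t$; the slice $\slice{A}{\{m\}}{t}$ is an $(m-1)$-dimensional symmetric downwards closed body, and for $t$ small enough it is nonempty (since $A$ contains a neighborhood of the origin by downwards closure and nonemptiness). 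I would argue that this slice has nonnegative $k'$-deficiency for an appropriate rescaled parameter — or more directly, that one can choose $t=k$ so that the slice $\slice{A}{\{m\}}{k}$ is a nonempty symmetric downwards closed body with nonnegative $k$-deficiency, hence by the inductive hypothesis contains $(k,2k,\dots,(m-1)k)$; combined with $x_m=k$ this yields $(k,2k,\dots,mk)\in A$. For the more general statement, one replaces the slice at height $k$ by the slice at the top — but since $A$ need not be "full" near its top coordinate, I would instead slice at $x_m$ equal to something close to $w(A)$ and use that the projection onto the first $m-1$ coordinates of that slice still contains $(k,2k,\dots,(m-1)k)$, using symmetry to permute the role of the "large" coordinate.

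The main obstacle I anticipate is justifying that the relevant slice genuinely has nonnegative deficiency of the right scale: Lemma~\ref{th:counterslices} guarantees slices of a \emph{maximum}-deficiency body have nonnegative deficiency, but the hypothesis here is only $\delta(A)\geq 0$, not maximality. The fix is to note that the property "contains the point $(k,2k,\dots,mk)$" is monotone under passing to sub-bodies that still have nonnegative deficiency, so it suffices to prove it for some maximum-deficiency sub-body of the downwards closure of $A$; alternatively, one directly shows that if some slice had negative deficiency, excising an $\varepsilon$-neighborhood of it would strictly increase $\delta(A)$ while preserving symmetry and downwards closure, and iterating this cannot go on forever since $\delta$ is bounded (by the volume of the bounding hypercube of side $w(A)$, which is controlled). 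A second, more minor technical point is handling the degenerate cases where slices vanish — but downwards closure plus the width bound $w(A)\geq km$ ensure all the slices used sit well inside the support of $A$, so these cases do not arise. Assembling these pieces, the induction closes and both assertions follow.
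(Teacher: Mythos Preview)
Your overall strategy---pass to a maximum-deficiency symmetric downwards-closed sub-body $\hat A$ via Lemma~\ref{th:wlogsymmetric}, then slice and invoke the width bound of Lemma~\ref{lemma:sizeboundcounter}---is the same as the paper's. But the inductive step as written contains a concrete error that breaks the argument.

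You slice at $x_m = k$ and apply the $(m{-}1)$-dimensional hypothesis to conclude that the slice contains $(k, 2k, \dots, (m-1)k)$. Combining this with $x_m = k$ yields the point $(k, 2k, \dots, (m-1)k, k)$, \emph{not} $(k, 2k, \dots, mk)$: no coordinate of the point you produce equals $mk$. Since the target point dominates the one you obtain (and is not a permutation of it), neither downwards closure nor symmetry rescues the conclusion.

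The fix---and this is exactly what the paper does---is to slice at the \emph{large} end first. From $w(\hat A)\geq mk$ (Lemma~\ref{lemma:sizeboundcounter}) the slice of $\hat A$ at $x_1 = mk$ is nonempty, symmetric, downwards closed, and of nonnegative $k$-deficiency by Lemma~\ref{th:counterslices}; hence its width is at least $(m-1)k$, so one may slice again at $x_2 = (m-1)k$, and so on. This builds the chain $(mk,\vecc 0_{m-1}),\ (mk,(m-1)k,\vecc 0_{m-2}),\ \dots,\ (mk,(m-1)k,\dots,k)$ inside $\hat A\subseteq A$. Phrased as an induction on $m$ in your framework, the correct move is to slice at $x_m = mk$ (or at $x_m$ near $w(A)$ for the more general statement), not at $x_m=k$; then the $(m-1)$-dimensional inductive hypothesis supplies the remaining coordinates directly. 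Your paragraph on the ``more general'' version already gestures at slicing near the top, so the mistake is localized to the first statement.

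One further wrinkle worth flagging: once you pass from $A$ to $\hat A$ the width may shrink, so for the version involving $w(A)$ you cannot simply slice $\hat A$ at height $w(A)$. The paper's proof is terse on this point too; in practice the lemma is only invoked (in Theorem~\ref{theorem:nopositivesubSIMs}) for bodies that are already of maximum deficiency, where $A=\hat A$ and the issue disappears.
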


\begin{proof}
We will recursively utilize \cref{th:counterslices,lemma:sizeboundcounter} to show that points $$(mk,\vecc 0_{m-1}),(mk,(m-1)k, \vecc 0_{m-2}),  \dots, (mk,(m-1)k,\dots,k)$$ belong to $\hat A$, where $\hat A$ is a symmetric, downwards closed sub-body of $A$ of maximum deficiency (see \cref{th:wlogsymmetric}).  By \cref{lemma:sizeboundcounter} it must be that that $w(\hat A)\geq mk$, thus $(mk,\vecc 0_{m-1})\in \hat A$ by downwards closure. For the next dimension, consider the slice $\slice{\hat A}{\ssets{j}}{mk}$ (for some $j\in[m]$). It is $(m-1)$-dimensional, of nonnegative deficiency by \cref{th:counterslices}, and so it must have width at least $(m-1)k$ (\cref{lemma:sizeboundcounter}). That means that point $(mk,(m-1)k,\vecc 0_{m-2})$ must be in $\hat A$. We can continue like this all the way down to single-dimensional lines.
\end{proof}

\begin{lemma}[p-closure]\label{th:pclosure}
Let $A\subseteq S$ be a maximum volume sub-body of $S$ of maximum deficiency and
let $S$ be p-closed and downwards closed. Then \emph{every} slice of $A$ 
(including $A$ itself) must be p-closed (see \cref{def:p-closure}).
\end{lemma}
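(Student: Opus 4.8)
The plan is to prove the p-closure property slice by slice, reducing everything to the statement for $A$ itself, and then to reduce the statement for $A$ to a supermodularity/symmetrization argument analogous to the one used in Lemma~\ref{th:wlogsymmetric}. First I would observe that, by Lemma~\ref{th:counterslices}, every slice $\slice{A}{J}{\vecc t}$ of a maximum-deficiency body is itself a maximum-deficiency body within the corresponding slice $\slice{S}{J}{\vecc t}$ of $S$; and since $S$ is p-closed, each such slice of $S$ is again p-closed (fixing some coordinates of a symmetric-convex-hull-closed set to a constant yields a set closed under convex hulls of permutations of the free coordinates, because the relevant permutations fix those coordinates). Hence it suffices to prove the claim for $A$ itself: if $A\subseteq S$ is a maximum-volume maximum-deficiency sub-body and $S$ is p-closed, then $A$ is p-closed.

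For the statement about $A$ itself, I would argue by contradiction: suppose there is a point $\vecc x\in A$ with $\chull(\permuts(\vecc x))\not\subseteq A$. Since $S$ is p-closed, $\chull(\permuts(\vecc x))\subseteq S$, so the ``missing'' points are available inside $S$. The idea is to enlarge $A$ by adding (a small neighborhood of) the convex hull $\chull(\permuts(\vecc x))$ and show the deficiency does not decrease, contradicting either maximality of deficiency or, if deficiency stays equal, maximality of volume. The key geometric fact I need is that adjoining $\chull(\permuts(\vecc x))$ to a symmetric body does not increase any of the projections $\card{A_{[m]\setminus\ssets{j}}}$ more than it increases the volume $\card{A}$ — more precisely, since all the added points are permutations-hull of a single $\vecc x$, each projection onto a coordinate hyperplane of the added region is isomorphic across all $j$ (by symmetry of $A$), and one can bound the projected measure of $\chull(\permuts(\vecc x))$ against its volume using the Loomis--Whitney inequality (Lemma~\ref{th:loomis}) together with the fact that a $k$-deficiency-$0$ body must already be fat (Lemma~\ref{lemma:sizeboundcounter}), so locally adding more volume ``pays for itself.'' Concretely, I expect the argument to run: take the symmetrization $\hat B$ of $A\cup N_\varepsilon(\chull(\permuts(\vecc x)))$, use supermodularity of $\delta$ (Lemma~\ref{th:deficienciesunioninters}) exactly as in the proof of Lemma~\ref{th:wlogsymmetric} to see $\hat B$ still has maximum deficiency, and then conclude that $\hat B$ has strictly larger volume than $A$, contradicting the maximality of $\card{A}$.

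The main obstacle I anticipate is controlling the projections of the newly added region $\chull(\permuts(\vecc x))$ relative to the body $A$ already present — i.e. making rigorous the claim that ``adding the permutation hull does not hurt deficiency.'' The subtlety is that $\chull(\permuts(\vecc x))$ may project badly in an absolute sense, but what matters is its \emph{marginal} contribution to the deficiency of $A$, namely $\delta(A\cup C)-\delta(A)$ where $C$ is an $\varepsilon$-thickening of the hull; this equals the added volume minus $k$ times the added projected volumes. Here I would use that $A$ is downwards closed and symmetric with width $w(A)\geq km$ (Lemmas~\ref{th:wlogsymmetric} and~\ref{th:counterchainwidths}–\ref{lemma:sizeboundcounter}), so that $A$ already ``covers'' the projections that $C$ would contribute — any point of $\permuts(\vecc x)$ has coordinates bounded by the width, and by downwards closure and Lemma~\ref{th:counterchainwidths} a large region of each coordinate hyperplane is already in the projection of $A$; the genuinely new projected mass is therefore small, of lower order than the new volume. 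Carrying this estimate through cleanly — possibly by instead working with the equivalent formulation via the chain of slices and inducting on dimension, adding one coordinate of the hull at a time — is the delicate part; the supermodularity and symmetrization steps are then routine repetitions of the Lemma~\ref{th:wlogsymmetric} argument.
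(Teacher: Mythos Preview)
Your reduction to showing that $A$ itself is p-closed is fine (and in fact simpler than you make it: if $A$ is p-closed then every slice is automatically p-closed, since permuting only the free coordinates of $(\vecc x,\vecc t)$ is a special case of permuting all $m$ coordinates---no appeal to Lemma~\ref{th:counterslices} is needed). The genuine gap is in the projection control for your enlargement step. You want to adjoin a thickening $C$ of $\chull(\permuts(\vecc x))$ to $A$ and argue that $\delta(A\cup C)\geq\delta(A)$, i.e.\ that the new volume pays for the new projected mass. But Loomis--Whitney bounds $|C|^{m-1}$ \emph{above} by $\prod_j|C_{[m]\setminus\{j\}}|$, which is the wrong direction: it says precisely that thin sets can have projections much larger than their volume. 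The width bounds of Lemmas~\ref{lemma:sizeboundcounter}--\ref{th:counterchainwidths} only tell you $A$ is fat near the origin, whereas $\chull(\permuts(\vecc x))$ sits far from the axes, so there is no reason its projections are already covered by $A$. And supermodularity gives $\delta(A\cup C)+\delta(A\cap C)\geq\delta(A)+\delta(C)$, which only helps if you can bound $\delta(A\cap C)\leq\delta(C)$---something you have no handle on.

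The paper's proof is by induction on the dimension $r$ of the vectors being permuted, and the modification it performs is different from yours: for a fixed $\vecc z\in\chull(\permuts(\vecc x))$ it compares the complementary slices $A_{\vecc x}=\slice{A}{[r+1]}{\vecc x}$ and $A_{\vecc z}=\slice{A}{[r+1]}{\vecc z}$ and replaces the latter by $A_{\vecc x}\cup A_{\vecc z}$. The combinatorial fact that makes the projection bookkeeping close is this: if $x_1\leq\cdots\leq x_{r+1}$ and $\vecc z\in\chull(\permuts(\vecc x))$, then for every $j$ the vector $\vecc z_{-j}$ is coordinatewise dominated by some element of $\chull(\permuts(\vecc x_{-1}))$, i.e.\ dropping any coordinate of $\vecc z$ lands you below the hull of the permutations of $\vecc x$ with its \emph{smallest} coordinate removed. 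This is what lets the induction hypothesis (on $r$-dimensional vectors) and downwards closure certify that the new projections were already present in $A$. You allude to an inductive approach at the very end, but this dominance lemma is the heart of the argument and is absent from your proposal.
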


\begin{proof}
Without loss (by \cref{th:wlogsymmetric}) $A$ can be assumed to be symmetric and downwards closed.
We need to prove that, for any $r\in [m]$ ($r$ is the dimension of the slice) and any $r$-dimensional vector $\vecc x$ and $\vecc z\in\chull(\permuts(\vecc x))$ in the convex hull of its permutations, 
$$
\text{for all}\;\;\vecc t:\qquad(\vecc x,\vecc t)\in A\quad\then\quad (\vecc z,\vecc t)\in A.
$$

The proof is by induction on $r$. For the base case of $r=1$, it is $\chull(\permuts(\vecc x))=\ssets{\vecc x}$ so the proposition follows trivially.
For the induction step, assume the proposition is true for some $r\leq m-1$ and we will prove it for $r+1$. So, take $(r+1)$-dimensional vectors $\vecc x$ and $\vecc z$ such that $\vecc z\in\chull(\permuts(\vecc x))$ and fix some $\vecc t\in\R_+^{m-r-1}$. To complete the proof we need to show that the slice of $A$ above $\vecc x$, with respect to the first $r+1$ coordinates, is included within the one above $\vecc z$, i.e.~$\slice{A}{[r+1]}{\vecc x}\subseteq \slice{A}{[r+1]}{\vecc z}$. For simplicity, let's abuse notation for the remaining of this proof and just use $A_{\vecc x}$ and $A_{\vecc z}$ for these slices.

So, to arrive at a contradiction, let's assume that, $A_\vecc x\setminus A_\vecc z\neq\emptyset$. First notice that since $A_{\vecc x}\inters A_{\vecc z}\subseteq A_{\vecc x}$ and $A_\vecc x$ is a slice of a maximum deficiency body, by \cref{th:counterslices} it must be that $\delta(A_{\vecc x}\inters A_{\vecc z})\leq \delta(A_{\vecc x})$. So, by the supermodularity of deficiencies (\cref{th:deficienciesunioninters}) we get that
$$
\delta(A_{\vecc x}\union A_{\vecc z})\geq \delta(A_{\vecc z}).
$$
This means that if we replace (an $\varepsilon$-neighborhood around $\vecc z$ of) slice $A_{\vecc z}$ by its superset $A_{\vecc x}\union A_{\vecc z}$ and we can also show that no new projections are created with respect to the first $r+1$ coordinates, then the overall deficiency of the body would not decrease and its volume would increase strictly (since we have assumed that $A_\vecc x\setminus A_\vecc z\neq\emptyset$), which is a contradiction to the maximum deficiency of $A$ within $S$. Notice a subtle point here: How do we know that this extension can fit within $S$ above point $\vecc z$? It does, because we have assumed $S$ to be p-closed and the new elements added are convex combinations of permutations of elements already known to be in $S$. The remainder of the proof is dedicated to proving that this extension indeed does not create new projections with respect to the first $r+1$ coordinates. 

Without loss, due to symmetry, we can take $x_1\leq x_2\leq\dots\leq x_{r+1}$. We argue that, if we remove any  one of the coordinates of the vector $\vecc z$, it can be dominated by a convex combination of permutations of the vector $\vecc x_{-1}$ (i.e., the vector $\vecc x$ if we remove its \emph{smallest} coordinate). To see that, remember that $\vecc z$ is at the convex hull of the permutations of $\vecc x$, so there exist nonnegative real parameters $\ssets{\xi_\pi}$ such that 
$$
\vecc z=\sum_{\vecc\pi\in\permuts(\vecc x)}\xi_\pi\vecc \pi\quad\text{and}\quad \sum_{\vecc\pi\in\permuts(\vecc x)}\xi_{\vecc \pi}=1.
$$
But that means that 
\begin{equation}\label{eq:pclosure3}
\vecc z_{-j}=\sum_{\vecc\pi\in\permuts(\vecc x)}\xi_{\vecc\pi}\vecc \pi_{-j}
\end{equation}
for any coordinate $j$.

Let's define a transformation $\phi$ over all vectors $\sset{\vecc \pi_{-j}\fwh{\pi\in\permuts(\vecc x)\;\;\text{and}\;\;j\in[r+1]}}$  such that $\phi(\vecc \pi_{-j})=\vecc \pi_{-j}$ if the $j$-th coordinate removed from $\vecc \pi$ to get $\vecc \pi_{-j}$ was $x_1$, and otherwise $\phi(\vecc \pi_{-j})$ is the $r$-dimensional vector that we get if we replace $x_1$ in $\vecc \pi_{-j}$ by the coordinate $\vecc \pi_j$ that was removed. It follows that for all  $j$

$$
\vecc \pi_{-j}\leq \phi(\vecc \pi_{-j})\quad \text{and}\quad \phi(\vecc \pi_{-j})\in\permuts(\vecc x_{-1}),
$$
so by~\eqref{eq:pclosure3},
$$
\vecc z_{-j}\leq \sum_{\vecc\pi\in\permuts(\vecc x)}\xi_{\vecc \pi}\vecc \phi(\pi_{-j})\in\chull(\permuts(\vecc x_{-1})).
$$ 
By the induction hypothesis and downwards closure for $A$ it can be deduced that 
$$
(\vecc x_{-1},0,\vecc t)\in A\;\;\then\;\; (\vecc z_{-j},0,\vecc t)\in A\quad\text{for all}\;\;j\in[r+1].
$$
Thus in particular for every $\vecc t \in A_\vecc x$, due to symmetry of $A$, we have that 
$
\left((\vecc z_{-j},0),\vecc t\right)\in A,
$
which means that indeed every projection of $(\vecc z,\vecc t)$ with respect to a coordinate in $[r+1]$ was already included in $A$.
\end{proof}

\section{Decomposition of SJA into SIM-bodies}\label{sec:SIM}

\subsection{SIM-bodies}\label{sec:SIMbodies2}

Remember that in \cref{def:sim-body} we introduced the notion of a SIM-body $\Lambda(\alpha_1,\ldots,\alpha_r)$: for parameters  $\alpha_1\leq \cdots\leq \alpha_r$, it is the set of all vectors $\vecc x\in\R^r_+$ satisfying conditions $ \sum_{j\in J}  x_j \leq \sum_{j=r-|J|+1}^{r}\alpha_j$ for all $J\subseteq [r]$.
 
The geometry of the allocation space of the SJA mechanisms (see \cref{fig:SJA2and3}) naturally gives rise to this family of bodies. Their importance and connection with the structure of the SJA mechanisms will become evident in \cref{sec:proof-no-positive-deficiency}, where we prove \cref{lemma:SIMallocspaceisom}. The intuition behind the naming becomes obvious if one looks at \cref{fig:2dimSIM}. By the way SIM-bodies are defined, one can immediately see that they are downwards closed, symmetric, and convex polytopes. Thus, they are also p-closed. Each one of its faces corresponds to a defining hyperplane $$\sum_{j\in J}x_j=\alpha_{r+1-\card{J}}+\dots+\alpha_{r}$$ 
for some $J\subseteq [r]$ or, of course, to a side of the $r$-dimensional positive orthant $\R^m_+$.

SIM-bodies demonstrate some inherently recursive and symmetric properties, captured by the following lemma. They are made clear in \cref{fig:dimSIM}. 

\begin{lemma}
\label{lemma:SIMbodiesprops2}
For any SIM-body $\Lambda= \Lambda(\alpha_1,\ldots,\alpha_r)$:
\begin{enumerate}
\item \label{prop:SIM1} $w(\Lambda)=\alpha_r$
\item \label{prop:SIM2} $\Lambda=\dclosure(\chull(\permuts(\alpha_1,\ldots,\alpha_r)))$
\item \label{prop:SIM3} $\slice{\Lambda}{\ssets{j}}{\alpha_r}=\Lambda(\alpha_1,\ldots,\alpha_{r-1})$ for any $j\in[r]$
\item \label{prop:SIM6} $\Lambda_{[r]\setminus\ssets{j}}=\Lambda(\alpha_2,\ldots,\alpha_{r})$ for any $j\in[r]$
\item \label{prop:SIM4} $\delta_{q\cdot k}(q\cdot\Lambda)=q^r\cdot \delta_k(\Lambda)$ for any $q,k>0$
\end{enumerate}
\end{lemma}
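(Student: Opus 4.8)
The plan is to read off all five properties from the combinatorial shape of the defining half-spaces of $\Lambda=\Lambda(\alpha_1,\ldots,\alpha_r)$: the inequality indexed by a set $J$ reads $\sum_{j\in J}x_j\le\beta_{|J|}$, where $\beta_s:=\alpha_{r-s+1}+\cdots+\alpha_r$ is the sum of the $s$ largest parameters. Two trivial facts about $\beta_s$ do essentially all the work: it is nondecreasing in $s$, and for $s\le r-1$ it does not change if we drop $\alpha_1$ (the smallest parameter) from the list. I will treat the properties in the order \ref{prop:SIM6}, \ref{prop:SIM3}, \ref{prop:SIM1}, \ref{prop:SIM4}, \ref{prop:SIM2}.

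For Property~\ref{prop:SIM6} (deleting a coordinate), by symmetry it suffices to delete coordinate $r$; since $\Lambda$ is downwards closed in that coordinate, $\Lambda_{[r]\setminus\ssets{r}}=\sset{\vecc x\in\R_+^{r-1}\fwh{(\vecc x,0)\in\Lambda}}$. Setting $x_r=0$, the inequality indexed by $J\subseteq[r-1]$ is unchanged, while the one indexed by $J\cup\ssets{r}$ becomes $\sum_{j\in J}x_j\le\beta_{|J|+1}$, which is implied by the former since $\beta_{|J|+1}\ge\beta_{|J|}$; so only the $J\subseteq[r-1]$ inequalities survive, and since $\beta_s$ is the same whether computed from $(\alpha_1,\ldots,\alpha_r)$ or from $(\alpha_2,\ldots,\alpha_r)$ for $s\le r-1$, what remains is precisely the defining system of $\Lambda(\alpha_2,\ldots,\alpha_r)$. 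Property~\ref{prop:SIM3} (slicing at the top width value) is the same bookkeeping with $x_r=\alpha_r$ in place of $x_r=0$: the inequality from $J\cup\ssets{r}$ becomes $\sum_{j\in J}x_j\le\beta_{|J|+1}-\alpha_r$, and $\beta_{|J|+1}-\alpha_r$ is exactly the sum of the $|J|$ largest entries of $(\alpha_1,\ldots,\alpha_{r-1})$; one then checks that these dominate the bare $J\subseteq[r-1]$ inequalities (the two right-hand sides differ by $\alpha_r-\alpha_{r-|J|}\ge0$), leaving the defining system of $\Lambda(\alpha_1,\ldots,\alpha_{r-1})$. Property~\ref{prop:SIM1} then follows by iterating Property~\ref{prop:SIM6} down to a single coordinate, $w(\Lambda)=\card{\Lambda(\alpha_r)}=\alpha_r$ (or directly: $x_j\le\alpha_r$ is a defining inequality, and the point $\alpha_r\vecc e_j$ lies in $\Lambda$). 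Property~\ref{prop:SIM4} is a scaling identity: $q\cdot\Lambda$ is the image of $\Lambda$ under the dilation $\vecc x\mapsto q\vecc x$, so its $r$-dimensional volume is multiplied by $q^r$ and each of its $(r-1)$-dimensional projections by $q^{r-1}$; substituting into Definition~\ref{def:deficiency} and factoring out $q^r$ gives $\delta_{qk}(q\cdot\Lambda)=q^r\delta_k(\Lambda)$.

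The property with real content is Property~\ref{prop:SIM2}. The inclusion $\dclosure(\chull(\permuts(\alpha_1,\ldots,\alpha_r)))\subseteq\Lambda$ is immediate: any permutation $\vecc\pi$ of $(\alpha_1,\ldots,\alpha_r)$ satisfies every defining inequality because $\sum_{j\in J}\pi_j$ is a sum of $|J|$ of the $\alpha$'s and hence at most $\beta_{|J|}$, and $\Lambda$ is convex and downwards closed, so it contains the downwards closure of the convex hull of these permutations. For the reverse inclusion, take $\vecc x\in\Lambda$ and sort it increasingly; the inequalities for the ``top'' sets $J=\ssets{r-s+1,\ldots,r}$ say that the sum of the $s$ largest entries of $\vecc x$ is at most $\beta_s$ for every $s$, i.e.\ that $\vecc x$ is weakly majorized by $(\alpha_1,\ldots,\alpha_r)$. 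By a classical fact from majorization theory, weak majorization implies $\vecc x\le\vecc y$ for some $\vecc y$ with $\sum_j y_j=\sum_j\alpha_j$ that is (ordinarily) majorized by $(\alpha_1,\ldots,\alpha_r)$; and by Rado's description of the permutohedron through its partial-sum inequalities, such a $\vecc y$ lies in $\chull(\permuts(\alpha_1,\ldots,\alpha_r))$, whence $\vecc x\in\dclosure(\chull(\permuts(\alpha_1,\ldots,\alpha_r)))$. A self-contained alternative is an induction on $r$ using Property~\ref{prop:SIM3}: if some coordinate of $\vecc x$ equals $\alpha_r$, peel it off, apply the inductive hypothesis to the $(r-1)$-dimensional slice, and reinsert $\alpha_r$ into every permutation occurring in the resulting convex combination.

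I expect Properties~\ref{prop:SIM6}, \ref{prop:SIM3}, \ref{prop:SIM1} and \ref{prop:SIM4} to be routine once the half-space picture is set up; the genuine obstacle is the reverse inclusion in Property~\ref{prop:SIM2}. The tempting shortcut --- push $\vecc x$ upwards to a face on which some coordinate attains $\alpha_r$ and then invoke the inductive step --- does not work, since a Pareto-maximal point such as $(3/2,3/2)\in\Lambda(1,2)$ has no coordinate equal to $\alpha_r$ and can only be written as the genuine convex combination $\tfrac12(1,2)+\tfrac12(2,1)$ of two distinct permutations. Thus this step really does need the majorization/permutohedron correspondence (equivalently, Rado's inequality description of $\chull(\permuts(\cdot))$); once that is granted, the remainder of the lemma is bookkeeping.
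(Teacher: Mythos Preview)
Your proof is correct and, for Properties~\ref{prop:SIM1}, \ref{prop:SIM3}, \ref{prop:SIM6}, and \ref{prop:SIM4}, essentially identical to the paper's: both proceed by direct manipulation of the defining inequalities (or scaling, for the deficiency identity). The one place the two diverge is Property~\ref{prop:SIM2}. The paper argues via extreme points: since $\Lambda$ is a convex downwards-closed polytope, it equals the downwards closure of the convex hull of its extreme points on the top facet $\sum_j x_j=\sum_j\alpha_j$, and then asserts (with only the phrase ``by taking intersections with the other hyperplanes'') that those extreme points are exactly the permutations of $(\alpha_1,\ldots,\alpha_r)$. You instead invoke weak majorization and Rado's inequality description of the permutohedron to get the reverse inclusion. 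These are two packagings of the same combinatorial fact; your route is more explicit about what is being used, while the paper's is terser but leaves the vertex computation to the reader. Your closing remark that the naive ``push up to a coordinate equal to $\alpha_r$'' induction fails on points like $(3/2,3/2)\in\Lambda(1,2)$ is a nice diagnostic of why some genuine convexity input (Rado, or equivalently the vertex enumeration) is unavoidable here.
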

\begin{proof}
\hyperref[prop:SIM1]{Property~\ref*{prop:SIM1}} is trivial: by the definition of SIM-bodies~\eqref{eq:sim-body}, a point $(x,\vecc 0_{r-1})\in \Lambda$ if and only if $x\leq \alpha_r \land \dots \land x\leq \alpha_1+\dots+\alpha_r$, i.e., $x\leq \alpha_r$.

 For \hyperref[prop:SIM2]{Property~\ref*{prop:SIM2}}, let $E$ be the set of the
 extreme points of the  polytope $\Lambda$. It is convex, thus $\Lambda=\chull
 (E)$. But it is also downwards closed, so we can just focus on the extreme
 points $\overline E\subseteq E$ that belong to the ``full'' facet of the
 hyperplane $x_1+\dots x_r=\alpha_1+\dots+\alpha_r$, since the entire polytope
 can be recovered as the downwards closure $\Lambda=\dclosure(\chull(\overline E))$. By taking intersections with the other hyperplanes and keeping in mind that the $\alpha_j$'s are nondecreasing, we get that these extreme points in $\overline E$ are $(\alpha_1,\alpha_2,\dots,\alpha_r)$ and all its permutations. So, we can recover the entire SIM-body as $\Lambda=\dclosure(\chull(\permuts(\alpha_1 ,\dots,\alpha_r )))$.

For \hyperref[prop:SIM3]{Property~\ref*{prop:SIM3}}, notice that an $(r-1)$-dimensional vector $\vecc x$ belongs in the slice $\slice{\Lambda}{\ssets{j}}{\alpha_r}$ if and only if $(\vecc x,\alpha_r)\in \Lambda$, which by using~\eqref{eq:sim-body} is equivalent to 
$$
\bigland_{J\subseteq [r-1]} \left(\sum_{i\in J}x_i\leq \alpha_{r+1-\card{J}}+\dots+\alpha_{r}\right)
\quad\text{and}\quad 
\bigland_{J\subseteq [r-1]}\left(\alpha_r+\sum_{i\in J}x_i\leq \alpha_{r-\card
{J}}+\dots+\alpha_{r}\right).
$$
The second set of conditions can be rewritten simply as
\begin{equation}
\label{eq:helperghk} 
\bigland_{J\subseteq [r-1]}\;\sum_{i\in J}x_i\leq \alpha_{r-\card{J}}+\dots+\alpha_{r-1},
\end{equation}
which makes the first set of constraints redundant since 
$
\alpha_{r-\card{J}}+\dots+\alpha_{r-1}\leq \alpha_{r+1-\card{J}}+\dots+\alpha_{r}
$
from the monotonicity of the sequence of $\alpha_r$'s. The constraints~\eqref{eq:helperghk} that we are left with, exactly define $\Lambda(\alpha_1,\dots,\alpha_{r-1})$ (see~\eqref{eq:sim-body}).

\hyperref[prop:SIM6]{Property~\ref*{prop:SIM6}} can be shown in a very similar way: due to downwards closure, any projection $\Lambda_{[r]\setminus\sset{j}}$ is just the slice $\slice{\Lambda}{\sset{j}}{0}$.

Finally, \hyperref[prop:SIM4]{Property~\ref*{prop:SIM4}} is a result of scaling: $q\cdot\Lambda$ and $\Lambda$ are similar by a scaling factor of $q$, so the ratio of their volumes is $q^{r}$ and the ratio of their projections is $q^{r-1}$. In formula~\eqref{eq:deficiency} that defines deficiencies, the volumes of the projections are also multiplied by the parameter $k$ of the deficiency, resulting in an overall ratio of $q^{r}$ between the two deficiencies.
\end{proof}

\begin{figure}[t]
\centering
\begin{subfigure}{0.485\textwidth}
\includegraphics[width=1\textwidth]{./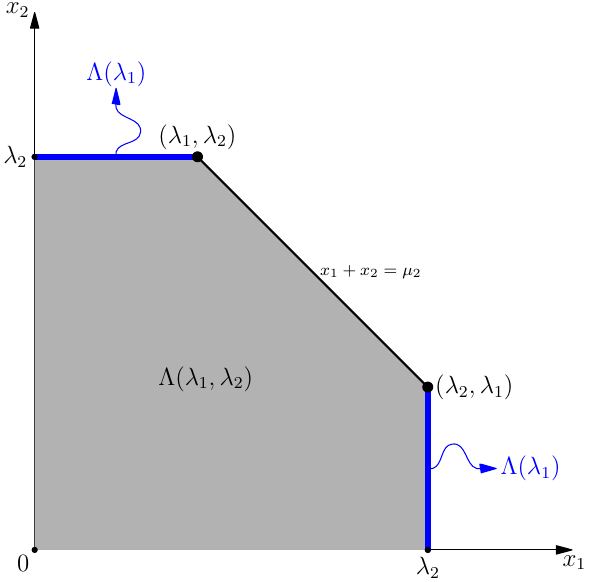}
\caption{\footnotesize The $2$-dimensional SIM-body $\Lambda(\lambda_1,\lambda_2)$}
\label{fig:2dimSIM}
\end{subfigure}
~
\begin{subfigure}{0.485\textwidth}
\includegraphics[width=1\textwidth]{./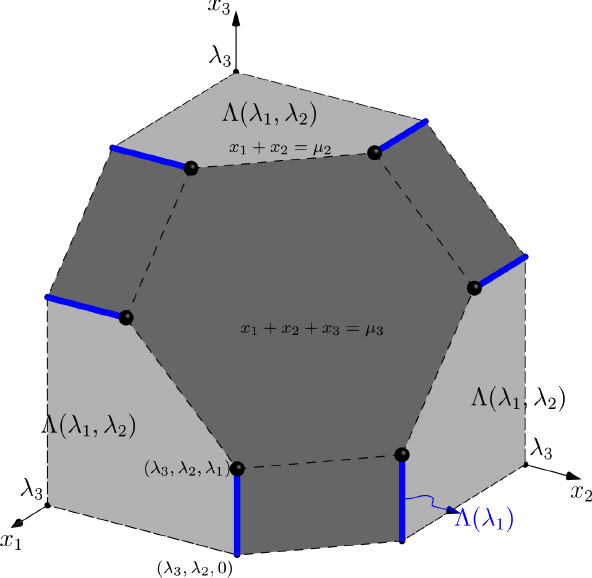}
\caption{\footnotesize The $3$-dimensional SIM-body $\Lambda(\lambda_1,\lambda_2,\lambda_3)$}
\label{fig:3dimSIM}
\end{subfigure}
\caption{\footnotesize SIM-bodies for dimensions $m=2,3$. Notice the recursive nature of these constructions: a SIM-body encodes in it the SIM-bodies of lower dimensions as extreme slices (\hyperref[prop:SIM3]{Property~\ref*{prop:SIM3}} of \cref{lemma:SIMbodiesprops2}). In this figure, these $1$-dimensional critical  bodies are denoted by thick lines (blue in the color version of the paper) and the $2$-dimensional ones in light gray.}
\label{fig:dimSIM}
\end{figure}

\subsection{Decomposition of SJA}
\label{sec:proof-no-positive-deficiency}

In this section we bring together all the necessary elements needed to prove \cref{lemma:SIMallocspaceisom}. We study the structure of the allocation space of SJA that reveals an elegant decomposition which demonstrates that the SIM-bodies essentially act as building blocks for SJA.

First of all, we need to get a closer look at SJA's payments and demonstrate
some of their interesting characteristics. The way in which the SJA payments are
constructed makes them satisfy a kind of ``contraction'' property:
\begin{lemma}
\label{lemma:decrdiffs}
The prices of the SJA mechanism have nonincreasing differences, i.e.,
$$p^{(m)}_r-p^{(m)}_{r-1}\leq p^{(m)}_{r-1}-p^{(m)}_{r-2}$$
for all $r=2,\dots,m$.
\end{lemma}

\begin{proof}
Fix some dimension $m$ and assume that we have computed prices of SJA up to
$p_1,p_2,\dots, p_{r-1}$ for some $2\leq r\leq m$. First we will show that 
\begin{equation}
\label{eq:helperlem5}
(r-1)p_r\leq rp_{r-1},
\end{equation}
i.e., that the price $p_r$ must be in $[0,\frac{r}{r-1}p_{r-1}]$. We will do that
by showing that otherwise this price would be redundant, in the sense that for any $p_r> \frac{r}{r-1}p_{r-1}$ the sub-body of $I^r$ defined by
$$
\bigland_{J\subseteq [r]}\;\;\sum_{j\in J}x_j<p_{\card{J}},
$$
and whose volume must be exactly $1-rk$ in \cref{def:SJAmechanism}, would remain unchanged  and equal to the one defined by 
\begin{equation}
\label{eq:helper589}
\bigland_{\substack{J\subseteq [r]\\ \card{J}\leq r-1}}\;\sum_{j\in J}x_j<p_{\card{J}}.
\end{equation}

Indeed, the body defined from~\eqref{eq:helper589} is a downwards closed, symmetric convex polytope and for the newly inserted hyperplane $x_1+\dots+x_r=p_r$ to have any effect on it, i.e., to have a nonempty intersection with it, it must be that this hyperplane's ``symmetric point'' $(p_r/r,\dots,p_r/r)$ belongs already to the interior of the body in~\eqref{eq:helper589} (this is due to the symmetry and convexity of the body). So, this point must satisfy the $(r-1)$-dimensional condition $x_1+\dots+x_{r-1}\leq p_{r-1}$, thus $(r-1)(p_r/r)\leq p_{r-1}$ which is exactly property~\eqref{eq:helperlem5}.

To show that $p_{r}-p_{r-1}\leq p_{r-1}-p_{r-2}$ for all $2\leq r\leq m$, or equivalently $p_{r}\leq 2p_{r-1}-p_{r-2}$, by~\eqref{eq:helperlem5} it is enough to show that $\frac{r}{r-1}p_{r-1}\leq 2p_{r-1}-p_{r-2} $. But this is equivalent to $(r-1)p_{r-2}\leq(r-2)p_{r-1}$ which we know holds, also from~\eqref{eq:helperlem5}.
\end{proof}

\paragraph{Normalized payments} By the procedure of defining SJA payments (\cref{def:SJAmechanism}), it can be the case that price $p_r$ is smaller than $p_{r-1}$, i.e., $p_r\in[p_{l},p_{l+1}]$ for some $l\leq {r-2}$. This is perfectly acceptable, and it just means that essentially we render older prices that are above $p_r$ redundant, in the sense that setting $p_{j}\gets p_r$ for all $j<r$ with $p_j\geq p_r$ would not have an effect on the sub-body $\bigland_{J\subseteq [r]}\sum_{j\in J}x_j<p_{\card{J}}$ of $I^r$ used in the definition of SJA in~\eqref{eq:sja}. This because $x_1+\dots+x_r\leq p_r\then x_1+\dots+x_j\leq p_j$ (since $j<r$ and $p_r\leq p_j$), so old conditions $x_1+\dots+x_j\leq p_j$ have become useless.
In particular, notice how this is the case for the full-bundle price $p_m^{(m)}$
when $m=5,6$: from \cref{eq:prices_numeric_1_4,eq:prices_numeric_5_6} we see
that indeed 
$p^{(5)}_5\approx 1.9856 < 2.0005\approx p^{(5)}_4$ 
and 
$p^{(6)}_5\approx 2.3774 < 2.4165\approx p^{(6)}_6$. 
This means that no bundle of $m-1$ items is ever going to be
sold under the SJA mechanism for $m=5$ or $m=6$:
\begin{equation}
\label{eq:special_empty_allocations_5_6}
U^{(5)}_{[4]}=U^{(6)}_{[5]}=\emptyset.
\end{equation}

Furthermore, by the nonincreasing differences property of the SJA payments 
(\cref{lemma:decrdiffs}), every new payment after $r$ will continue to fall below the previous one. So, at the end the situation will be in the form of
\begin{equation}
\label{eq:beforenormpayments} 
p_1\leq\dots\leq p_{l}\leq p_m\leq \dots
\end{equation}
for some $l<m$ and, as we discussed above, there will be absolutely no effect on the mechanism if we update all older payments that have ended up above $p_m$ to ``collapse'' to $p_m$, i.e.,
\begin{equation}
\label{eq:afternormpayments}  
p_1\leq\dots\leq p_{l}\leq p_m=p_{m-1}=p_{m-2}=\dots=p_{l+1}.
\end{equation}
Rigorously, we redefine
$$
p^{(m)}_{j}\gets p^{(m)}_m\quad\text{for all}\;\; j\in[m-1]\;\;\text{with}\;\;p_j\geq p_m.
$$
While this \emph{normalization} has no effect on the SJA mechanism itself, it makes sure that payments are now given in a \emph{nondecreasing} order, which is an elegant property that will simplify our exposition later on.

An important observation is that this normalization of payments does not break the property of the nonincreasing differences of the payments of SJA, i.e.\ \cref{lemma:decrdiffs} continues to hold: having a look at the transition before and after the normalization process from~\eqref{eq:beforenormpayments} to~\eqref{eq:afternormpayments} we see that all the differences up to the $l$-th payment remain unchanged, $p_{l+1}-p_{l}$ can only decrease and all differences above the $(l+1)$-th payment have just collapsed to $0$.

From now on and for the remaining of this paper we will assume that SJA payments
are normalized. The only difference that this makes, for up to $m=6$ dimensions,
to the values of the payments we have already computed at \cref
{eq:prices_numeric_1_4,eq:prices_numeric_5_6} in \cref
{sec:SJAdef} is that
for $m=5,6$ we have that
$$ p_{4}^{(5)}\gets p_{5}^{(5)}\quad\text{and}\quad p_{5}^{(6)}\gets p_{6}^{(6)}, $$
which gives by~\eqref{eq:transfpm} that also the $\mu^{(m)}_r$ parameters are updated to $\mu_{m-1}^{(m)}\gets\mu_{m}^{(m)}-(m+1)$:
$$
\mu^{(5)}_4\approx 12.0865\quad\quad \mu^{(6)}_5\approx 18.3585.
$$

Recall the definition $\lambda_r\equiv\mu_r-\mu_{r-1}$ from \eqref{eq:lambdasdef}. These are the critical parameters of the SIM-bodies used in all the key theorems for the optimality of SJA.
Equation \eqref{eq:lambdasdef} is equivalent to saying that $\mu_r=\lambda_1+\dots+\lambda_r$. Taking the $\mu_r^{(m)}$ values into account (see~\eqref{eq:transfpm}) the $\lambda_r^{(m)}$'s for up to $m=6$ items are, for $m\leq 4$,
\begin{equation}
\label{eq:lambdas_def1}
\lambda_1 = 1	\qquad	\lambda_2 = 1+\sqrt{2}	\qquad	\lambda_3 \approx 3.6830	\qquad	\lambda_4 \approx 4.9000
\end{equation}
and for $m=5,6$ the only modifications are
\begin{equation}
\label{eq:lambdas_def2}
\lambda_4^{(5)} \approx 4.9894	\qquad	\lambda_5^{(5)} = 6 \qquad	\lambda_5^{(6)}
\approx 6.3613	\qquad	\lambda_6^{(6)} = 7.
\end{equation}

The nonincreasing differences property of the SJA payments makes these parameters monotonic:
\begin{lemma}
\label{lemma:decrdiffs2}
The $\lambda_r^{(m)}$ parameters are nondecreasing and upper-bounded by $m+1$, i.e.,
$$
\lambda_{r-1}^{(m)}\leq \lambda_r^{(m)}\leq m+1,
$$
for all $r=2,\dots,m$.
\end{lemma}

\begin{proof}
Using the transformations~\eqref{eq:transfpm} and~\eqref{eq:lambdasdef} we have
$$
 p_r-  p_{r-1}\leq   p_{r-1}-  p_{r-2}\then \mu_{r-1}-\mu_{r-2}\leq \mu_{r}- \mu_{r-1} \then \lambda_{r-1}\leq \lambda_r
$$
and 
$$
 p_{r-1}\leq  p_{r}\then \mu_{r}-\mu_{r-1}\leq m+1\then \lambda_{r}\leq m+1,
$$
which concludes the proof since the SJA payments are nondecreasing with nonincreasing differences (\cref{lemma:decrdiffs}).
\end{proof}

An algebraic manipulation of~\eqref{eq:def-U_A}, using the nonincreasing
differences property of the SJA payments, can give us the following
characterization: 
\begin{lemma}
\label{lemma:allocsubspacealgebra}
For any subset of items $J\subseteq [m]$,
$$
U_{J}^{(m)} =\sset{\vecc x\in I^m \fwh{\bigland_{L\subseteq J}\;\;\sum_{j\in L}x_j\geq p_{\card{J}}^{(m)}-p_{\card{J}-\card{L}}^{(m)}\bigland_{L\subseteq [m]\setminus J}\;\;\sum_{j\in L}x_j\leq p_{\card{J}+\card{L}}^{(m)}-p_{\card{J}}^{(m)}}}.
$$
\end{lemma}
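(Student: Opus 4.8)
The plan is to unfold the $2^{m}$ defining inequalities of $U_J^{(m)}$ in~\eqref{eq:def-U_A} — one inequality $\sum_{j\in J}x_j-p_{|J|}^{(m)}\ge\sum_{j\in L}x_j-p_{|L|}^{(m)}$ for each $L\subseteq[m]$ — and show that this whole system is equivalent to the two ``reduced'' families in the statement, namely the ones indexed by $L\subseteq J$ and by $L\subseteq[m]\setminus J$. Throughout I would use the structural property of the SJA prices that, with the convention $p_0^{(m)}=0$, the marginal prices $p_r^{(m)}-p_{r-1}^{(m)}$ form a \emph{nonnegative, weakly decreasing} sequence — this is exactly the regime in which the recursion~\eqref{eq:recsliceSJAvol} is derived, and it is the form of monotonicity/concavity the argument actually needs.

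The inclusion ``$U_J^{(m)}$ $\subseteq$ the right-hand set'' is immediate: instantiating~\eqref{eq:def-U_A} at the set $J\setminus L$ for $L\subseteq J$ and rearranging gives $\sum_{j\in L}x_j\ge p_{|J|}^{(m)}-p_{|J|-|L|}^{(m)}$, while instantiating it at $J\cup L$ for $L\subseteq[m]\setminus J$ gives $\sum_{j\in L}x_j\le p_{|J|+|L|}^{(m)}-p_{|J|}^{(m)}$. So all the content is in the reverse inclusion.

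For that, I would fix an $\vecc x$ satisfying both reduced families and an arbitrary $L\subseteq[m]$, split it as $L_1=L\cap J\subseteq J$ and $L_2=L\setminus J\subseteq[m]\setminus J$ (so $|L|=|L_1|+|L_2|$), and combine the reduced inequality for the subset $J\setminus L_1$ of $J$ with the one for the subset $L_2$ of $[m]\setminus J$:
\[
\sum_{j\in J}x_j-\sum_{j\in L}x_j=\sum_{j\in J\setminus L_1}x_j-\sum_{j\in L_2}x_j\ \ge\ 2p_{|J|}^{(m)}-p_{|L_1|}^{(m)}-p_{|J|+|L_2|}^{(m)}.
\]
Matching this against the target value $p_{|J|}^{(m)}-p_{|L|}^{(m)}=p_{|J|}^{(m)}-p_{|L_1|+|L_2|}^{(m)}$ reduces everything to the purely numerical claim $p_{|J|}^{(m)}-p_{|L_1|}^{(m)}\ge p_{|J|+|L_2|}^{(m)}-p_{|L_1|+|L_2|}^{(m)}$; writing each side as a telescoping sum of $|J|-|L_1|$ consecutive marginal prices — the right-hand sum being the left-hand one shifted up by $|L_2|$ — this follows term by term from the weak decrease of the marginals. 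Hence~\eqref{eq:def-U_A} holds for every $L$, i.e.\ $\vecc x\in U_J^{(m)}$.

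The computation is routine, so I do not expect a genuine obstacle; the one point that deserves care is that this reduction to subsets and supersets of $J$ uses \emph{concavity} of the price schedule and not merely that $p_r^{(m)}$ is increasing (already for $m=3$, $J=\{1\}$ one needs $p_1^{(m)}\ge p_2^{(m)}-p_1^{(m)}$ to recover the constraint $x_1\ge x_2$ from the reduced conditions). The degenerate cases $L_1=\emptyset$ or $L_2=\emptyset$, where one of the two invoked reduced inequalities is the vacuous $0\ge0$ or $0\le0$, are covered by exactly the same computation, so no separate case analysis is needed.
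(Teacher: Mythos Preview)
Your proof is correct, and in fact more carefully stated than the paper's, which does not give a proof at all beyond the one-line remark ``a simple algebraic manipulation of~\eqref{eq:def-U_A}, using the non-decreasing property of the SJA payments.'' Your splitting $L=L_1\cup L_2$ and reducing to the telescoping inequality $p_{|J|}-p_{|L_1|}\ge p_{|J|+|L_2|}-p_{|L_1|+|L_2|}$ is exactly the intended manipulation.

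One point worth highlighting: you correctly observe that what is really needed is the \emph{non-increasing differences} (concavity) of the price schedule, i.e.\ Lemma~\ref{lemma:decrdiffs}, not merely monotonicity as the paper's sentence suggests. Your example ($m=3$, $J=\{1\}$, where recovering $x_1\ge x_2$ requires $p_1\ge p_2-p_1$) makes this precise. Since Lemma~\ref{lemma:decrdiffs} appears in the paper immediately after this lemma and its proof does not depend on the present statement, there is no circularity; the paper's phrasing is simply a bit loose.
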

\begin{proof}
Fix some positive integer $m$, $J\subseteq [m]$ and an arbitrary $\vecc x\in
I^m$. We
need to show that $\vecc x$ satisfies the constraints in the description of set
$U_J^{(m)}$ at the statement of \cref{lemma:allocsubspacealgebra} if and only if
it satisfies the constraints in~\eqref{eq:def-U_A}. To be more precise,
and after moving all
$x_j$'s in the constraints of~\eqref{eq:def-U_A} at the left side of the
inequalities and deleting the ones that cancel out, we need to show that
\begin{equation}
\label{eq:allocsubspacealgebra_1}
\sum_{j\in J\setminus L} x_j - \sum_{j\in L\setminus J} x_j \geq p_{\card{J}}-p_
{\card
{L}}
\quad\text{for all}\;\; L\subseteq [m]
\end{equation}
if and only if 
\begin{equation}
\label{eq:allocsubspacealgebra_2}
\sum_{j\in L_1} x_j \geq p_{\card{J}}-p_
{\card
{J}-\card{L_1}}
\quad\text{for all}\;\; L_1\subseteq J
\end{equation}
and
\begin{equation}
\label{eq:allocsubspacealgebra_3}
\sum_{j\in L_2} x_j \leq p_{\card{J}+\card{L_2}}-p_
{\card
{J}}
\quad\text{for all}\;\; L_2\subseteq \bar{J},
\end{equation}
where for simplicity we drop the superscript $(m)$ from the prices and denote
$\bar{J}=[m]\setminus J$.

Indeed, first assume that $\vecc x$ satisfies \eqref{eq:allocsubspacealgebra_1}
and pick any $L_1\subseteq J$, $L_2\subseteq \bar{J}$. Then, since
$J\setminus (J\setminus L_1)=L_1$ and $(J\setminus L_1)\setminus J=\emptyset$,
by using $L\gets J\setminus L_1$ in \eqref{eq:allocsubspacealgebra_1} we get
$$
\sum_{j\in L_1} x_j \geq p_{\card{J}}-p_
{\card
{J\setminus L_1}}
= p_{\card{J}}-p_
{\card
{J}-\card{L_1}},
$$
which proves that $\vecc x$ satisfies \eqref{eq:allocsubspacealgebra_2}. In a
similar way, since $J\setminus(J \union L_2)=\emptyset$ and $(J\union
L_2)\setminus J=L_2$, if we use $L\gets J\union L_2$ in \eqref
{eq:allocsubspacealgebra_1}
we get
$$
-\sum_{j\in L_2} x_j \geq p_{\card{J}}-p_
{\card
{J\union L_2}}=p_{\card{J}}- p_{\card{J}+\card{L_2}},
$$
which is the same as \eqref{eq:allocsubspacealgebra_3}.

For the opposite direction, assume now that $\vecc x$ satisfies \eqref
{eq:allocsubspacealgebra_2} and \eqref{eq:allocsubspacealgebra_3}, and pick any
$L\subseteq [m]$. Since $J\setminus L\subseteq J$ and $L\setminus
J\subseteq\bar{J}$, if we use $L_1\gets J\setminus L$ and $L_2\gets L\setminus
J$
in \eqref
{eq:allocsubspacealgebra_2} and \eqref{eq:allocsubspacealgebra_3}, respectively,
we get
\begin{align*}
\sum_{j\in J\setminus L} x_j &\geq p_{\card{J}}-p_
{\card
{J}-\card{J\setminus L}},\\
\sum_{j\in L\setminus J} x_j & \leq p_{\card{J}+\card{L\setminus J}}-p_{\card
{J}}.
\end{align*}
By subtracting these inequalities and taking into consideration that
$\card{J}-\card{J\setminus L} = \card{J\inters L}$
and
$\card{J}+\card{L\setminus J}=\card{J\union L}$
we have
$$
\sum_{j\in J\setminus L} x_j - \sum_{j\in L\setminus J} x_j
\geq
2p_{\card{J}}-p_
{\card
{J\inters L}}- p_{\card{J\union L}}.
$$
So, in order to show that \eqref{eq:allocsubspacealgebra_1} holds and conclude
the proof of the lemma, it is enough to show that $p_{\card{J}}-p_
{\card
{J\inters L}}- p_{\card{J\union L}}\geq - p_{\card{L}}$, or equivalently that
$$
p_{\card{J\union L}} - p_{\card{J}} \leq p_{\card{L}} - p_{\card{J\inters L}}.
$$
But since $\card{J\union L}-\card{J}=\card{L}-\card{J\inters L}$ (they are both
equal to $\card{L\setminus J}$) and $\card{J\union L}\geq \card{L}$, the above
inequality indeed holds due to the nonincreasing differences property of the
SJA payments (\cref{lemma:decrdiffs}).
\end{proof}

Notice here that, due to symmetry, every slice $\slice{U^{(m)}_J}{-J}{\vecc t}$ with $\card{J}=r\leq m$ is isomorphic to $\slice{U^{(m)}_{[r]}}{[r+1...m]}{\vecc t}$ and so, from the characterization in \cref{lemma:allocsubspacealgebra}, this slice is \emph{invariant} with respect to the specific value of the (($m-r$)-dimensional) vector $\vecc t$. In particular, if it's nonempty, then
\begin{equation}
\label{eq:sliceszeroenough}
\slice{U^{(m)}_J}{-J}{\vecc t}=\slice{U^{(m)}_J}{-J}{\vecc 0_{m-\card{J}}}.
\end{equation}
The following lemma essentially gives an alternative definition of SJA, in terms of the deficiencies of its allocation components $U_J^{(m)}$. In particular, it requires every $\card{J}$-dimensional slice of any subdomain $U_J^{(m)}$ to have zero deficiency:

\begin{lemma}
\label{lemma:zeroslicedefSJA}
Every slice
  $\slice{U^{(m)}_J}{-J}{\vecc t}$ of SJA has zero $k$-deficiency, where
  $k=\frac{1}{m+1}$.
\end{lemma}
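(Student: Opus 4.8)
The plan is to reduce the statement to a single, explicitly described body, compute its volume and the volumes of its facet‑projections as functions of the SJA prices, and then observe that the resulting identity for the $k$‑deficiency is — through the volume recursion~\eqref{eq:recsliceSJAvol} — precisely the slice condition~\eqref{eq:sja} that \emph{defines} $p^{(m)}_r$. The statement then follows by induction on $r=\card J$.

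First, since SJA is symmetric, a permutation of the items carries $\slice{U^{(m)}_J}{-J}{\vecc t}$ to a slice of $U^{(m)}_{[r]}$ with $r=\card J$, and the $k$‑deficiency~\eqref{eq:deficiency} is invariant under such permutations; by~\eqref{eq:sliceszeroenough} the value of $\vecc t$ is irrelevant. Hence it suffices to prove $\delta_k(S_r)=0$ for $S_r:=\slice{U^{(m)}_{[r]}}{-[r]}{\vecc 0_{m-r}}$, $r=1,\dots,m$. Feeding $\vecc x_{[r+1\dots m]}=\vecc 0$ into Lemma~\ref{lemma:allocsubspacealgebra}, every ``upper'' constraint becomes $0\leq p^{(m)}_{r+\card L}-p^{(m)}_r$: if some $p^{(m)}_{r+\ell}<p^{(m)}_r$ then $S_r=\emptyset$ and $\delta_k(S_r)=0$ trivially; otherwise all upper constraints are vacuous and
$$
S_r=\sset{\vecc x\in I^r \fwh{\bigland_{L\subseteq[r]}\ \sum_{j\in L}x_j\ \geq\ p^{(m)}_r-p^{(m)}_{r-\card L}}}.
$$

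Every defining inequality of $S_r$ has the form $\sum_{j\in L}x_j\geq\text{const}$, so $S_r$ is upward closed in each coordinate within $I^r$; hence $(S_r)_{[r]\setminus\ssets j}=\slice{S_r}{\ssets j}{1}$. Substituting $x_j=1$ and discarding the constraints with $j\in L$, which become redundant because consecutive price gaps are at most $1$, shows that $\slice{S_r}{\ssets j}{1}$ is again a body of exactly the same shape, one dimension lower; as $S_r$ is symmetric, all $r$ projections are congruent, so $\delta_k(S_r)=\card{S_r}-kr\,\card{(S_r)_{[r]\setminus\ssets 1}}$. It remains to compute these two volumes. With items $[r+1\dots m]$ fixed to $0$, SJA sells at least one item exactly on the body $V(p^{(m)}_1,\dots,p^{(m)}_r)$ of~\eqref{eq:disjunctionspace}, and partitioning this body (up to a null set) according to which bundle $K\subseteq[r]$ is sold decomposes it into the slices $\slice{U^{(m)}_K}{[r+1\dots m]}{\vecc 0}$; by Lemma~\ref{lemma:allocsubspacealgebra} each of these is a Cartesian product of a lower factor on the $K$‑coordinates, congruent to $S_{\card K}$, with a SIM‑body upper factor on the $([r]\setminus K)$‑coordinates that does not depend on $K$. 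Taking volumes, the $r$‑th slice condition $\card{V(p^{(m)}_1,\dots,p^{(m)}_r)}=rk$ of Definition~\ref{def:SJAmechanism} becomes one polynomial relation among $\card{S_1},\dots,\card{S_r}$ and the volumes of these upper factors; the same upper factors reappear when $\card{S_r}$ and $\card{(S_r)_{[r]\setminus\ssets 1}}$ are expanded via the recursion~\eqref{eq:recsliceSJAvol}, and substituting the inductive hypotheses $\delta_k(S_s)=0$ for $s<r$ collapses that relation to $\card{S_r}=kr\,\card{(S_r)_{[r]\setminus\ssets 1}}$, i.e.\ $\delta_k(S_r)=0$. The base case $r=1$ is just $p^{(m)}_1=m/(m+1)$, giving $S_1=[p^{(m)}_1,1]$ and $\delta_k(S_1)=k-k\cdot 1=0$.

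The one genuinely laborious part is this last bookkeeping: checking that, after the product decomposition and after inserting the lower‑dimensional deficiency identities, the single volume equation $\card{V(p^{(m)}_1,\dots,p^{(m)}_r)}=rk$ is equivalent to the facet identity for $S_r$, uniformly in $r$. One must also be careful with the degenerate price configurations in which the gaps $p^{(m)}_r-p^{(m)}_{r-1}$ fail to be non‑increasing — by the discussion following~\eqref{eq:recsliceSJAvol} this occurs only for $m=5$ at $r=5$ and for $m=6$ at $r=6$, exactly when $U^{(m)}_{[r-1]}$ collapses to a null set — where the convention $v(p_1,\dots,p_r)=v(p_1,\dots,p_{r-2},p_r,p_r)$ lets the very same recursion, and hence the same computation, go through unchanged.
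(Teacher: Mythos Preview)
Your outline is in the same spirit as the paper's proof: reduce to the single body $S_r=\slice{U^{(m)}_{[r]}}{-[r]}{\vecc 0}$, decompose $V(p_1,\dots,p_r)$ according to which bundle is sold, and induct on $r$. The Cartesian-product description of each piece is correct (up to the caveat that the ``upper factor'' depends on $|K|$, not merely on the complement $[r]\setminus K$). However, the step you flag as ``laborious bookkeeping'' is in fact the entire content of the argument, and the mechanism you sketch for it does not work as written. Recursion~\eqref{eq:recsliceSJAvol} integrates $v(p_1,\dots,p_r)$ along one coordinate; it does not expand $|S_r|$ or $|(S_r)_{[r]\setminus\{1\}}|$, and the upper factors do not ``reappear'' there in any usable way. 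So as it stands you have one equation, $|V|=rk$, in the unknowns $|S_1|,\dots,|S_r|$ and the upper-factor volumes, but no second equation to pin down $|(S_r)_{[r]\setminus\{1\}}|$.

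The missing ingredient, and what the paper supplies (Lemma~\ref{eq:decomposedeficiencies} in Appendix~\ref{append:fullproofzeroslicedef}), is that not only the volume of $V$ decomposes into the pieces, but each \emph{projection} $V_{[r]\setminus\{j\}}$ --- which is the full face $I^{r-1}$ --- is covered, almost-everywhere disjointly, by the projections of those $U_K$ with $j\in K$. This disjointness is the non-trivial point; it hinges on the non-increasing price differences (Lemma~\ref{lemma:decrdiffs}) and you do not address it. Once one has both decompositions, the $k$-deficiency of $V$ splits as a sum of integrated $k$-deficiencies of the slices, so $\delta_k(V)=0$ together with the inductive hypotheses for $|K|<r$ immediately forces $\delta_k(S_r)=0$ --- no explicit volume computation is needed. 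Two minor side remarks: your phrase ``gaps fail to be non-increasing'' should read ``gaps fail to be non-negative'' (the differences are always non-increasing by Lemma~\ref{lemma:decrdiffs}; what can fail before normalization is monotonicity of the prices themselves), and the lemma is stated and proved for arbitrary $m$, so restricting the degeneracy discussion to $m\in\{5,6\}$ is out of place.
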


\begin{proof}[Proof sketch]
Fix some dimension $m$ and let $k=1/(m+1)$. By the definition of SJA~\eqref{eq:sja}, the domain $\overline U_\emptyset$ where at least one item is sold must have volume $m/(m+1)$: the probability of selling at least an item is $mk=m/(m+1)$ which corresponds to the volume of this domain because the valuations' space is the unit cube $I^m$. Every projection $(\overline U_\emptyset)_{\ssets{j}}$ of this body towards any coordinate $j\in[m]$ has volume~$1$: it is the $(m-1)$-dimensional side of the cube; just set the valuation of item $j$ to $x_j=1$ and trivially notice that, no matter what the remaining valuations $\vecc x_{-j}\in I^{m-1}$ are, at least one item is being sold by SJA, namely item $j$, since $x_j=1\geq p_1$. Bringing the above together, this means that the $k$-deficiency of $\overline U_\emptyset$ is $m/(m+1)- k\cdot m\cdot 1=0$.

This valuations subdomain $\overline U_\emptyset$ where at least one item is sold can be decomposed in its various components $U_J$, where $\emptyset\neq J\subseteq [m]$. Its volume is just the sum of the volumes of these components. Also, its projections (i.e. the sides of the unit cube $I^m$) can be covered by taking the projection of any such component $U_J$ with respect to its ``active'' coordinates in $J$. This tells us that the deficiency of the entire body $\overline U_\emptyset$ is essentially reduced to the sum of the deficiencies of its subdomains. But this body has zero $k$-deficiency, so all its components must also have zero deficiencies (by using an inductive argument).

A complete, formal proof of this characterization can be found in \cref{append:fullproofzeroslicedef}.
\end{proof}

Now we are ready to prove \cref{lemma:SIMallocspaceisom}, which makes rigorous the correspondence between the various components $U_J^{(m)}$ of the allocation space of SJA and SIM-bodies. It is the motivation behind introducing SIM-bodies in the first place. Essentially, the entire allocation space of SJA is made up by slices of SIM-bodies: 

\begin{lemmanonum}[\cref{lemma:SIMallocspaceisom}]
 Every nonempty slice $\slice{U_J^{(m)}}{-J}{\vecc t}$ is isomorphic to the
  SIM-body $k\cdot \Lambda(\lambda_1^{(m)},\ldots,\lambda_{|J|}^{(m)})$, where $k=\frac{1}{m+1}$. 
\end{lemmanonum}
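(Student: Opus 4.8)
The plan is to produce an explicit affine isomorphism carrying the slice onto the claimed SIM-body, and then check that the two defining inequality systems coincide after substituting the price reparametrizations~\eqref{eq:transfpm} and~\eqref{eq:lambdasdef}. The map I would use is the reflection $\vecc x\mapsto\vecc 1-\vecc x$ (the one already flagged in the caption of Fig.~\ref{fig:SJA2and3}); since it is a rigid motion, composing it with a coordinate permutation is again a rigid motion, and it preserves all volumes and lower-dimensional projection-volumes, which is exactly the notion of "isomorphism'' needed in the later deficiency arguments.

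First I would reduce to a canonical slice. Writing $r=\card{J}$: because the slice is non-empty, equation~\eqref{eq:sliceszeroenough} lets us replace $\vecc t$ by $\vecc 0_{m-r}$, and the symmetry of SJA lets us permute coordinates so that $J=[r]$; so it suffices to treat $\slice{U^{(m)}_{[r]}}{[r+1\ldots m]}{\vecc 0_{m-r}}$. Plugging $J=[r]$ and $\vecc x_{[r+1\ldots m]}=\vecc 0$ into the characterization of Lemma~\ref{lemma:allocsubspacealgebra}, every ``upper'' constraint indexed by $L\subseteq[m]\setminus[r]$ degenerates to $0\leq p_{r+\card{L}}^{(m)}-p_r^{(m)}$, which holds since (normalized) SJA payments are non-decreasing; thus the slice is exactly $\sset{\vecc x\in I^r\fwh{\bigland_{L\subseteq[r]}\sum_{j\in L}x_j\geq p_r^{(m)}-p_{r-\card{L}}^{(m)}}}$, using the convention $p_0=0$. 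Applying the reflection $\vecc y=\vecc 1_r-\vecc x$ turns each such constraint into $\sum_{j\in L}y_j\leq\card{L}-p_r^{(m)}+p_{r-\card{L}}^{(m)}$, and substituting $p_s^{(m)}=s-\mu_s^{(m)}k$ with $k=\frac1{m+1}$ collapses the right-hand side to $(\mu_r^{(m)}-\mu_{r-\card{L}}^{(m)})k=\bigl(\lambda_{r-\card{L}+1}^{(m)}+\cdots+\lambda_r^{(m)}\bigr)k$ by~\eqref{eq:lambdasdef}; this is precisely the defining inequality of $\Lambda(k\lambda_1^{(m)},\ldots,k\lambda_r^{(m)})=k\cdot\Lambda(\lambda_1^{(m)},\ldots,\lambda_r^{(m)})$, which is a legitimate SIM-body since the $\lambda_i^{(m)}$ are non-decreasing (Lemma~\ref{lemma:decrdiffs2}). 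Finally I would reconcile the bounding box: $x_j\leq1$ becomes $y_j\geq0$, already part of $\R_+^r$, while $x_j\geq0$ becomes $y_j\leq1$, which is implied by the $L=\ssets{j}$ SIM-inequality $y_j\leq k\lambda_r^{(m)}$ together with $\lambda_r^{(m)}\leq m+1$ (again Lemma~\ref{lemma:decrdiffs2}), hence redundant; so the reflection maps the slice bijectively onto $k\cdot\Lambda(\lambda_1^{(m)},\ldots,\lambda_r^{(m)})$.

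I do not expect a real obstacle: the argument is essentially one substitution. The points that require care are (i) invoking~\eqref{eq:sliceszeroenough} and symmetry correctly to land on the canonical slice $\slice{U^{(m)}_{[r]}}{[r+1\ldots m]}{\vecc 0}$, which is where the non-emptiness hypothesis is genuinely used; (ii) tracking which constraints of Lemma~\ref{lemma:allocsubspacealgebra} survive once the inactive coordinates are set to $0$; and (iii) verifying that the unit-cube bounds become redundant, which is precisely where both halves of Lemma~\ref{lemma:decrdiffs2} (monotonicity and the $m+1$ bound on the $\lambda_r^{(m)}$'s) are invoked. Everything else is the bookkeeping of $p_s=s-\mu_s/(m+1)$ and $\mu_s=\lambda_1+\cdots+\lambda_s$.
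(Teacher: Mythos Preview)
Your proposal is correct and follows essentially the same route as the paper: reduce by symmetry to $J=[r]$, read off the slice from Lemma~\ref{lemma:allocsubspacealgebra}, reflect via $\vecc x\mapsto\vecc 1_r-\vecc x$, substitute $p_s=s-\mu_s k$ and $\mu_s=\sum_{i\leq s}\lambda_i$, and finally use $\lambda_r^{(m)}\leq m+1$ to absorb the unit-cube constraint. The only cosmetic difference is that you first invoke~\eqref{eq:sliceszeroenough} to pin $\vecc t=\vecc 0$ and explicitly dispose of the $L\subseteq[m]\setminus[r]$ constraints, whereas the paper keeps a generic $\vecc t$ and silently drops those constraints (they involve only $\vecc t$, so non-emptiness of the slice already guarantees them).
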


\begin{proof}
Let $\card{J}=r$. Then, due to symmetry, the slice $\slice{U_J}{-J}{\vecc t}$ is isomorphic to $\slice{U_{[r]}}{[r+1...m]}{\vecc t}$. An $r$-dimensional vector $\vecc y$ belongs to this slice if and only if $(\vecc y,\vecc t)\in U_{[r]}$, which by \cref{lemma:allocsubspacealgebra} means that $\vecc y\in I^r$ and
$$
\bigland_{L\subseteq[r]}\;\sum_{j\in L}y_j\geq p_{r}-p_{r-\card{L}}.
$$
By~\eqref{eq:transfpm} this can be written as 
$$
\bigland_{L\subseteq[r]}\;\sum_{j\in L}y_j\geq \card{L}- (\mu_{r}-\mu_{r-\card{L}})k.
$$
So this slice is an upwards closed body within the $r$ dimensional unit-hypercube $I^r$, and if we apply the isomorphism $\vecc y\mapsto\vecc 1_r-\vecc y$ it is flipped around and mapped to the downwards closed body around the origin $\vecc 0_r$ defined by $\vecc y\in I^r$ and $\bigland_{L\subseteq[r]}\sum_{j\in L}y_j\leq (\mu_{r}-\mu_{r-\card{L}})k$. By taking into consideration~\eqref{eq:lambdasdef}  this becomes
\begin{equation}
\label{eq:helper41}
\bigland_{L\subseteq[r]}\;\sum_{j\in L}y_j\leq \lambda_{r-\card{L}+1}k+\dots+\lambda_rk.
\end{equation}
It is easy to see that the extra condition $\vecc y\in I^r$ can be replaced by the weaker one $\vecc y\in \R_+^r$, since the upper bounds $y_j\leq 1$ are already captured by~\eqref{eq:helper41}: for $L=\ssets{j}$ it gives
\begin{equation*} 
\label{eq:ineqboundm+1}
y_j\leq\lambda_rk=\frac{\lambda_r}{m+1}\leq 1,
\end{equation*}
the last inequality holding from \cref{lemma:decrdiffs}. So, we end up with exactly the definition of   
 $\Lambda(k\lambda_1,\ldots,k\lambda_{r})$. We must note here that this SIM-body
 is well-defined, since the $\lambda_r$'s are nondecreasing (\cref{lemma:decrdiffs}).
\end{proof}

\section{Proof of Optimality}
\label{sec:optimality-sja}

In this section we conclude the proof of our main result about the optimality of SJA (\cref{th:mainresultSJA6}). We do that by showing \cref{theorem:nopositivesubSIMs,theorem:weak-deficiency}.

In addition to the SIM-bodies $\Lambda(\lambda_1,\dots,\lambda_r)$ being essentially the building blocks of the allocation space of the SJA, the particular choice of the $\lambda_r$ parameters makes them satisfy another property: they have zero $1$-deficiency:

\begin{lemma}
\label{lemma:SIMpositivedef}
For any dimension $m$, if a subdomain $U_J^{(m)}$ of SJA is nonempty then the corresponding SIM-body $\Lambda(\lambda_1^{(m)},\dots,\lambda_{\card{J}}^{(m)})$ has zero $1$-deficiency.
\end{lemma}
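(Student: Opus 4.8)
The plan is to derive this statement as an immediate consequence of three results established earlier: the decomposition Lemma~\ref{lemma:SIMallocspaceisom}, the zero-deficiency characterization of SJA slices (Lemma~\ref{lemma:zeroslicedefSJA}), and the scaling property of deficiencies of SIM-bodies (Property~\ref{prop:SIM4} of Lemma~\ref{lemma:SIMbodiesprops2}).

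First I would fix a dimension $m$ and a nonempty subdomain $U_J^{(m)}$, and set $r=\card{J}$ and $k=\frac{1}{m+1}$. Since $U_J^{(m)}\neq\emptyset$, picking a point of it and reading off its $J$-coordinates shows that some slice $\slice{U_J^{(m)}}{-J}{\vecc t}$ is nonempty; by~\eqref{eq:sliceszeroenough} all of these slices coincide, so in particular $\slice{U_J^{(m)}}{-J}{\vecc 0}$ is a nonempty $r$-dimensional body. Lemma~\ref{lemma:SIMallocspaceisom} then says that this slice is isomorphic to $k\cdot\Lambda(\lambda_1^{(m)},\dots,\lambda_r^{(m)})=\Lambda(k\lambda_1^{(m)},\dots,k\lambda_r^{(m)})$, and (as seen in the proof of that lemma) the isomorphism is the axis-aligned reflection $\vecc y\mapsto\vecc 1_r-\vecc y$. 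Translations and coordinate reflections preserve $r$-dimensional Lebesgue volume as well as the volumes of all projections onto coordinate hyperplanes, hence they preserve $\delta_k$ for every $k$; therefore $\delta_k(k\cdot\Lambda(\lambda_1^{(m)},\dots,\lambda_r^{(m)}))=\delta_k(\slice{U_J^{(m)}}{-J}{\vecc 0})$, and the right-hand side is $0$ by Lemma~\ref{lemma:zeroslicedefSJA}.

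To finish, I would invoke Property~\ref{prop:SIM4} of Lemma~\ref{lemma:SIMbodiesprops2} with $q=k$ and ambient deficiency parameter $1$, which gives $\delta_k(k\cdot\Lambda(\lambda_1^{(m)},\dots,\lambda_r^{(m)}))=k^{r}\,\delta_1(\Lambda(\lambda_1^{(m)},\dots,\lambda_r^{(m)}))$. Combining this with the previous paragraph and dividing through by $k^r>0$ yields $\delta_1(\Lambda(\lambda_1^{(m)},\dots,\lambda_r^{(m)}))=0$, which is the claim. Note that the SIM-body in question is well-defined, because the parameters $\lambda_r^{(m)}$ are nondecreasing by Lemma~\ref{lemma:decrdiffs2}.

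There is no genuine obstacle here: once Lemmas~\ref{lemma:SIMallocspaceisom} and~\ref{lemma:zeroslicedefSJA} are available, the argument is pure bookkeeping. The only point I would take care to spell out is that ``isomorphic'' in Lemma~\ref{lemma:SIMallocspaceisom} refers to a volume- and projection-preserving map, so that the vanishing of the $k$-deficiency transfers from the SJA slice to the scaled SIM-body; this is immediate from the explicit form $\vecc y\mapsto\vecc 1_r-\vecc y$ of the map.
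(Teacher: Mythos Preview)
Your proposal is correct and follows essentially the same route as the paper: combine Lemma~\ref{lemma:zeroslicedefSJA} (zero $k$-deficiency of the slice), Lemma~\ref{lemma:SIMallocspaceisom} (the slice is isomorphic to $k\cdot\Lambda$), and Property~\ref{prop:SIM4} of Lemma~\ref{lemma:SIMbodiesprops2} (scaling). The only cosmetic differences are that the paper justifies nonemptiness of the slice at $\vecc 0$ by downwards closure in the non-$J$ coordinates rather than via~\eqref{eq:sliceszeroenough}, and leaves implicit the point you spell out about the reflection $\vecc y\mapsto\vecc 1_r-\vecc y$ preserving volumes and projection volumes.
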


\begin{proof}
Fix some $m$ and let $k=1/(m+1)$. For any nonempty subdomain $U_J$, the slice $\slice{U_J}{-J}{\vecc 0_{m-\cards{J}}}$ is nonempty (by downwards closure), so by \cref{lemma:zeroslicedefSJA} it has zero $k$-deficiency. But from \cref{lemma:SIMallocspaceisom} it is also isomorphic to the SIM-body 
$k\cdot\Lambda(\lambda_1^{(m)},\dots,\lambda^{(m)}_r)$, thus 
$\delta_k(k\cdot\Lambda(\lambda_1^{(m)},\dots,\lambda^{(m)}_r))=0$.
By \hyperref[prop:SIM4]{Property~\ref*{prop:SIM4}} of \cref{lemma:SIMbodiesprops2}, this means that indeed
$
\delta_1(\Lambda(\lambda_1^{(m)},\dots,\lambda^{(m)}_r))\linebreak[3]=0.
$
\end{proof}

Now we are ready to prove \cref{theorem:nopositivesubSIMs}. \emph{It is essentially the only ingredient of this paper whose proof does not work for more than $6$ items} (condition~\eqref{eq:SJAlambdasbound3}, specifically). In a way it demonstrates the maximality of the deficiency of the particular critical SIM-bodies $\Lambda(\lambda_1,\dots,\lambda_r)$, in the sense that they cannot contain subsets that have greater deficiency than themselves.

\begin{theoremnonum}[\cref{theorem:nopositivesubSIMs}]
  For up to $m\leq 6$, no SIM-body $\Lambda(\lambda_1^{(m)},\ldots,\lambda_{r}^{(m)})$ corresponding to a nonempty subdomain $U_{[r]}^{(m)}$
  contains positive 1-deficiency sub-bodies.
\end{theoremnonum}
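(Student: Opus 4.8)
Fix $m\le 6$ and $r\le m$ with $U^{(m)}_{[r]}\ne\emptyset$, write $k=\frac1{m+1}$ and $\Lambda=\Lambda(\lambda_1^{(m)},\dots,\lambda_r^{(m)})$, so that $\delta_1(\Lambda)=0$ by Lemma~\ref{lemma:SIMpositivedef}. Suppose for contradiction that some sub-body of $\Lambda$ has strictly positive $1$-deficiency; then $\max\{\delta_1(B):B\subseteq\Lambda\}>0$, and by Lemma~\ref{th:wlogsymmetric} (applied to $\Lambda$, which is downwards closed and symmetric) this maximum is attained by a maximum-volume, maximum-deficiency sub-body $A\subsetneq\Lambda$ which is downwards closed and symmetric. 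Since $\Lambda$ is a convex symmetric polytope it is p-closed, so by Lemma~\ref{th:pclosure} every slice of $A$ --- including $A$ itself --- is p-closed, and by Lemma~\ref{th:counterslices} every such slice additionally has nonnegative $1$-deficiency and is of maximum deficiency inside the corresponding slice of $\Lambda$ (which, for the extremal slice and for the one-coordinate projection, is again a SIM-body by Properties~\ref{prop:SIM3} and~\ref{prop:SIM6} of Lemma~\ref{lemma:SIMbodiesprops2}). This self-similarity is what lets the argument recurse on the dimension~$r$; the base case $r=1$ is immediate, since $\card{A}\le\card{\Lambda(\lambda_1^{(m)})}=\lambda_1^{(m)}=1$ while the lone projection of a nonempty $A$ equals $1$.

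Next I would extract the rigid skeleton forced on such a maximum-deficiency $A$. By Lemma~\ref{lemma:sizeboundcounter} (with $k=1$) its width obeys $r\le w(A)\le w(\Lambda)=\lambda^{(m)}_r$, and by Lemma~\ref{th:counterchainwidths} (with $k=1$) $A$ contains the staircase point $(1,2,\dots,r-1,w(A))$; combined with downwards closure, p-closure, and Property~\ref{prop:SIM2} of Lemma~\ref{lemma:SIMbodiesprops2} this forces $\Lambda(1,2,\dots,r-1,w(A))\subseteq A$, and the same reasoning inside every slice forces the analogous lower-order staircase into each slice of $A$. Using symmetry and downwards closure to write $\delta_1(A)=\card{A}-r\card{\slice{A}{\{r\}}{0}}$ with $\card{A}=\int_{0}^{w(A)}\card{\slice{A}{\{r\}}{t}}\,dt$, the whole problem collapses to proving the scalar inequality $\int_{0}^{w(A)}\card{\slice{A}{\{r\}}{t}}\,dt\le r\card{\slice{A}{\{r\}}{0}}$ for every such $A$. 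For $m=2$ one can in fact push $A$ all the way to a box, i.e.\ a SIM-body, and verify this by a direct computation; in general I would instead bound the two sides \emph{against each other}: the left side is controlled via $\int_0^{\lambda^{(m)}_r}\card{\slice{\Lambda}{\{r\}}{t}}\,dt=\card{\Lambda}=r\card{\Lambda(\lambda^{(m)}_2,\dots,\lambda^{(m)}_r)}$, while the base projection $\slice{A}{\{r\}}{0}$ is pinned from below by the staircase containment $\slice{A}{\{r\}}{0}\supseteq\Lambda(2,\dots,r-1,w(A))$, by the recursion applied to the lower-order maximum-deficiency slices, by supermodularity of $\delta$ (Lemma~\ref{th:deficienciesunioninters}) against scaled copies of the staircase, and by the Loomis--Whitney inequality (Lemma~\ref{th:loomis}) on the $(r-1)$-dimensional slices. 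Carrying these estimates through reduces the desired inequality to a finite system of polynomial inequalities in the algebraic numbers $\lambda^{(m)}_1,\dots,\lambda^{(m)}_m$ --- this is condition~\eqref{eq:SJAlambdasbound3} --- which one checks holds for every $r\le m\le 6$, the distinct sub-cases $m=5$ and $m=6$ being absorbed uniformly (there $\lambda^{(m)}_{m-1}$ has been raised to $m+1$ by normalization, i.e.\ SJA never sells the bundle of $m-1$ items). This yields $\delta_1(A)\le 0$, the desired contradiction.

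The heart of the difficulty --- and the one and only reason the proof stops at $m=6$ --- is this last reduction. The crude bounds are useless in isolation, because $\card{A}$ can be nearly as large as $\card{\Lambda}$ while the base projection $\slice{A}{\{r\}}{0}$ is much smaller, so one is forced to use the fact that $A$ and its projections are constrained \emph{simultaneously} (p-closure of all slices, maximality of every slice's deficiency, Loomis--Whitney) in order to make the volume and the base projection shrink together, so that the decisive inequality degenerates exactly at $\Lambda(\lambda^{(m)}_1,\dots,\lambda^{(m)}_r)$ and nowhere strictly beyond it. Written out in the $\lambda^{(m)}_r$'s, that inequality holds for all $r\le m\le 6$ but fails at $m=7$, which is precisely what a single unified computation (rather than a new argument per dimension) is able to capture.
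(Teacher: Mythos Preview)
Your sketch assembles the right toolkit (symmetrization via Lemma~\ref{th:wlogsymmetric}, p-closure via Lemma~\ref{th:pclosure}, the staircase point from Lemma~\ref{th:counterchainwidths}, induction on $r$) but the inductive step has a genuine gap: you never force the width of $A$ to be maximal. From Lemma~\ref{lemma:sizeboundcounter} you only get $w(A)\geq r$, not $w(A)=\lambda_r^{(m)}$, and with this weaker bound your scalar inequality $\int_0^{w(A)}\card{\slice{A}{\{r\}}{t}}\,dt\le r\,\card{\slice{A}{\{r\}}{0}}$ cannot be closed by the two envelope bounds you name. Bounding the left side by $\card{\Lambda}$ and the right side from below by $\card{\Lambda(2,\dots,r-1,w(A))}$ gives two inequalities that are only \emph{simultaneously} tight at $A=\Lambda$, and you offer no mechanism that forces them to degenerate together; the invocations of supermodularity and Loomis--Whitney do not supply one. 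The paper's proof avoids this entirely by arguing \emph{structurally} that $A=\Lambda$, in two steps that your sketch is missing:
\begin{enumerate}
\item If the extremal slice $\slice{A}{\{1\}}{\lambda_r^{(m)}}$ were empty, one can \emph{insert} there a zero-deficiency SIM-body $\Lambda(\lambda_1^{(m')},\dots,\lambda_{r-1}^{(m')})$ (with a carefully chosen $m'$ to guarantee zero deficiency), check via~\eqref{eq:SJAlambdasbound3} that it fits, and check via the staircase and p-closure that it creates no new projection along coordinate~$1$. This strictly increases volume without decreasing deficiency, contradicting maximality; hence $w(A)=\lambda_r^{(m)}$.
\item Then one \emph{replaces} the extremal slice $\slice{A}{\{1\}}{\lambda_r^{(m)}}$ by the full slice $\Lambda(\lambda_1^{(m)},\dots,\lambda_{r-1}^{(m)})$; the induction hypothesis says this does not lower the slice's deficiency, and the staircase plus p-closure again shows no new coordinate-$1$ projections. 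Maximality of volume now forces the extremal slice of $A$ to be full, whence $(\lambda_1^{(m)},\dots,\lambda_r^{(m)})\in A$ and p-closure gives $A=\Lambda$.
\end{enumerate}
Condition~\eqref{eq:SJAlambdasbound3} enters precisely to make these two slice manipulations legal (the inserted body fits in $\Lambda$; the replacement creates no new projections), not as the output of a volume comparison. Your final paragraph correctly identifies that this is where the $m\le 6$ restriction bites, but the reduction you describe (``a finite system of polynomial inequalities'') is not what is actually checked: it is the two pointwise bounds $\lambda_j^{(m')}\le\lambda_j^{(m)}$ and $\lambda_j^{(m)}\le j+1$.
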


\begin{proof}
We will prove the stronger statement that for all $r\leq m\leq 6$ no SIM-body $\Lambda(\lambda_1^{(m)},\ldots,\lambda_{r}^{(m)})$ contains a sub-body with nonnegative $1$-deficiency greater than its own, i.e.,
\begin{equation}
\label{eq:monotnicitysubsetSIM}
\emptyset\neq A\subseteq \Lambda(\lambda_1^{(m)},\ldots,\lambda_{r}^{(m)})\;\; \land \;\; \delta_1(A)\geq 0\quad\then\quad\delta_1(A)\leq\delta_1(\Lambda(\lambda_1^{(m)},\ldots,\lambda_{r}^{(m)}))
\end{equation}
This is enough to establish the theorem, because of \cref
{lemma:SIMpositivedef}. We will use induction on $r$. At the basis, whenever $r=1$, for any number of items $m$ the SIM-body is just the line segment $\Lambda(\lambda_1^{(m)})=[0,\lambda_1^{(m)}]$ and it is easy to see that every (nonempty) subset of it will have smaller volume but the same projection, resulting in smaller deficiency.

Moving on to the inductive step,  for simplicity denote $\Lambda=\Lambda(\lambda_1^{(m)},\ldots,\lambda_{r}^{(m)})$ and let $A\subseteq\Lambda$ be a maximum volume sub-body of maximum nonnegative deficiency within $\Lambda$. Without loss (by \cref{th:wlogsymmetric}) $A$ can be assumed to be symmetric and downwards closed. By \cref{th:pclosure}, this tells us that every slice of it must be p-closed (since $A$ is within $\Lambda$ which is a SIM-body and thus p-closed). We will prove that $A=\Lambda$ which is enough to establish~\eqref{eq:monotnicitysubsetSIM}. 

We start by showing that the outmost $(r-1)$-dimensional slice of $A$, namely $\slice{A}{\ssets{1}}{w(\Lambda)}$, cannot be empty. Notice that, by \hyperref[prop:SIM1]{Property~\ref*{prop:SIM1}} of \cref{lemma:SIMbodiesprops2}, $w(\Lambda)=\lambda_r^{(m)}$.  The choice of coordinate $1$ here is arbitrary; due to symmetry any slice $\slice{A}{\ssets{j}}{w(\Lambda)}$ with $j\in{[r]}$ would work in exactly the same way. If this slice was empty, we could add in this free space of $A$  (an $\varepsilon$-neighborhood of) the $(r-1)$-dimensional SIM-body $B$ defined by 
\begin{equation}
\label{eq:m_prime_def}
B= \Lambda(\lambda_1^{(m')},\dots,\lambda_{r-1}^{(m')})
\quad\quad\text{where}\;\;
m'=
\begin{cases}
r-1, &\text{if}\; U_{[r-1]}^{(m)}=\emptyset,\\
m, &\text{otherwise}.
\end{cases}
\end{equation}
Observe here, that by taking into consideration the values of the $\lambda_j^{
(m)}$ parameters of the SJA mechanism (see \cref
{eq:lambdas_def1,eq:lambdas_def2}) we can see that the following
properties are satisfied for all $r\leq m\leq 6$:
\begin{equation}
\label{eq:SJAlambdasbound3}
\lambda_{j}^{(m')}\leq\lambda_{j}^{(m)}\quad\text{and}\quad \lambda_{j}^{(m')}\leq j+1,
\quad\text{for all}\;\; j\in[r-1]
\end{equation}
and
\begin{equation}
\label{eq:SJAlambdasbound3_1}
\lambda_{j}^{(m)}\leq j+1,
\quad\text{for all}\;\; j\in[r-2].
\end{equation}
In particular, for $m=r=6$, notice that although 
$\lambda^{(m)}_{r-1}=\lambda^{(6)}_5\approx 6.3613>6=(r-1)+1$ 
(and that is why Property~\eqref{eq:SJAlambdasbound3_1} above cannot be extended
to $j=r-1$), 
it's still the case\footnote{Due to the fact that $U^{(6)}_{[5]}=\emptyset$ (see
\eqref{eq:special_empty_allocations_5_6}) and the definition of $m'$
in \eqref{eq:m_prime_def}.} that
$\lambda^{(m')}_{r-1}=\lambda^{(r-1)}_{r-1}=\lambda^{
(5)}_5=6=(r-1)+1$ and so \eqref{eq:SJAlambdasbound3} holds.
So, it must be that
$$
B\subseteq \Lambda(\lambda_1^{(m)},\dots,\lambda_{r-1}^{(m)})=\slice{\Lambda}{\ssets{1}}{\lambda_{r}^{(m)}},
$$
the first inclusion being a result of~\eqref{eq:SJAlambdasbound3} and the last equality being from \hyperref[prop:SIM3]{Property~\ref*{prop:SIM3}} of \cref{lemma:SIMbodiesprops2}.  This means that $B$ indeed fits in the exterior space $\Lambda$ at distance $x_1=\lambda_r^{(m)}$, which is exactly where we put it.

We will now show that this addition caused no decrease at the $1$-deficiency of $A$, which would contradict the maximality of the volume of $A$. Equivalently, we need to show that the increase we caused in the volume by extending $A$ was at least equal to the increase in the total volume of its projections. First, we show that no new projections were created with respect to coordinate $1$, i.e., $B$ was already included in $A_{[r]\setminus\ssets{1}}=\slice{A}{\ssets{1}}{0}$. Indeed, it is

$$B=\dclosure(\chull(\permuts(\lambda^{(m')}_1,\dots,\lambda^{(m')}_{r-1})))\subseteq \dclosure(\chull(\permuts(2,\dots,r)))\subseteq \slice{A}{\ssets{1}}{0}.$$ 
The first equality comes from \hyperref[prop:SIM2]{Property~\ref*{prop:SIM2}} of the SIM-bodies in \cref{lemma:SIMbodiesprops2}, the second inclusion is from~\eqref{eq:SJAlambdasbound3}, and the last inclusion is by \cref{th:counterchainwidths} and the p-closure of $ \slice{A}{\ssets{1}}{0}$. What is left to show is that the sum of the new projections created with respect to the remaining coordinates $[2...r]$ was at most equal to the increase in the volume. But this comes directly from the fact that the slice $B$ we added has zero $1$-deficiency: it is a SIM-body corresponding to a subdomain $U_{[r-1]}^{(m')}\neq\emptyset$ (see \cref{lemma:SIMpositivedef}).

So, in the following we can indeed assume that body $A\subseteq\Lambda$ is of maximum width 
$w(A)=\lambda_r^{(m)}.$
Then we will show that, at $x_1=w(A)$, $A$ must in fact include the \emph{entire} corresponding slice of $\Lambda$. This slice is $\slice{\Lambda}{\ssets{1}}{\lambda_{r}^{m}}=\Lambda(\lambda_1^{(m)},\dots,\lambda_{r-1}^{(m)})$, so that would mean that the extreme point $(\lambda_1^{(m)},\dots,\lambda_{r-1}^{(m)},\lambda_r^{(m)})$ is in $A$, and thus by p-closure (\cref{th:pclosure}) the body $\dclosure(\chull(\permuts(\lambda_1^{(m)},\dots,\lambda_r^{(m)})))$ must be included within $A$. But from \hyperref[prop:SIM2]{Property~\ref*{prop:SIM2}} of \cref{lemma:SIMbodiesprops2} this body is exactly the entire external body $\Lambda$, which concludes the proof. So let's show that indeed $\slice{A}{\ssets{1}}{\lambda_{r}^{m}}=\slice{\Lambda}{\ssets{1}}{\lambda_{r}^{m}}$. It is enough to show that removing this slice of $A$ and replacing it with the full slice of $\Lambda$ would result in a non-decrease of the $1$-deficiency: that would contradict the maximality of the volume of $A$.

First, notice that $\slice{A}{\ssets{1}}{\lambda_{r}^{m}}$ is within $\slice{\Lambda}{\ssets{1}}{\lambda_{r}^{m}}$, where $\slice{\Lambda}{\ssets{1}}{\lambda_{r}^{m}}$ is the SIM-body $\Lambda(\lambda_1^{(m},\dots,\lambda_{r-1}^{(m)})$ and also slice $\slice{A}{\ssets{1}}{\lambda_{r}^{m}}$ must have nonnegative deficiency (by \cref{th:counterslices}). So, by the induction hypothesis it must be that the full slice $\slice{\Lambda}{\ssets{1}}{\lambda_{r}^{m}}$ has at least the deficiency of the slice $\slice{A}{1}{\lambda_{r}^{m}}$ it replaces. That means that, taking into consideration only projections in the directions $[2...r]$, the overall change in the deficiency is indeed nonnegative. So, to conclude the proof it is enough to show that no new projections with respect to coordinate $1$ are created by this replacement, i.e.~that $\Lambda(\lambda_1^{(m},\dots,\lambda_{r-1}^{(m)})$ was already included in $A_{[r]\setminus\ssets{1}}=\slice{A}{\ssets{1}}{0}$. Indeed,
\begin{align*}
\Lambda(\lambda_1^{(m},\dots,\lambda_{r-1}^{(m)}) &=\dclosure(\chull(\permuts(\lambda^{(m)}_1,\dots,\lambda^{(m)}_{r-1})))\\
		&\subseteq \dclosure(\chull(\permuts(\lambda^{(m)}_1,\dots,\lambda^{(m)}_{r-2},\lambda^{(m)}_{r})))\\
		&\subseteq \dclosure(\chull(\permuts(2,\dots,r-1,w(A)))),
\end{align*}
by~\eqref{eq:SJAlambdasbound3_1} and the fact that $w(A)=\lambda_r^{(m)}$,
which concludes the proof since slice $\slice{A}{\ssets{1}}{0}$ is p-closed and $(2,\dots,r-1,w(A))$ belongs to it, because $(1,2,\dots,r-1,w(A))$ belongs to $A$ by \cref{th:counterchainwidths}.
\end{proof}

 We now present our main tool to prove that SJA is optimal. It utilizes the fact that the allocation space of SJA has no positive deficiency subsets in a combinatorial way.

\begin{theoremnonum}[\cref{theorem:weak-deficiency}]
If for every nonempty subdomain $U^{(m)}_{J}$ of SJA the corresponding SIM-body $\Lambda(\lambda_1^{(m)},\dots,\lambda_{\card{J}}^{(m)})$ contains no sub-bodies of positive $1$-deficiency, then SJA is optimal. 
\end{theoremnonum}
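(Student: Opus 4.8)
The plan is to exhibit a primal feasible point for Program~\eqref{eq:totalrevenue} whose value is the revenue of SJA, together with dual feasible solutions of Program~\eqref{eq:dual} whose values converge to it, and to conclude by Weak Duality (Lemma~\ref{lemma:weakdualitymany}) and Complementarity (Lemma~\ref{lemma:complementaritymany}). The primal point is the SJA utility $u(\vecc x)=\max_{J\subseteq[m]}\bigl(\sum_{j\in J}x_j-p_{\card{J}}^{(m)}\bigr)$: a maximum of affine functions with $0/1$ slopes, hence convex, with $\nabla u$ the indicator of the winning bundle (so all gradient constraints hold) and $u\ge0$ since the empty bundle contributes $0$; by Theorem~\ref{thm:truthfulconvex} and~\eqref{eq:revenue} its primal value equals the expected revenue of SJA. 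Reading off the complementarity conditions of Lemma~\ref{lemma:complementaritymany} in the uniform setting ($f\equiv1$, $L_j=0$, $H_j=1$, no $s_j$), the dual solution we must build consists of functions $z_j\ge0$ with $z_j(0,\vecc x_{-j})=0$ and $z_j(1,\vecc x_{-j})=1$, each vanishing outside the set $\overline U_{\{j\}}:=\bigcup_{J\ni j}U_J^{(m)}$ where item $j$ is sold, and with $\sum_{j=1}^m\partial z_j/\partial x_j\le m+1$ everywhere and $=m+1$ a.e.\ on $\overline U_\emptyset$.

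Building such a $(z_1,\dots,z_m)$ is exactly the feasibility of a continuous ``bipartite $b$-matching'' between $\overline U_\emptyset$ carried with multiplicity $m+1$ and the $m$ facets $\{x_j=0\}$ each carried with multiplicity $1$, where $\vecc x$ is adjacent to the facet point $\vecc x_{-j}$ iff item $j$ is sold at $\vecc x$. The two sides have equal mass, $(m+1)\card{\overline U_\emptyset}=(m+1)\cdot\tfrac{m}{m+1}=m=\sum_j\card{(\overline U_{\{j\}})_{[m]\setminus\{j\}}}$ (using $\card{\overline U_\emptyset}=\tfrac{m}{m+1}$ from Definition~\ref{def:SJAmechanism} and that item $j$ is sold whenever $x_j=1$), so by a max-flow/min-cut argument feasibility is equivalent to the Hall-type inequality $(m+1)\card{S}\le\sum_{j=1}^m\card{(S\inters\overline U_{\{j\}})_{[m]\setminus\{j\}}}$ for every body $S\subseteq\overline U_\emptyset$.

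To verify this inequality from the hypothesis I would use two structural facts about SJA. (i) Raising $x_j$ increases, by the same amount, the value of every bundle containing $j$, hence never reverses their order and makes a bundle containing $j$ stay the winner once it becomes one; so along any line in direction $j$ the portion where item $j$ is sold lies inside a single cell $U_J^{(m)}$, which forces the projections $\{(U_J^{(m)})_{[m]\setminus\{j\}}\}_{J\ni j}$ to tile the facet $\{x_j=0\}$ up to a null set. (ii) By Lemma~\ref{lemma:SIMallocspaceisom} every slice $\slice{U_J^{(m)}}{-J}{\vecc t}$ is isomorphic to $k\cdot\Lambda(\lambda_1^{(m)},\dots,\lambda_{\card{J}}^{(m)})$ with $k=\tfrac1{m+1}$; combined with the hypothesis and the scaling identity $\delta_{qk}(q\cdot\Lambda)=q^{\card{J}}\delta_k(\Lambda)$ (Property~\ref{prop:SIM4} of Lemma~\ref{lemma:SIMbodiesprops2}) it gives that no sub-body of such a slice has positive $\tfrac1{m+1}$-deficiency, and Fubini over $\vecc t$ then yields $(m+1)\card{S_J}\le\sum_{j\in J}\card{(S_J)_{[m]\setminus\{j\}}}$ for every $S_J\subseteq U_J^{(m)}$. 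Decomposing $S=\bigcup_J S_J$ with $S_J=S\inters U_J^{(m)}$ and combining (i)--(ii):
\[
(m+1)\card{S}=\sum_J(m+1)\card{S_J}\le\sum_J\sum_{j\in J}\card{(S_J)_{[m]\setminus\{j\}}}=\sum_j\sum_{J\ni j}\card{(S_J)_{[m]\setminus\{j\}}}=\sum_j\card{(S\inters\overline U_{\{j\}})_{[m]\setminus\{j\}}},
\]
which is precisely Hall's condition.

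Hall's theorem fails for infinite graphs, and a continuous coupling produced by a Strassen-type theorem would control only the column marginals, not deliver an absolutely continuous object respecting the slope cap; so the plan is to discretize, solve a finite matching, and pass to the limit via the approximate version of complementarity. On a grid of mesh $1/N$, take as left vertices the cells contained in $\overline U_\emptyset$ (weight $(m+1)N^{-m}$) and as right vertices the facet cells (weight $N^{-(m-1)}$), join a cell $C\subseteq U_J^{(m)}$ to the facet cell it projects to in each direction $j\in J$, and leave the $O(N^{m-1})$ cells straddling two allocation regions unmatched. Since the $U_J^{(m)}$ have piecewise-linear boundaries (Lemma~\ref{lemma:allocsubspacealgebra}), cell-unions approximate bodies well enough that the continuous Hall inequality implies Hall's condition for this finite graph up to an $O(1/N)$ defect, hence a perfect fractional matching exists. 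Interpreting the matched mass in direction $j$ as $\partial z_j^{(N)}/\partial x_j$ and integrating from $x_j=0$ gives $z_j^{(N)}$ that is $O(1/N)$-close to dual feasible and meets the conditions of Lemma~\ref{lemma:complementaritymany} with $\varepsilon=O(1/N)$ (fact (i) makes $z_j^{(N)}>0$ entail $\partial u/\partial x_j=1$, and $\sum_j\partial z_j^{(N)}/\partial x_j=m+1$ on matched cells); Lemma~\ref{lemma:complementaritymany} then bounds the primal--dual gap by $O(1/N)$, and letting $N\to\infty$ together with Weak Duality and primal feasibility of $u$ shows SJA is optimal. The step I expect to be the real work is the bookkeeping here: choosing the finite graph so its Hall condition really follows from the continuous one with only $O(1/N)$ loss, and reassembling the fractional matching into $z_j^{(N)}$ that \emph{simultaneously} respect the slope cap, the boundary values $z_j(0,\cdot)=0$ and $z_j(1,\cdot)\ge1$, and the support (complementarity) requirement --- both relying on the piecewise-linearity of the $U_J^{(m)}$ and on structural fact (i). The more conceptual ingredient, turning dual feasibility into the deficiency-free condition on the SIM-body slices, is exactly what the hypothesis provides.
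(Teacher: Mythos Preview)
Your plan is essentially the paper's proof: primal = SJA utility, dual built from a bipartite matching between the selling region and the facets, Hall's condition derived from the no-positive-deficiency hypothesis via the slice decomposition of Lemma~\ref{lemma:SIMallocspaceisom}, then discretize and pass to the limit through approximate complementarity. Your structural fact~(i) and the Fubini reduction to per-$U_J$ (indeed per-slice) Hall inequalities are exactly the mechanism behind the paper's use of~\eqref{eq:nopositivecountercolor}--\eqref{eq:nopositivecountercolorinv}, and the tiling of each facet by $\{(U_J)_{[m]\setminus\{j\}}\}_{J\ni j}$ is what the paper proves in the appendix (proof of Lemma~\ref{eq:decomposedeficiencies}).

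The one place your bookkeeping diverges is precisely the step you flagged as ``the real work'', and it matters: Lemma~\ref{lemma:complementaritymany} assumes an \emph{exactly} feasible dual, so ``$O(1/N)$-close to dual feasible'' does not plug in directly; and once you leave straddling cells unmatched, the two sides no longer have equal mass, so one-sided Hall plus equal mass no longer yields a perfect matching. The paper resolves both issues at once by (a) taking the \emph{cover} $\overline U_\emptyset^{*}$ rather than the cells contained in $\overline U_\emptyset$, (b) padding the boundary with an extra $O(\varepsilon')$-thick strip $B^{*}$, (c) proving Hall in \emph{both} directions (using the complement form~\eqref{eq:nopositivecountercolorinv} for the facet side), and (d) invoking Cantor--Bernstein to obtain a matching that saturates both $\overline U_\emptyset^{*}$ and the original boundary $B$. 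This produces a genuinely feasible dual with $z_j(1,\vecc x_{-j})\in[1,1+O(\varepsilon')]$ and pushes all the error into the complementarity slack, where Lemma~\ref{lemma:complementaritymany} can absorb it. That cover/two-sided-Hall/Cantor--Bernstein package is the missing ingredient in your sketch; everything else you wrote lines up with the paper.
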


\begin{proof}
The proof of \cref{theorem:weak-deficiency} is done via a
combinatorial detour to a discrete version of the problem, which is
interesting in its own right and highlights the connection of the dual program
with bipartite matchings. The nonpositive deficiencies property allows us to utilize Hall's marriage condition.
Let us denote by $I_j\equiv\ssets{(\vecc x_{-j},1)\fwh{\vecc x\in I^m}}$ the side on the boundary of the $I^m$ cube which is perpendicular to axis $j$, for $j\in[m]$.

We start by restricting the search for an appropriate feasible dual
solution to those functions $z_j(\vecc x)$ that have the following form:
\begin{quote}
  Fix some integer $N$ which is a multiple of $m+1$ and let
  ${\varepsilon}'=1/N$. We discretize the space by taking a fine grid
  partition of the hypercube $I^m$ into small hypercubes of side ${\varepsilon}'$
  and we require that inside each small hypercube the derivatives
  $\partial z_j(\vecc x)/\partial x_j$ are constant and take either value 0
  or value $m+1$.
\end{quote}
We must point out here that this discretization is used only in the analysis
and it is not part of the optimal selling mechanism which is given just by its
prices $p^{(m)}_r$. 

With the discretization, the combinatorial nature of the dual
solutions emerges: a dual solution is essentially a coloring of all the $\varepsilon'$-hypercubes of $I^m$
into colors $0, 1, \dots,m$. The interpretation of the coloring is
the following: the derivative $\partial z_j(\vecc x)/\partial x_j$ 
has a positive value $m+1$ if and only if the corresponding hypercube (at which $\vecc x$ belongs to) has color $j$, otherwise it is zero (i.e., $z_j(\vecc x)$ is constant with respect to the direction of the $j$-axis);
color $0$ is used exactly for the points where \emph{all} $z_j$ functions are
constant. A \emph{feasible} dual solution corresponds to a coloring in which
every line of hypercubes parallel to some axis, say axis $j$, contains at least
$N/(m+1)$ hypercubes of color $j$. To see this, notice that
function $z_j(\vecc x)$ must increase in a fraction of (at least) $1/(m+1)$
of those small hypercubes (because it starts at value $0$ and has to increase
to a value of at least $1$; see the dual constraints in \cref{th:weakdualityuniform}). 
\Cref{fig:2discretecoloring} illustrates such a coloring for $m=2$ items.

\begin{figure}
\centering
\includegraphics[width=0.4\textwidth]{./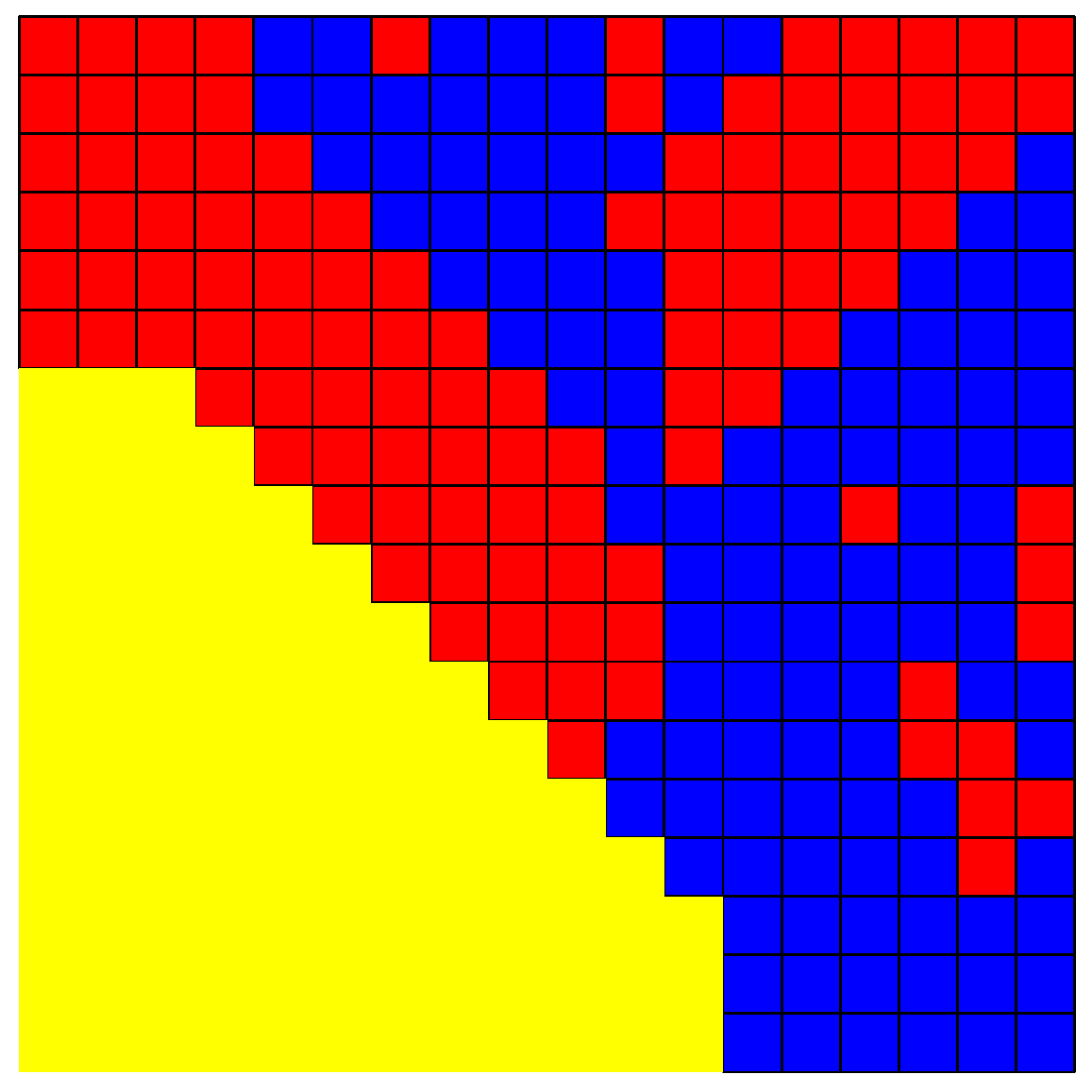}
~
\includegraphics[width=0.4\textwidth]{./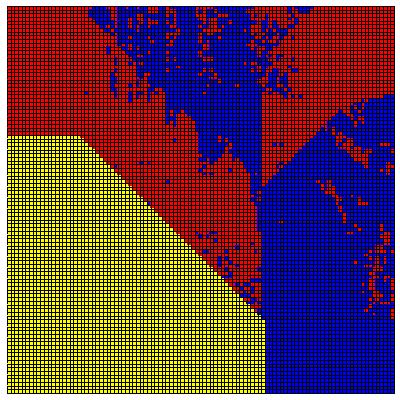}
\caption{\footnotesize Proper colorings of the allocation space $\overline U_{\emptyset}$ of the SJA mechanism for $m=2$ items and different discretization factors $N=18$ (left) and $N=105$ (right). Blue corresponds to the direction of the horizontal axis and red to the vertical axis. The zero region $U_{\emptyset}$ where no item is allocated (white region in \cref{fig:SJA2and3}) is colored in yellow. Notice how the entire region $U_{\ssets{1}}$ is colored blue and the entire $U_{\ssets{2}}$ red. The critical and technically involved part of the coloring for two items is the one of region $U_{\ssets{1,2}}$ where both items are allocated. Interpreting this in the realm of the dual program and the language of the proof of \cref{theorem:weak-deficiency}, blue is color $1$ and corresponds to the points where function $z_1$ increases with ``full'' derivative $m+1=3$ (with respect to the coordinate $x_1$)  while $z_2$ remains constant (with respect to coordinate $x_2$). Red is color $2$ and denotes the reverse situation where $z_2$ increases with derivative $3$ (with respect to coordinate $x_2$) and $z_1$ remains constant (with respect to variable $x_1$). Yellow is color $0$ where both $z_1$ and $z_2$ are constant.}
\label{fig:2discretecoloring}
\end{figure}

To formalize this let us discretize the unit cube $I^m$ in $\varepsilon'$-hypercubes 
$[(i_1-1)\cdot\varepsilon',i_1\cdot\varepsilon']\times\dots\times[(i_m-1)\cdot\varepsilon',i_m\cdot\varepsilon']$, where $i_j\in[N]$ for all $j\in[m]$ (see \cref{fig:coloring-graph}). To keep notation simple, we will sometimes identify hypercubes by their center points, i.e., refer to the $\varepsilon'$-hypercube $\vecc x$ instead of the cube $[x_1-\varepsilon'/2,x_1+\varepsilon'/2]\times\dots\times[x_m-\varepsilon'/2,x_m+\varepsilon'/2]$. In that way, $I^m$ is essentially an $m$-dimensional lattice of points 
$$\left((i_1-1)\cdot\varepsilon'+\varepsilon'/2,\dots,(i_m-1)\cdot\varepsilon'+\varepsilon'/2\right),\quad i_j\in[N],j\in[m].$$
Based on this, for any $S\subseteq I^m$ we will denote by $\discube{S}$ the set of lattice points in $S$.

Next, consider the subdomain $U_J$ where SJA sells exactly the items that are in $J\subseteq [m]$. For any one of these ``active'' coordinates $j\in J$ take $U_J$'s boundary at side $I_j$ of the unit cube and ``inflate'' it to have a width of $k=1/(m+1)$. Formally, for all $J\subseteq [m]$ and $j\in J$ define 
\begin{equation}
 B_{J,j}\equiv \sset{(t,\vecc x_{-j})\fwh{\vecc x\in U_J\;\land\;t\in[1,1+k]}}.
 \end{equation}
$B_{J,j}$ is isomorphic to $(U_J)_{[m]\setminus\ssets{j}}\times[0,k]$.
For any subset of items $J\subseteq[m]$ denote $B_J=\bigunion_{j\in J}B_{J,j}$ and $B=\bigunion_{J\subseteq[m]}B_J$ the entire external layer on all sides.

Notice that $\overline U_{\emptyset}$ cannot be perfectly discretized: the small hypercubes do not fit exactly inside ${\overline U_{\emptyset}}$ because its
boundaries are not rectilinear\footnote{The solution of partitioning
  the unit hypercube into small simplices instead of small
  hypercubes does not work either; although simplices have more
  appropriate boundaries, we cannot guarantee that there exists an 
  ${\varepsilon}'$ for which all the boundaries of ${\overline U_{\emptyset}}$ coincide with
  some boundaries of the small simplices.}. To fix this, we will take
a cover ${\overline U_{\emptyset}^{*}}$ of ${\overline U_{\emptyset}}$ which \emph{can} be partitioned into
${\varepsilon}'$-hypercubes. More precisely, define
${\overline U_{\emptyset}^{*}}$ to be the union of all ${\varepsilon}'$-hypercubes of $I^m$ that intersect ${\overline U_{\emptyset}}$. Finally, let's also extend the boundary region $B$ by adding on top of every boundary component $B_{J,j}$ a thin strip 
\begin{equation}
B_{J,j}^*=\ssets{(t,\vecc x_{-j})\fwh{\vecc x\in U_J\;\land\; t\in[1+k,1+k+g(m)\cdot\varepsilon']}},
\end{equation}
where $g(m)=\lceil\sqrt{m}+1\rceil$, 
and extend notation in the obvious way:
$B^*_J=\bigunion_{j\in J}B^*_{J,j}$ and $B^*=\bigunion_jB^*_j$.

Now it's time to fully reveal the combinatorial structure of our construction by defining a bipartite graph $G(\discube{\overline U_{\emptyset}^{*}}\union\discube{B\union B^*},E)$, which has as nodes the $\varepsilon'$-hypercubes of the cover $\overline U_{\emptyset}^{*}$ and the boundary $B\union B^*$ (see \cref{fig:coloring-graph}). 
Intuitively, the edges $E$ will connect all lattice points of a subdomain $U_J$ with the nodes of its corresponding boundary $B_J\union B^*_{J}$ that agree on $m-1$ coordinates; each $U_J$ is projected onto the sides $I_j$ of the cube that correspond to active items $j\in J$. To be precise, for any $\vecc x\in \discube{\overline U_{\emptyset}^{*}}$ and $\vecc y\in \discube{B\union B^*}$,
$$
(\vecc x,\vecc y)\in E \ifif \vecc x_{-j}=\vecc y_{-j} \;\text{for some $j\in J$, $J\subseteq [m]$, with $\varepsilon'$-hypercube $\vecc x$ intersecting $U_J$}
$$
Another way to view this is that edges start from a node on a side $j$ of the external layer $B\union B^*$, are perpendicular to that side of the unit cube (i.e., parallel to axis $j$) and run towards its interior body $\overline U_{\emptyset}^{*}$, excluding the areas where $j$ is not sold.
\begin{figure}
\centering
\includegraphics[width=0.99\textwidth]{./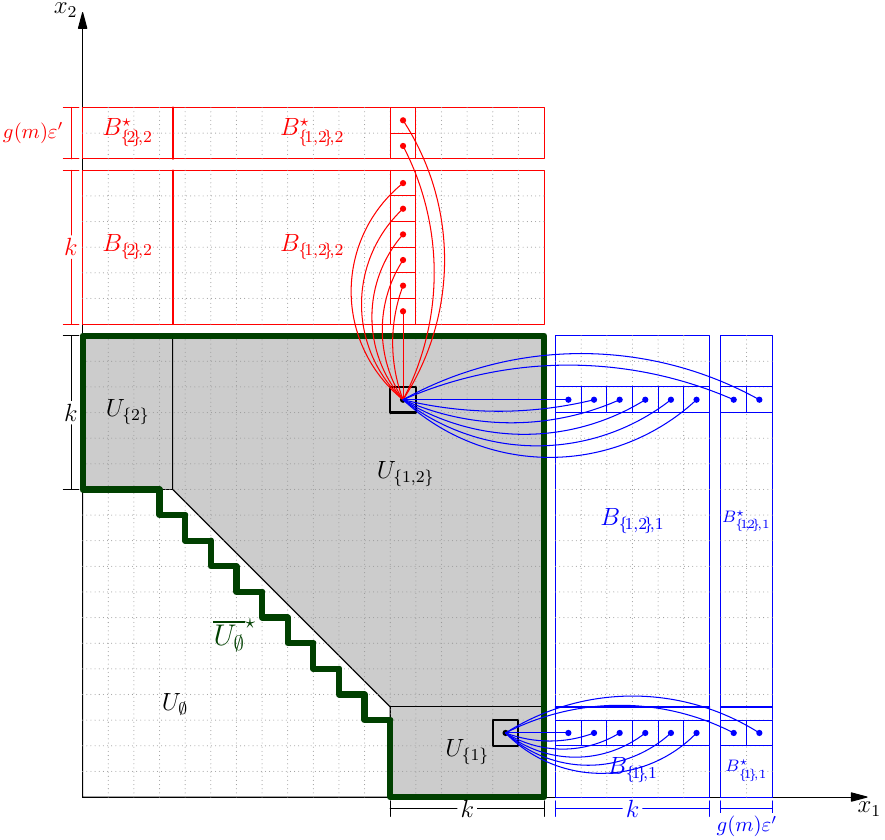}
\caption{\footnotesize The discretization of the allocation space and the structure of graph $G$ used in the proof of \cref{theorem:weak-deficiency}, for $m=2$ items. The space $\overline U_{\emptyset}$ where SJA sells at least one item (colored gray) does not properly align with the $\varepsilon'$-discretization grid so we have to take a cover $\overline U_{\emptyset}^*$ (outlined with the thick line, green in the color version of this paper). The boundaries $B_j\union B_j^{*}$ have width $k+g(m)\varepsilon'$. The one on the right (perpendicular to the vertical axis) consists of $\varepsilon'$-cubes holding color $1$ (blue at the color version of the paper) and the one at the top color $2$ (red). Edges run from every internal $\varepsilon'$-cube, vertically towards the red exterior and horizontally towards the blue exterior. Notice, however, how the cube within the allocation subspace $U_{\ssets{1}}$ has only horizontal (blue) edges running out of it, since it is not allowed to use color $2$ (red). That is due to the fact that item $2$ is not sold within $U_{\ssets{1}}$.}
\label{fig:coloring-graph}
\end{figure}

By this construction, \emph{a bipartite matching of graph $G$ that matches \emph
{completely} the initial boundary $\discube{B}$ corresponds to a proper coloring of the $\varepsilon'$-hypercubes of $I^m$}: an internal cube matched to a node in side $B_j$ is assigned color $j$ and all unmatched cubes are assigned color $0$; every line parallel to an axis $j\in [m]$ contains at least $k/\varepsilon '=N/(m+1)$ distinct hypercubes in the boundary $B_j$.

We will use standard graph-theoretic notation and for any set of nodes $X$, $N(X)$ will denote its set of neighbors, i.e., $N(X)=\sset{ y\fwh{(x,y)\in E\;\;\text{for some}\;\; x\in X}}$. Hall's condition tells us that in any bipartite graph $G=(X\union Y, E)$ there is a matching that completely matches $X$ if and only if
\begin{equation}
\card{S}\leq \card{N(S)}\quad\text{for all}\; S\subseteq X.\tag{Hall's condition}
\end{equation} 

What does the nonpositive $1$-deficiency property of all SIM-bodies $\Lambda(\lambda_1,\dots,\lambda_r)$, $r\leq m$, tell us about graph $G$? Remember (\cref{lemma:SIMallocspaceisom}) that these SIM-bodies correspond to slices $\slice{U_J}{-J}{\vecc t}$ of the allocation space, so (using also \hyperref[prop:SIM4]{Property~\ref*{prop:SIM4}} of \cref{lemma:SIMbodiesprops2}) for any $J\subseteq[m]$, $\vecc t\in\R_+^{m-\card{J}}$ and $S\subseteq \slice{U_J}{-J}{\vecc t}$:
\begin{equation}
\label{eq:nopositivecountercolor}
\card{S}\leq k\sum_{j\in J}\card{S_j},
\end{equation}
where $S_j\equiv S_{[m]\setminus\ssets{j}}$.
Using the fact that every such slice $\slice{U_J}{-J}{\vecc t}$ has zero
$k$-deficiency (\cref{lemma:zeroslicedefSJA}), if we take complements $\overline S=\slice{U_J}{-J}{\vecc t}\setminus S$ and $\overline S_j=\left(\slice{U_J}{-J}{\vecc t}\right)_{[m]\setminus\ssets{j}}\setminus S_j$ the above relation gives
\begin{equation}
\label{eq:nopositivecountercolorinv}
k\sum_{j\in J}\card{\overline S_j}\leq \card{\overline S}.
\end{equation}

First we will show that there is a matching on the bipartite graph $G$ we defined, which completely matches all nodes in $\discube{\overline U_\emptyset}$. By \eqref{eq:nopositivecountercolor} and the fact that every $(U_J)_{[m]\setminus\ssets{j}}\times[0,k]$ is isomorphic to $B_{J,j}$, Hall's theorem tells us that we can completely match $\discube{\slice{U_J}{-J}{\vecc t}}$ into $\discube{B_J}$. By the way we have constructed the edge set $E$, this directly means that there is a complete matching of $\discube{\overline U_\emptyset}$ into $\discube{B}$. 
So, to extend this into a complete matching of the cover $\discube{\overline
U_\emptyset^*}$, using this time additional points in the extended thin-stripe
boundary
$B^*$ at the other side of the bipartite graph $G$, 
it is enough to show that the extra lattice points in 
$\overline{U}_\emptyset^*\setminus \overline U_\emptyset$
of any line parallel to some axis $j$ are at most $g(m)$,  the number of
neighbors in $B^*$. Indeed, any point in $\overline U_{\emptyset}^{*}$ cannot have distance more than $\sqrt{m}\varepsilon'\leq g(m)\varepsilon'$ from a point in $\overline U_{\emptyset}$, because every $\varepsilon'$-hypercube of $\overline U_{\emptyset}^{*}$ intersects with $\overline U_{\emptyset}$ and the diameter of such a hypercube (with respect to the Euclidean metric) is exactly $\sqrt{m}\varepsilon'$.
We will now show that there is also a complete matching of $\discube{B}$ into $\discube{\overline U_\emptyset^*}$. By the way we constructed the edge set, it is enough to show that every slice $\discube{\slice{B_J}{-J}{\vecc t}}$ of the boundary can be completely matched into the corresponding internal slice $\discube{\slicesmall{\overline U_\emptyset^*}{-J}{\vecc t}}$. Fix some nonempty $J\subseteq [m]$ and $\vecc t\in\R_+^{m-\card{J}}$. By Hall's theorem it is enough to prove that, for any family of sets of $\ssets{T_j}_{j\in J}$ of lattice points $T_j\subseteq \discube{\slice{B_{J,j}}{-J}{\vecc t}}$, $\sum_{j\in J}\cards{T_j}\leq \cards{\bigunion_{j\in J} N(T_j)}$. We will prove the stronger $\sum_{j\in J}\cards{T_j}\leq \cards{\bigunion_{j\in J} N(T_j)\inters \slicesmall{\overline U_\emptyset}{-J}{\vecc t}}$; that is, we will just count neighbors in the initial set $\overline U_\emptyset$ and not the cover $\overline U^*_\emptyset$. The continuous analogue of this is to take $T_j$'s be subsets of $\slice{B_{J,j}}{-J}{\vecc t}$ and consider the natural extension of the neighbor function $N$ when  we now have a infinite graph of edges
$$
\sset{(\vecc x,\vecc y)\fwh{\vecc x\in \slice{U_J}{-J}{\vecc t}\;\land\;\vecc y\in\slice{B_J}{-J}{\vecc t}\;\land\; \vecc x_{-j}=\vecc y_{-j} \;\text{for some $j\in J$}}}
$$
Let $S=\slice{U_J}{-J}{\vecc t}\setminus \bigunion_{j\in J} N(T_j)$ be the set of points \emph{not} being neighbors of any node in $\bigcup_{j\in J}T_j$ of the boundary. Then by~\eqref{eq:nopositivecountercolorinv} it is enough to show that $\sum_{j\in J}\cards{T_j}\leq k\sum_{j\in J}\card{\overline S_j}$, where $S_j=S_{[m]\setminus\ssets{j}}$. Every point in the boundary $\slice{B_{J,j}}{-J}{\vecc t}$ that has neighbors in $\bigunion_{j\in J} N(T_j)$ projects (with respect to $j$) inside $\overline S_j$.
But, for any point $\vecc y$ in $T_j$ the only other points that can have the same projection with respect to coordinate $j$ are all points of the line segment of $\slice{B_{J,j}}{-J}{\vecc t}$  which is parallel to the $j$-axis and passes through $\vecc y$, and this segment has length $k$.

Combining the existence of the above two matchings,  a straightforward use of the classic Cantor--Bernstein theorem from Set Theory  ensures the existence of a matching in graph $G$ that \emph{completely} matches both $\discube{\overline U_{\emptyset}^{*}}$ and $\discube{B}$. But, as we discussed before, this means that $\overline U_{\emptyset}^{*}$ is properly colorable and thus this coloring induces a feasible dual solution. Let's denote this solution by $z_j(\vecc x)$, $j\in[m]$ and also let $u(\vecc x)$ be the primal solution given by SJA, i.e., $u$ is the utility function of the SJA mechanism. To prove the optimality of $u$, 
we will take advantage of the
\emph{approximate} complementarity:
we claim that this primal-dual pair of solutions satisfies the approximate
complementarity conditions in \cref{lemma:complementaritymany} for $\varepsilon=g(m) m(m+1)\cdot\varepsilon'$: 
\begin{align}
  u(\vecc x) \cdot  \left(m+1-\sum_{j \in [m]} \frac{\partial z_j(\vecc x)}{\partial x_{j}}\right)& \leq \varepsilon \label{eq:colormatchdul1}\\
 - u(0, \vecc x_{-j}) \cdot  z_j(0, \vecc x_{-j}) & \leq \varepsilon \label{eq:colormatchdul2}\displaybreak[2]\\
  u(1, \vecc x_{-j}) \cdot \left( z_j(1, \vecc x_{-j})
    -1 \right) & \leq \varepsilon\label{eq:colormatchdul3} \\
  z_j(\vecc x) \cdot \left( 1 - \frac{\partial
        u(\vecc x)}{\partial x_{j}} \right) & \leq \varepsilon. \label{eq:colormatchdul4} 
\end{align}
If that is true, then the proof of \cref{theorem:weak-deficiency} is complete, since by the approximate complementarity \cref{lemma:complementaritymany} the primal and dual objectives differ by at most $(3m+1)\varepsilon=(3m+1)g(m)m(m+1)\varepsilon'$, and if we take the limit of this as $\varepsilon'\to 0$, these values must be equal. So let's prove that \eqref{eq:colormatchdul1}--\eqref{eq:colormatchdul4} indeed hold.

Condition~\eqref{eq:colormatchdul2} is satisfied trivially, since both the primal and the dual variables are nonnegative. 
Regarding~\eqref{eq:colormatchdul3}, for any line parallel to some axis $j$ the length of its segment intersecting the boundary $B\union B^*$ (which is the one contributing the critical colors $j$ to that direction) is $k+g(m)\varepsilon'$. So, given that the derivative of $z_j(\vecc x)$ in sections colored with $j$ is $m+1$ we can upper-bound the value of $z_j(1,\vecc x_j)$ by $(k+g(m)\varepsilon')(m+1)=1+g(m)(m+1)\varepsilon'$. This means that $z_j(1,\vecc x_j)-1\leq g(m)(m+1)\varepsilon'$, and given the fact that the utility function has the property that $u(\vecc x)\leq m$ (because its derivatives are at most $1$ at every direction), we finally get the desired $u(1, \vecc x_{-j}) \cdot( z_j(1, \vecc x_{-j})-1 )\leq g(m)m(m+1)\varepsilon'=\varepsilon$.

For condition~\eqref{eq:colormatchdul1}, assume that $u(\vecc x)>0$ (otherwise it is satisfied). That means that SJA sells at least one item, thus $\vecc x\in{\overline U_{\emptyset}}\subseteq{\overline U_{\emptyset}^{*}}$; but ${\overline U_{\emptyset}^{*}}$ is completely matched, thus all points of ${\overline U_{\emptyset}^{*}}$
  are colored with some color in $[m]$ (not with color $0$); this is
  equivalent to the fact that some derivative of the $z_j$ functions is $m+1$ and all others are zero, meaning that the corresponding slack variable $m+1- \sum_{j\in
    [m]} \partial z_j(\vecc x)/\partial x_j$ is zero.

Finally, for condition~\eqref{eq:colormatchdul4}, fix some direction $j\in[m]$
and assume that $\partial u(\vecc x)/\partial x_{j}\neq 1$ (otherwise the condition is satisfied). SJA is deterministic, so it must be that $\partial u(\vecc x)/\partial x_{j}=0$, i.e., item $j$ is not allocated. 
That means that $\vecc x$ belongs to a subdomain $U_J$  with $j\notin J$, and the same is true for all points before it parallel to axis $j$ (that is, all points $(t,\vecc x_{-j})$ with $t\in[0,x_j]$). Thus, by the way that the edge set $E$ of the graph $G$ was defined, $\vecc x$'s $\varepsilon'$-hypercube, as well as all hypercubes before it and parallel to axis $j$, cannot have been colored with color $j$ unless they happen to intersect with a neighboring subdomain $U_{J^*}$ with $j\in J^*$. But it is a simple geometric argument to see that point $(x_j-\varepsilon'm,\vecc x_{-j})$ is at distance at least $\frac{\varepsilon' m}{\sqrt{\card{J^*}}}\geq \frac{\varepsilon' m}{\sqrt{m}}=\sqrt{m}\varepsilon'$ below the boundary $\sum_{j\in J^*}x_j=p_{\card{J^*}}$ of $U_{J^*}$ (since we already know that $\vecc x$ is below it), which is exactly the diameter of the $\varepsilon$-hypercubes. So, at most $m$ such hypercubes below $\vecc x$'s could intersect with $U_{J^*}$, and thus be colored with color $j$, meaning that 
$z_j(\vecc x)$ cannot have increased more than $(m+1)\varepsilon' \cdot (m+1)$ from zero. This proves that indeed
$
z_j(\vecc x)( 1 - \partial u(\vecc x)/\partial x_{j})= z_j(\vecc x)\leq (m+1)^2\varepsilon'\leq \varepsilon.
$
\end{proof}

\section{Conclusion}
\label{sec:conclusion}
Our main goal in this paper was to design revenue maximizing auctions when
\emph{many} heterogeneous items are to be sold to a buyer whose values are
independently, 
uniformly distributed over the unit interval $[0,1]$. 
This is the ``canonical'' multidimensional monopolist problem in Economics that
still remains unsolved, four decades after the seminal work of \citet
{Myerson:1981aa} for the special case of a single item.
We design and analyze a natural mechanism (SJA), and prove its optimality for up to 6 items. Interestingly, it turns out that the optimal mechanism is deterministic.
Prior to our work only
solutions for 2 or 3 items were known~\citep{Manelli:2006vn,Pavlov:2011fk}, and
they were achieved mostly
through a direct, case-specific optimization approach rather than a clear,
unifying viewpoint that could help us towards a more fundamental understanding
of multi-item auction settings.

The cornerstone of our approach is the use of duality and
complementarity. Towards this end, we present a much more general weak duality
theory framework that can be formulated for many buyers and arbitrary joint
distributions which we hope will prove useful in future attacks on
generalizations of our problem: After the conference version of this paper, our
duality framework has already successfully been applied to deal with rather
general classes of distributions~\citep{gk2015,Fiat2016a} and in the domain of
approximately optimal auctions as well~\citep{g2014}.
Our solution illuminates the rich geometric ideas underlying
the problem, and in the process we formally develop some novel geometric
machinery that might be of independent interest.

The most obvious direction for future work is validating the conjecture that SJA
is indeed optimal for any number of items and
not just up to 6.
Another fundamental open problem is that of finding exact optimal
solutions, like the ones provided in this paper, for more than one bidder: for
example,
the seemingly simple case of two items and two buyers with i.i.d.\ uniform
valuations over $[0,1]$ is still wide open. Is determinism
still powerful enough for revenue maximization when multiple bidders are
involved? And if
not, how well can deterministic auctions that generalize SJA approximate the
optimal revenue?

\paragraph{Acknowledgements:} We thank Anna Karlin, Amos Fiat, Sergiu Hart, and Constantinos Daskalakis for useful discussions. We are also very grateful to the anonymous reviewers for their careful and thorough reading of our manuscript and for their valuable feedback.

\nocite{McAfee:1988nx}
\nocite{Wang:2013ab}
\nocite{Thanassoulis:2004ys}
\nocite{Vohra:2011fk}
\nocite{g2014}
\bibliographystyle{abbrvnat} 
\bibliography{DualityUniform}

\appendix

\section{Full Proof of \cref{lemma:zeroslicedefSJA}}
\label{append:fullproofzeroslicedef}
Recall the definition of body $V(p_1^{(m)},\dots,p_r^{(m)})$ in~\eqref{eq:disjunctionspace}. By the definition of SJA in~\eqref{eq:sja}, the volume of this body must be equal to $rk$. Then, as we discussed in the proof sketch of \cref{lemma:zeroslicedefSJA} in \cref{sec:proof-no-positive-deficiency}, this translates to its deficiency being zero:
\begin{equation}
\label{eq:SJAequivdef2}
\delta_{\frac{1}{m+1}}(V(p_1^{(m)},\dots,p_r^{(m)}))=0\quad\text{for all}\;\; r\leq m.
\end{equation}
Before giving the formal proof of \cref{lemma:zeroslicedefSJA}, we will need the following lemma that shows how the deficiency of any such subdomain of the valuation space is the sum of the deficiencies of its ``critical'' subslices of lower dimensions: 

\begin{lemma}
\label{eq:decomposedeficiencies}
For any subset of items $J\subseteq [m]$, the $k$-deficiency of any slice of the subdomain where at least one of the items in $J$ is sold, when all other items' bids are fixed to zero, is the sum of the $k$-deficiencies of all its sub-slices $\slicesmall{(\slicesmall{U^{(m)}_L}{-J}{\vecc 0})}{J\setminus L}{\vecc t}$, where $\emptyset\neq L\subseteq J$ and $k=\frac{1}{m+1}$. Formally,
$$
\delta_{k}(V(p_1^{(m)},\dots,p^{(m)}_{\card{J}}))=\sum_{\emptyset\neq L\subseteq J}\int_{I^{\card{J}-\card{L}}}\delta_k\left(\slice{\left(\slice{U^{(m)}_L}{-J}{\vecc 0}\right)}{J\setminus L}{\vecc t}\right)\,d\vecc t.
$$
\end{lemma}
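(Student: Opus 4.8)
The plan is to combine a Fubini-type decomposition with the upwards-closedness of the allocation cells $U_L^{(m)}$ in their active coordinates. By the symmetry of SJA and since $V(p_1^{(m)},\dots,p_r^{(m)})$ depends only on $r:=\card{J}$, I may assume $J=[r]$; write $W_L:=\slicesmall{U_L^{(m)}}{-[r]}{\vecc 0}\subseteq I^r$ for $\emptyset\neq L\subseteq[r]$ and $A:=V(p_1^{(m)},\dots,p_r^{(m)})$. First, the slice $\slicesmall{\overline U_\emptyset^{(m)}}{-[r]}{\vecc 0}$ equals $A$ (the identification recalled right before the statement) and decomposes, up to a null set, as the disjoint union $\bigunion_{\emptyset\neq L\subseteq[r]}W_L$: cells $U_L^{(m)}$ with $L\not\subseteq[r]$ contribute nothing, since by Lemma~\ref{lemma:allocsubspacealgebra} membership in such a cell forces $x_l\ge p^{(m)}_{\card{L}}-p^{(m)}_{\card{L}-1}$ for some $l\notin[r]$, which is either impossible in the slice or — when that price difference vanishes — makes the cell coincide there with one for a smaller bundle. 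Second, since slicing and projecting commute, and by Fubini, integrating the $k$-deficiencies of the sub-slices over $\vecc t$ gives
$$\sum_{\emptyset\neq L\subseteq[r]}\int_{I^{r-\card{L}}}\delta_k\!\left(\slicesmall{\left(\slicesmall{U^{(m)}_L}{-[r]}{\vecc 0}\right)}{[r]\setminus L}{\vecc t}\right)d\vecc t=\card{A}-k\sum_{l=1}^{r}\sum_{L\ni l}\card{(W_L)_{[r]\setminus\{l\}}},$$
because $\int\card{\slicesmall{W_L}{[r]\setminus L}{\vecc t}}\,d\vecc t=\card{W_L}$ and $\int\card{(\slicesmall{W_L}{[r]\setminus L}{\vecc t})_{[r]\setminus\{l\}}}\,d\vecc t=\card{(W_L)_{[r]\setminus\{l\}}}$ for $l\in L$. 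Hence the lemma reduces to the per-coordinate identity $\card{A_{[r]\setminus\{l\}}}=\sum_{L\ni l}\card{(W_L)_{[r]\setminus\{l\}}}$ for every $l\in[r]$.

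To prove that identity I would use two facts about the cells, both read off from Lemma~\ref{lemma:allocsubspacealgebra}. (i) Each $U_L^{(m)}$, and hence each $W_L$, is upwards closed in every active coordinate $x_l$ with $l\in L$ — raising $x_l$ only relaxes the defining inequalities — so the projection of $W_L$ off coordinate $l$ equals its top slice $\slicesmall{W_L}{\{l\}}{1}$; likewise $A_{[r]\setminus\{l\}}=\slicesmall{A}{\{l\}}{1}=I^{r-1}$, since any point with $x_l=1$ sells item $l$ and thus lies in $\overline U_\emptyset$. (ii) The bundle sold at a point with $x_l=1$ must contain $l$: otherwise adjoining $l$ to the optimal bundle $L^{\ast}$ would change the utility by $1-(p^{(m)}_{\card{L^{\ast}}+1}-p^{(m)}_{\card{L^{\ast}}})>0$, using that consecutive SJA price differences are at most $p_1^{(m)}=\frac{m}{m+1}<1$ (non-increasing differences, Lemma~\ref{lemma:decrdiffs}). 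Combining (i) and (ii): slicing the a.e.-disjoint family $\{W_L\}_{L\ni l}$ at $x_l=1$ produces the bodies $\slicesmall{W_L}{\{l\}}{1}$, which by (ii) cover $\slicesmall{A}{\{l\}}{1}=I^{r-1}$ and are pairwise a.e.\ disjoint, hence tile it; taking volumes and using (i) yields exactly $\card{A_{[r]\setminus\{l\}}}=\sum_{L\ni l}\card{(W_L)_{[r]\setminus\{l\}}}$.

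The main obstacle is not any single deep step but the measure-theoretic bookkeeping: one must check that the cells $U_L^{(m)}$ with $L\not\subseteq[r]$, and all coincidences among optimal bundles, live on null sets, and that the existential ``$\exists x_l$'' defining a projection may be replaced by ``$x_l=1$'' via upwards-closedness. Since the statement is an integral identity it is insensitive to null sets, so all of this is harmless; if one wants it airtight, it suffices to run the argument first on the open cells where $L$ is the strict optimal bundle and then pass to closures. Finally, I would note that feeding the lemma together with $\card{A}=rk$ and $\card{A_{[r]\setminus\{l\}}}=1$ into $\delta_k(A)=\card{A}-k\sum_{l}\card{A_{[r]\setminus\{l\}}}$ gives $\delta_{\frac{1}{m+1}}(V(p_1^{(m)},\dots,p_r^{(m)}))=rk-k\,r=0$, which is~\eqref{eq:SJAequivdef2} and is what then drives the induction in the proof of Lemma~\ref{lemma:zeroslicedefSJA}.
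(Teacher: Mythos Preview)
Your proposal is correct and follows essentially the same route as the paper: decompose $V$ as the a.e.\ disjoint union of the cells $W_L$, handle the volume term by Fubini, and handle the projection term coordinate-by-coordinate by passing to the top face $x_l=1$ and showing that the bundle sold there must contain $l$. The paper carries out exactly these two steps (its equations~\eqref{eq:nonzeroallocdecomp3} and~\eqref{eq:nonzeroprojallocdecomp3}), and your justification for step~(ii)---comparing the utility of $L^\ast$ with $L^\ast\cup\{l\}$ and invoking Lemma~\ref{lemma:decrdiffs}---is in fact tighter than the paper's terse ``$x_j=1\geq p_1$''.
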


\begin{proof}
Fix some $m$. For the sake of clarity we will prove the proposition for $J$ having full dimension $J=m$. All the arguments easily carry on to the more general case where $J\subseteq [m]$ if one takes all valuations of items not in $J$ to be $0$, i.e., ``slicing'' $\slice{\left(\;\;\;\cdot\;\;\;\right)}{-J}{\vecc 0}$, since they are valid for any selling mechanism with nonincreasing price differences and SJA specifically; essentially, the case of $\card{J}=m'\leq m$ directly translates to the case of an $m'$-dimensional mechanism.

So, it is enough to show that
\begin{align}
\card{V} &=\sum_{\emptyset \neq L\subseteq [m]}\int_{I^{m-\card{L}}} \card{\slice{U_L}{-L}{\vecc t}}\,d\vecc t \label{eq:nonzeroallocdecomp3} \\
\card{V_{[m]\setminus\ssets{j}}} &=\sum_{\substack{\emptyset \neq L\subseteq [m]\\j\in L}}\int_{I^{m-\card{L}}} \card{\left(\slice{U_L}{-L}{\vecc t}\right)_{[m]\setminus\ssets{j}}}\,d\vecc t&&\text{for all}\;\;j\in [m], \label{eq:nonzeroprojallocdecomp3}
\end{align}
where for simplicity we have denoted the space $V(p_1,\dots,p_m)$ where mechanism allocates at least one item with $V$. Equation~\eqref{eq:nonzeroallocdecomp3} is a result of the fact that $V$ can be decomposed as $V=\sum_{\emptyset \neq L\subseteq [m]}U_L$ and every allocation subspace $U_L$ is isomorphic to the \emph{disjoint} union of all its slices $\slice{U_L}{-L}{\vecc t}$. In a similar way, to prove that~\eqref{eq:nonzeroprojallocdecomp3} holds, it is enough to show that for some fixed $j\in[m]$, the projection $V_{[m]\setminus\ssets{j}}$ can be covered by the union of all the projections of the subspaces $U_L$ with respect to coordinate $j$ and that all these projections $(U_L)_{[m]\setminus\ssets{j}}$ are disjoint almost everywhere, i.e., they can only intersect in a set of measure zero.

For the former, let $\vecc x_{-j}\in V_{[m]\setminus\ssets{j}}$. Then $(\vecc x_{-j},1)\in V$ (by only increasing the components of a valuation profile, items that were sold to the buyer are still going to be sold). So, there is a nonempty set of items $L\subseteq [m]$ such that $(\vecc x_{-j},1)\in U_L$ and $j\in L$ (item $j$ is sold since $x_{j}=1\geq p_1$), meaning that indeed $\vecc x_{-j}\in (U_L)_{[m]\setminus\ssets{j}}$ with $j\in L$. For the latter, consider distinct sets $L,L'\subseteq [m]$ with $j$ belonging to both $L$ and $L'$, and let a valuation profile $\vecc x\in U_L\inters U_{L'}$. Then, by the characterization in \cref{lemma:allocsubspacealgebra} it must be that
$$
\sum_{l\in L'\setminus L}x_l\geq p_{\card{L'}}-p_{\card{L'}-\card{L'\setminus L}}\quad\text{and}\quad \sum_{l\in L'\setminus L}x_l\leq p_{\card{L}+\card{L'\setminus L}}-p_{\card{L}},
$$
the first inequality being from the fact that $\vecc x\in U_{L'}$ and the second from $\vecc x\in U_{L}$, taking into consideration that $L'\setminus L\subseteq L'$ and $L'\setminus L\not\subseteq L$. As a result, the sum $\sum_{l\in L'\setminus L}x_l$ can range at most over only a single value, namely $p_{\card{L'}}-p_{\card{L'}-\card{L'\setminus L}}=p_{\card{L}+\card{L'\setminus L}}-p_{\card{L}}$ (and only if these two values are of course equal), otherwise by merging these two inequalities together we would have gotten that
$$
p_{\card {L'}}-p_{\card{L'}-\card{L'\setminus L}}
< p_{\card{L}+\card{L'\setminus L}}-p_{\card{L}},
$$
which contradicts the nonincreasing payment differences property, since both differences are between payments that differ at exactly $\card{L'\setminus L}$ ``steps'' but $\card{L}+\card{L'\setminus L}\geq \card{L'}$.
\end{proof}

\begin{lemmanonum}[\cref{lemma:zeroslicedefSJA}]
Every slice
  $\slice{U^{(m)}_J}{-J}{\vecc t}$ of SJA has zero $k$-deficiency, where
  $k=\frac{1}{m+1}$.
\end{lemmanonum}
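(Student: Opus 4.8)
The plan is to reduce the claim to the top-dimensional slices at the origin and then run a short strong induction on dimension, with Lemma~\ref{eq:decomposedeficiencies} doing the real work.

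First I would argue that it suffices to prove $W_r:=\delta_k\left(\slice{U_{[r]}^{(m)}}{-[r]}{\vecc 0_{m-r}}\right)=0$ for every $r\in[m]$, with the convention $W_r:=0$ when $U_{[r]}^{(m)}=\emptyset$ (an empty body has vanishing volume and vanishing projections, so its $\delta_k$ is $0$). Indeed, an empty slice $\slice{U_J^{(m)}}{-J}{\vecc t}$ has zero $k$-deficiency for the same reason; and for a nonempty slice the invariance~\eqref{eq:sliceszeroenough} gives $\slice{U_J^{(m)}}{-J}{\vecc t}=\slice{U_J^{(m)}}{-J}{\vecc 0}$, while the symmetry of SJA makes $\slice{U_J^{(m)}}{-J}{\vecc 0}$ a coordinate-permuted copy of $\slice{U_{[r]}^{(m)}}{-[r]}{\vecc 0}$ with $r=\card{J}$. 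Since coordinate permutations preserve volume and all projections, they preserve $\delta_k$, so the deficiency of every slice of $U_J^{(m)}$ equals $W_{\card{J}}$.

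Next I would prove $W_r=0$ by strong induction on $r$. Applying Lemma~\ref{eq:decomposedeficiencies} with $J=[r]$ and using $\delta_k(V(p_1^{(m)},\dots,p_r^{(m)}))=0$ from~\eqref{eq:SJAequivdef2} gives
\[
0 \;=\; \sum_{\emptyset\neq L\subseteq[r]}\;\int_{I^{r-\card{L}}}\delta_k\left(\slice{\left(\slice{U_L^{(m)}}{-[r]}{\vecc 0}\right)}{[r]\setminus L}{\vecc t}\right)d\vecc t .
\]
The single term with $L=[r]$ has $[r]\setminus L=\emptyset$, and integration over the zero-dimensional set $I^{0}$ is just evaluation, so this term equals $\delta_k\left(\slice{U_{[r]}^{(m)}}{-[r]}{\vecc 0}\right)=W_r$. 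Every remaining term has $\card{L}<r$, and its integrand is the $k$-deficiency of a $\card{L}$-dimensional slice of $U_L^{(m)}$; by the reduction of the previous paragraph this integrand equals $W_{\card{L}}$ wherever the slice is nonempty and $0$ elsewhere, and $W_{\card{L}}=0$ by the induction hypothesis. Hence all terms but the first vanish and $W_r=0$; the base case $r=1$ is the same computation with no lower-dimensional terms present. The reduction described above then yields the lemma in full generality.

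I do not foresee a serious obstacle: once Lemma~\ref{eq:decomposedeficiencies} is available the decomposition is essentially triangular — the top-dimensional term carries unit weight while every lower-dimensional term is annihilated by the induction hypothesis, irrespective of whatever combinatorial weight (a volume of a ``complement'' body, which could even be zero) it would otherwise contribute. The only points requiring care are the standard measure-zero technicalities in the notion of a slice (replacing a slice by a thin neighbourhood of parallel slices, exactly as in the proof of Lemma~\ref{th:counterslices}) and the routine check, via symmetry of SJA and~\eqref{eq:sliceszeroenough}, that all nonempty slices of a given subdomain $U_L^{(m)}$ are mutually isomorphic.
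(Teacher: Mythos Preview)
Your proposal is correct and follows essentially the same approach as the paper's proof: reduce by symmetry and~\eqref{eq:sliceszeroenough} to the slices $\slice{U_{[r]}^{(m)}}{-[r]}{\vecc 0}$, then induct on $r$ using Lemma~\ref{eq:decomposedeficiencies} together with~\eqref{eq:SJAequivdef2} to isolate the top-dimensional term. The only cosmetic difference is that the paper handles the base case $r=1$ by a direct computation of $\delta_k([p_1,1])$, whereas you obtain it as the degenerate instance of the same decomposition.
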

\begin{proof}
Fix some $m$ and let $k=1/(m+1)$. We use induction on the cardinality of $J$. 
At the base of the induction $\card{J}=1$, and due to symmetry it is enough to prove the proposition for slices of the form $\slice{U_{\ssets{1}}}{[2...m]}{\vecc t}$. By~\eqref{eq:sliceszeroenough} this is equal to the slice $\slice{U_{\ssets{1}}}{[2...m]}{\vecc 0_{m-1}}$, which is the single-dimensional interval $[p_1,1]$, thus having $k$-deficiency $1-p_1-\frac{1}{m+1}\cdot 1=\frac{m}{m+1}-p_1=0.$

For the inductive step, fix some $r\leq m$ and assume the proposition holds for all $J\subseteq [m]$ with $\card{J}\leq r-1$. We will show that it is true also for $\card{J}=r$. Again, due to symmetry, it is enough to prove that the $k$-deficiency of slice $\slice{U_{[r]}}{[r+1...m]}{\vecc 0_{m-r}}$ is zero (taking into consideration~\eqref{eq:sliceszeroenough}). By \cref{eq:decomposedeficiencies} we deduce that the subdomain where at least one of items $[r]$ is sold, given that the remaining $[r+1...m]$ bids are fixed to zero, has
\begin{align*}
\delta_{k}(V(p_1,\dots,p_{r})) &=\sum_{\emptyset\neq L\subseteq [r]}\int_{I^{r-\card{L}}}\delta_k\left(\slice{\left(\slice{U_L}{[r+1...m]}{\vecc 0}\right)}{[r]\setminus L}{\vecc t}\right)\,d\vecc t\\
		&=\delta_k\left(\slice{U_{[r]}}{[r+1...m]}{\vecc 0}\right) +\sum_{\substack{\emptyset\neq L\subseteq [r]\\ \card{L}\leq r-1}}\int_{I^{r-\card{L}}}\delta_k\left(\slice{U_L}{[m]\setminus L}{(\vecc t,\vecc 0)}\right)\,d\vecc t\\
		&=\delta_k\left(\slice{U_{[r]}}{[r+1...m]}{\vecc 0}\right),
\end{align*}
by the induction hypothesis. 
But from the definition of SJA, and in particular~\eqref{eq:SJAequivdef2}, we have that $\delta_{k}(V(p_1,\dots,p_{r}))=0$, which concludes the proof.
\end{proof}

\section{Convexity and Duality}
\label{sec:convexity-duality}
In this section we discuss the convexity constraint of the utility functions. For clarity, we focus on the simple case of a single bidder and a single item.  We show that the convexity constraint is not necessary for regular\footnote{A distribution is called \emph{regular} when $x-\frac{1-F(x)}{f(x)}$ is nondecreasing.}  distributions. And in the opposite direction, we exhibit a nonregular distribution for which the convexity constraint cannot be dropped without affecting optimality.

The primal \hyperref[eq:totalrevenue]{Program~\eqref{eq:totalrevenue}} (taking into consideration~\eqref{eq:primal objective}) for this case is
$$\sup_u\;\;u(H) H f(H) - u(L) L f(L) -\int_L^H u(x) (f(x)+(xf(x))') \, dx$$
subject to
\begin{align*}
\tag{$z(x)$}  u'(x) &\leq 1  \\
\tag{$s(x)$}  u'(x) &\geq 0  \\
\tag{$w(x)$}  u''(x)& \geq 0 \\
                      u(x) & \geq 0 
\end{align*}
Notice that there is no reason to include $u(L)=0$ since this holds for the optimal solution; that is because if $u(x)$ and $u(x)-c$ are both feasible solutions and $c$ is a positive constant, then the corresponding objectives differ by $c Hf(H)-c Lf(L)- \int_L^H c (f(x)+(xf(x))') \, dx=-c(H\cdot F(H)-L\cdot F(L))<0$; this shows that the optimal solution has $u(x)=0$ for some $x$.

In our treatment of the subject in the main text of the paper, we dropped the constraints labelled by
the dual variables $w(x)$ and for the most part we also dropped the ones corresponding to $s(x)$. We did this to keep the primal
and dual systems simple. More importantly, there is a strong reason
for ignoring the constraints corresponding to $w(x)$ for
multi-parameter domains: the convexity constraints $\nabla^2 u(x)
\succeq 0$ (that is, the Hessian of $u$ being positive semidefinite) are not linear in $u$ (unlike the one-dimensional case, in
which the constraint $u''(x) \geq 0$ is linear in $u$).

In the rest of this subsection, we investigate when the simplified
systems are optimal. We first drop the constraints corresponding to convexity to get the dual

$$\inf_{z,s} \int_L^H z(x) \, dx$$

subject to

\begin{align*}
  \tag{$u(x)$}  z'(x)-s'(x) & \leq f(x) +(xf(x))' \\
  \tag{$u(H)$}  z(H)-s(H) & \geq H f(H) \\
  \tag{$u(L)$} z(L)-s(L) & \leq L f(L)\\
  					z(x), s(x) & \geq 0.
\end{align*}

\begin{lemma}
   For regular distributions, the above primal and dual programs give the optimal value. 
\end{lemma}
\begin{proof}
We have three linear programs here: the original primal program, the relaxed primal program in which we dropped the constraints labeled $w(x)$, and the dual program. The values of the three programs are clearly in a nondecreasing order, due to relaxation and weak duality.

Therefore, it suffices to give a feasible dual solution which achieves the same value with the original primal problem. From~\citet{Myerson:1981aa}, we know that the original primal program has value $\int_L^H \max(0,\varphi(x))f(x)\, dx$, where $\varphi(x)=x-\frac{1-F(x)}{f(x)}$ is the virtual value function. Since we consider regular distributions, $\varphi(x)$ is nondecreasing. We define the following dual solution:
\begin{align*}
  z(x) &=\max(0, \varphi(x)) f(x) \\
  s(x) &=-\min(0, \varphi(x)) f(x).
\end{align*}
We first show that this constitutes a feasible dual solution. Clearly $z(x)$ and $s(x)$ are nonnegative and
$z(x)-s(x)=\varphi(x) f(x)$. This gives,
\begin{align*}
  z'(x)-s'(x) & = (\varphi(x) f(x))' = f(x) +(xf(x))' \\ 
  z(H)-s(H) & = \varphi(H) f(H) = H f(H) \\
  z(L)-s(L) & = \varphi(L) f(L)= L f(L) - 1 \leq L f(L),
\end{align*}
which shows that it is a feasible dual solution. The lemma follows by observing that the dual objective is $\int_L^H z(x)\, dx = \int_L^H \max(0,\varphi(x))f(x)\, dx$, equal to the value obtained by Myerson's optimal mechanism.
\end{proof}

We now exhibit a (nonregular) distribution for which the relaxed primal and the dual give suboptimal solutions. Consider the probability distribution with cdf 
\begin{equation*}
\label{eq:non-regular-distr}
  F(x)=1-(1-x)(1+x(2.7x -2.9))=x (2.7 x^2-5.6 x+3.9),
\end{equation*}
over the unit interval $I$.
Distribution $F$ is drawn on the left part of \cref{fig:convexity1}; it is not regular and its revenue function $R(x)=x(1-F(x))$ is not concave.  The revenue curve is shown on the right part of the figure. The points $x_0$, $x_2$, and $x_3$ are extrema; the point $x_1$ induces the same revenue with $x_3$.

\begin{figure}[t]
\centering
\begin{subfigure}{0.46\textwidth}
\includegraphics[width=1\textwidth]{./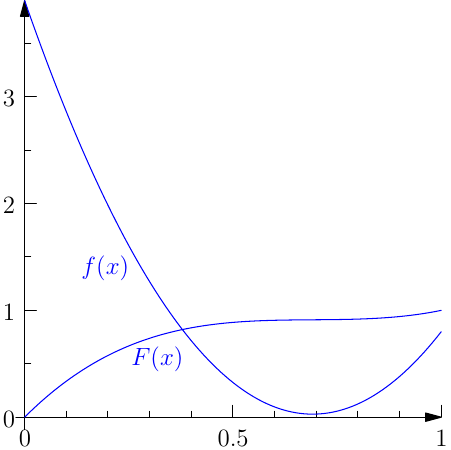}
\caption{\footnotesize The distribution}
\label{fig:convexity1-left}
\end{subfigure}
~
\begin{subfigure}{0.51\textwidth}
\includegraphics[width=1\textwidth]{./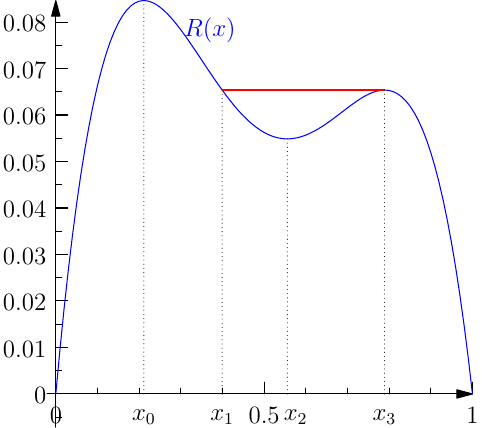}
\caption{\footnotesize The revenue curve $R(x)=x(1-F(x))$}
\label{fig:convexity1-right}
\end{subfigure}
\caption{\footnotesize The probability distribution in (\subref{fig:convexity1-left}) is not regular
      and does not have concave revenue curve, i.e., $F(x)+xf(x)-1$ is
      not monotone (increasing). Its revenue curve is shown in (\subref{fig:convexity1-right}). The
      points $x_0$, $x_2$, and $x_3$ are extrema; the point $x_1$ has
      the same revenue with $x_3$.}
\label{fig:convexity1}
\end{figure}
\begin{figure}
\centering
\begin{subfigure}{0.485\textwidth}
\includegraphics[width=1\textwidth]{./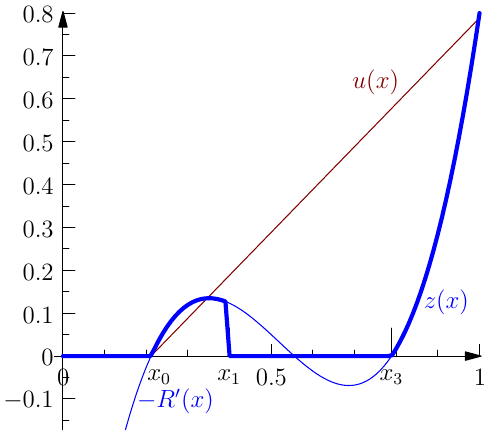}
\caption{\footnotesize Optimal solutions}
\label{fig:convexity2-left}
\end{subfigure}
~
\begin{subfigure}{0.485\textwidth}
\includegraphics[width=1\textwidth]{./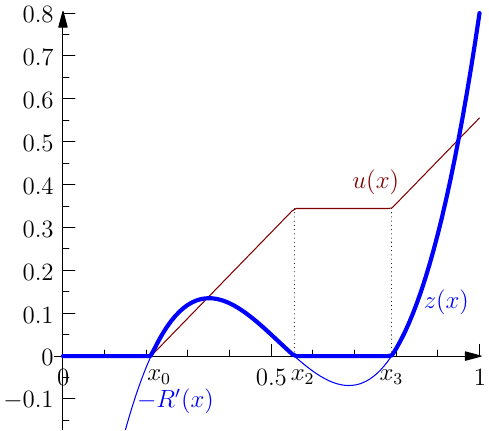}
\caption{\footnotesize Relaxed optimal solutions}
\label{fig:convexity2-right}
\end{subfigure}
\caption{\footnotesize Figure~(\subref{fig:convexity2-left}) shows the optimal solutions for the
     distribution of \cref{fig:convexity1}. The function $z(x)$ is part of the optimal dual solution when we include the convexity constraint. Figure~(\subref{fig:convexity2-right}) shows the optimal solutions when we drop the
      convexity constraint.}
\label{fig:convexity2}
\end{figure}
\Cref{fig:convexity2-left} shows the optimal solutions. 
The optimal primal solution is $u(x)=\max(0, x-x_0)$ and corresponds to the deterministic mechanism with reserve price $x_0$. The figure also shows $z(x)$ of the corresponding dual solution. 
This was computed by taking the dual program including the $w(x)$ constraints corresponding to convexity; showing how we computed this $z(x)$ is beyond the scope of this appendix note and we only provide it so that the reader can compare it with the relaxed solution in the right part of the figure.

\Cref{fig:convexity2-right} shows the optimal solutions for the relaxed primal program and its dual. The dual solution has $z(x)=\max(0, \varphi(x)) f(x)$ and $s(x)=-\min(0,\varphi(x)) f(x)$. As it was shown in the proof of the above lemma, this is a feasible dual solution. It corresponds to the primal solution shown in the figure.

The value of the dual solution is
$$
\hspace{-0.3cm}
\int_0^1 z(x)\, dx = \int_{x_0}^{x_2} -R'(x) \, dx + \int_{x_3}^1
-R'(x) \, dx= R(x_0)-R(x_2)+R(x_3)-R(1)=R(x_0)-R(x_2)+R(x_3).
$$ 
It is straightforward to verify that the indicated primal solution gives the same value, which shows that they are both optimal. However, the primal solution is not convex. Furthermore, its value $R(x_0)-R(x_2)+R(x_3)$ is strictly higher than the value $R(x_0)$ of the valid optimal solution, because $R(x_3)>R(x_2)$. Therefore \emph{the convexity constraint is essential to obtain the optimal solution}.

\section{Duality for Unbounded Domains}
\label{sec:dualityinfinity}
As we mentioned in the presentation of the duality framework in \cref{sec:duality}, for it to make sense as it is we need the integrals in the basic transformation of the primal revenue-maximization objective in expression~\eqref{eq:primal objective} to be well-defined. This is definitely the case when we have bounded domains, i.e., when the upper-boundary $H_{i,j}$ of each interval $D_{i,j}=[L_{i,j},H_{i,j}]$ is finite: all integrals in~\eqref{eq:primal objective} are finite and the integration by parts is valid. This of course includes the special case of uniform distributions which is the main topic in this paper. We will now discuss how one can still use this duality framework in cases where the domain $D$ is not bounded.

For the sake of clarity, let's assume for the remaining of this section that we have a single bidder and that item valuations are i.i.d.\ from some distribution $F$ with pdf $f$ over an interval $[L,H]$, $L\geq 0$. First notice that, even for unbounded domains where $H=\infty$, the critical integral

\begin{equation}
\label{eq:infiniteintervalhelp1}
\int_{D_{-j}}  H_{j} \, u(H_{j},  \vecc x_{-j}) \, f(H_{j}, \vecc x_{-j}) \, d\vecc x_{-j}=Hf(H)\int_{[L,H]^{m-1}} u(H,  \vecc x_{-j}) \prod_{l\neq j} f(x_l) \, d\vecc x_{-j}
\end{equation}
in \eqref{eq:primal objective} may still converge as $H\to\infty$. In such a case, the duality framework from \cref{sec:duality} can be applied as it is: one just has to take the limit of $H\to\infty$ wherever  $H$ appears, and in particular the Weak Duality \cref{lemma:weakdualitymany} is still valid if one replaces condition $z_j(H, \vecc x_{-j}) \geq Hf(H, \vecc x_{-j})$ by its natural limiting version of $$\lim_{H\to\infty}\left( z_j(H,\vecc x_{-j})-Hf(H)\prod_{l\neq j}f(x_l)\right)\geq 0.$$ For example, a sufficient condition for distributions with unbounded support to still induce bounded values in~\eqref{eq:infiniteintervalhelp1} is to have \emph{finite expectation}. This is a rather natural assumption to make and is standard for example in the works of \citet{Myerson:1981aa} and \citet{Krishna:2009ul}. To see why~\eqref{eq:infiniteintervalhelp1} is finite, it can be rewritten as $Hf(H)\expect_{\vecc x_{-j}\sim F^{m-1}}\left[u(H,\vecc x_{-j})\right]$ and so, due to the derivatives constraint $\nabla u(\vecc x)\leq \vecc 1_m$, it is upper-bounded by $Hf(H)\expect_{\vecc x_{-j}\sim F^{m-1}}\left[H+\sum_{l\neq j}x_j\right]=H^2f(H)+(m-1)Hf(H)\expect[X]$. Now, if we take into consideration that any bounded-expectation distribution must have $f(x)=o(1/x^2)$ since $\expect[X]=\int xf(x)\,dx$ must converge, then it is easy to see that this expression converges as $H\to\infty$, and in fact vanishes to zero. 

However, this might not be true for distributions with infinite expectation, for example the equal revenue distribution where $f(x)=1/x^2$ over the interval $[1,\infty)$. In such a case, we can follow a different path in order to use our duality framework. One can take the \emph{truncated} version of the distribution within a finite interval, i.e., consider the distribution $F_b(x)\equiv\frac{1}{F(b)}F(x)$ over the interval $[L, b]$ for any $b\geq L$, apply the duality theory framework in this finite case, and then study the behavior as $b\to\infty$. As the next \cref{th:trunclimit} proves, this process will be without loss. 

In the following we will use the notation $\rev(F)$ from~\citep{Hart:2012uq} to
denote the optimal revenue when item valuations follow the joint distribution $F$. One last remark before stating the theorem is that, whenever one deals with a specific case of the optimal revenue problem, he has to make sure that it is \emph{well-defined}, i.e., that $\rev(F)<\infty$ for the particular distributional priors $F$. This might seem obvious, but let us note here that it is not the case for \emph{any} probability distribution. For example, if we consider i.i.d.~valuations from the Pareto distribution $f(x)=\frac{1}{2}x^{-3/2}$ , $x\in[1,\infty)$, the expected (Myerson) revenue by selling a single item at a price of $t$ is $t(1-F(t))=t(1-1+t^{-1/2})=t^{1/2}$ which tends to infinity. Some simple sufficient conditions for bounded optimal revenue in the i.i.d.\ case where the valuations come from a product distribution $F^m$ are the bounded expectation of the distribution $F$, since by individual rationality (IR) one trivially gets the bound $\rev(F^m)\leq m\expect[X]$, and the bounded Myerson revenue $\rev(F)$ for the single-item case, since from the work of \citet{Hart:2012uq} we know that  there exists a constant $c>0$ such that
$
\frac{c}{\log^2 m}\rev(F^m)\leq \srev(F^m),
$
 where $\srev$ denotes the optimal revenue by selling the items independently, so we get 
$\rev(F^m)\leq\frac{m\log^2 m}{c}\rev(F)$. The former condition is stronger. For example, the equal revenue distribution does not have finite expectation but it does induce a finite Myerson revenue of $1$.

\begin{theorem}
\label{th:trunclimit}
Let $F$ be a probability distribution over $[a,\infty)$, $a\geq 0$, such that $\rev(F^m)<\infty$. Then, if $F_b$ denotes the truncation of $F$ in $[a,b]$, $b\geq a$, and $\lim_{b\to\infty} \rev(F_b^m)$ converges, it must be that
$$
\lim_{b\to\infty} \rev(F_b^m)=\rev(F^m).
$$
\end{theorem}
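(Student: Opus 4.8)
The plan is to establish the two one-sided bounds $\limsup_{b\to\infty}\rev(F_b^m)\le\rev(F^m)$ and $\liminf_{b\to\infty}\rev(F_b^m)\ge\rev(F^m)$ separately; since the limit is assumed to exist, together they pin it to $\rev(F^m)$. Write $\mathrm{rev}(\mathcal M,G)$ for the expected revenue of a truthful IR mechanism $\mathcal M$ under a distribution $G$, so $\rev(G)=\sup_{\mathcal M}\mathrm{rev}(\mathcal M,G)$. Throughout I use, without loss (as noted in Section~\ref{sec:model}, following~\citep[Sect.~2.1]{Hart:2012uq}), that one may restrict attention to \emph{normalized} mechanisms, whose utility function $u$ satisfies $u(a,\dots,a)=0$ (equivalently $\min u=0$, as $u$ is nondecreasing): subtracting the constant $u(a,\dots,a)$ from $u$ leaves the allocation unchanged, raises the revenue by that constant, and by convexity of $u$ makes all payments nonnegative, since $\nabla u(\vecc x)\cdot\vecc x-u(\vecc x)\ge\nabla u(\vecc x)\cdot(a,\dots,a)\ge 0$.

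First I would prove the lower bound. Fix $\varepsilon>0$ and let $\mathcal M^\ast$ be a normalized $\varepsilon$-optimal mechanism for $F^m$, with payment $p^\ast\ge 0$ and $\mathrm{rev}(\mathcal M^\ast,F^m)=\int_{[a,\infty)^m}p^\ast\,dF^m\ge\rev(F^m)-\varepsilon$, which is finite since $\rev(F^m)<\infty$. Restricting $\mathcal M^\ast$ to the sub-box $[a,b]^m$ gives a feasible mechanism for $F_b^m$, and the change of density $dF_b^m=F(b)^{-m}\,dF^m$ on the box yields $\mathrm{rev}(\mathcal M^\ast,F_b^m)=F(b)^{-m}\bigl(\mathrm{rev}(\mathcal M^\ast,F^m)-T(b)\bigr)$, where $T(b):=\int_{\{\vecc x\,:\,\exists j,\ x_j>b\}}p^\ast\,dF^m$; since $F(b)^m\le 1$ this is $\ge\mathrm{rev}(\mathcal M^\ast,F^m)-T(b)$. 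As $p^\ast\ge 0$ is $F^m$-integrable and the tail events shrink to an $F^m$-null set, dominated convergence gives $T(b)\to 0$. Hence $\rev(F_b^m)\ge\mathrm{rev}(\mathcal M^\ast,F_b^m)\ge\rev(F^m)-\varepsilon-T(b)$, so $\liminf_{b\to\infty}\rev(F_b^m)\ge\rev(F^m)-\varepsilon$, and $\varepsilon\to 0$ concludes.

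For the upper bound, fix $\varepsilon>0$ and pick a normalized $\varepsilon$-optimal mechanism $\mathcal M_b$ for $F_b^m$ on $[a,b]^m$ with utility $u_b$ (convex, nondecreasing, all partials in $[0,1]$, hence all subgradients in $[0,1]^m$, and $u_b(a,\dots,a)=0$). The crux is to extend $u_b$ to a valid utility function on the whole domain $[a,\infty)^m$ that coincides with $u_b$ on the box, via the tangent-supremum extension
\[
\tilde u(\vecc x)\;:=\;\sup_{\vecc y\in[a,b]^m,\ \vecc g\in\partial u_b(\vecc y)}\bigl(u_b(\vecc y)+\vecc g\cdot(\vecc x-\vecc y)\bigr).
\]
Being a supremum of affine functions with slopes in $[0,1]^m$, $\tilde u$ is convex, nondecreasing, has all partials in $[0,1]$, and is nonnegative (as $u_b(a,\dots,a)=0$ and $\vecc g\ge 0$); the subgradient inequality shows $\tilde u=u_b$ on $[a,b]^m$. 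By Theorem~\ref{thm:truthfulconvex}, $\tilde u$ induces a truthful IR mechanism $\tilde{\mathcal M}$ on $[a,\infty)^m$ whose restriction to the box is $\mathcal M_b$. Whenever the supremum is attained at $(\vecc y^\ast,\vecc g^\ast)$, the payment of $\tilde{\mathcal M}$ at $\vecc x$ equals $\vecc g^\ast\cdot\vecc y^\ast-u_b(\vecc y^\ast)$, which by normalization of $\mathcal M_b$ is in $[0,mb]$. Splitting $\mathrm{rev}(\tilde{\mathcal M},F^m)$ over $[a,b]^m$ and its complement, the box contributes $F(b)^m\,\mathrm{rev}(\mathcal M_b,F_b^m)$ and the complement contributes a nonnegative amount, so $\rev(F^m)\ge\mathrm{rev}(\tilde{\mathcal M},F^m)\ge F(b)^m\,\mathrm{rev}(\mathcal M_b,F_b^m)\ge F(b)^m(\rev(F_b^m)-\varepsilon)$. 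Letting $b\to\infty$ (so $F(b)\to 1$) and then $\varepsilon\to 0$ gives $\limsup_{b\to\infty}\rev(F_b^m)\le\rev(F^m)$.

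I expect the main obstacle to be precisely the extension step: from an arbitrary IR/IC mechanism on a bounded box one must manufacture an IR/IC mechanism on the unbounded domain that (i) loses no revenue on the box and (ii) leaks no revenue on the appended tail. Point (i) goes through cleanly thanks to the $[0,1]$ bound on the derivatives — this is exactly the gradient constraint $(z_j)$ of Program~\eqref{eq:totalrevenue}, and it is what keeps the affine pieces of $\tilde u$ legitimate. Point (ii) is where the normalization $u_b(a,\dots,a)=0$ is indispensable: without it the tail payments $\vecc g^\ast\cdot\vecc y^\ast-u_b(\vecc y^\ast)$ could be negative and the inequality $\mathrm{rev}(\tilde{\mathcal M},F^m)\ge F(b)^m\,\mathrm{rev}(\mathcal M_b,F_b^m)$ would break. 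Everything else is routine: the change-of-density identity on the box and the elementary fact that the tail mass of an $L^1$ function vanishes. I note in passing that the argument never uses the product/i.i.d.\ structure, and applies verbatim to truncations of any distribution on $\R_+^m$ with finite optimal revenue.
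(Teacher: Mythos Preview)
Your proof is correct and follows essentially the same two-inequality strategy as the paper: restrict an (approximately) optimal mechanism for $F^m$ to the box for one direction, and extend an (approximately) optimal mechanism for $F_b^m$ to the full domain for the other. The only cosmetic differences are that the paper uses the explicit extension $\overline{u}_b(\vecc x)=u_b(\gamma_b(\vecc x))+(\vecc x-\gamma_b(\vecc x))\cdot\nabla u_b(\gamma_b(\vecc x))$ (with $\gamma_b$ the coordinatewise projection onto $[a,b]^m$) in place of your tangent-supremum, and is less explicit than you are about the normalization step that guarantees nonnegative tail payments.
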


\begin{proof}
Let $u$ be the utility function of an optimal selling mechanism when valuations are drawn i.i.d.\ from $F$. The restriction of $u$ in $[a,b]$ is a valid utility function for the setting where valuations are drawn i.i.d.\ from $F_b$ and also we know that $F(x)=F(b)F_b(x)$ for all $x\in[a,b]$. Combining these we get
\begin{align*}
\rev(F^m) &=\int_{[a,\infty)^m}\vecc x\cdot\nabla u(\vecc x)-u(\vecc x)\,dF^m(\vecc x)\\
		&= \int_{[a,b]^m}\vecc x\cdot\nabla u(\vecc x)-u(\vecc x)\,dF^m(\vecc x)+\int_{[a,\infty)^m\setminus[a,b]^m}\vecc x\cdot\nabla u(\vecc x)-u(\vecc x)\,dF^m(\vecc x)\\
		&= F^m(b)\int_{[a,b]^m}\vecc x\cdot\nabla u(\vecc x)-u(\vecc x)\,dF_b^m(\vecc x)+\int_{[a,\infty)^m\setminus[a,b]^m}\vecc x\cdot\nabla u(\vecc x)-u(\vecc x)\,dF^m(\vecc x)\\
		&\leq F^m(b)\rev(F_b^m)+\int_{[a,\infty)^m\setminus[a,b]^m}\vecc x\cdot\nabla u(\vecc x)-u(\vecc x)\,dF^m(\vecc x).
\end{align*}

Next, for any $b\geq a$, let $u_b$ be the utility function of an optimal selling mechanism when valuations are drawn i.i.d.\ from $F_b$. This utility function can be extended to a valid utility function $\overline u_b$ over the entire interval $[a,\infty)$ in the following way:
$$
\overline u_b(\vecc x)=u_b(\gamma_b(\vecc x))+(\vecc x-\gamma_b(\vecc x))\cdot \nabla u_b(\gamma_b(\vecc x)),\quad \vecc x\in[a,\infty)^m,
$$
where $\gamma_b(\vecc x)$ the pointwise minimum of $\vecc x$ and $(b)^m$, i.e., is the $m$-dimensional vector  whose $j$-th coordinate is $\min\ssets{x_j,b}$.

Since $u_b$ is a convex function with partial derivatives in $[0,1]$, so is the extended $\overline u_b$. This means that we immediately get
$$
F^m(b)\rev(F_b^m)\leq \rev(F^m).
$$ 
Now the theorem follows from the facts that $\lim_{b\to\infty}F^m(b)=1$ and
$$
\lim_{b\to\infty}\int_{[a,\infty)^m\setminus[a,b]^m}\vecc x\cdot\nabla u(\vecc x)-u(\vecc x)\,dF^m(\vecc x)=0.
$$
The last equality is due to the fact that $\int_{[a,\infty)^m}\vecc x\cdot\nabla u(\vecc x)-u(\vecc x)\,dF^m(\vecc x)$ is bounded by assumption.
\end{proof}

\section{Exact Computation of the Prices for up to 6 Dimensions.}
\label{sec:exactcompute6}
To decongest notation, we will drop the subscript $(m)$ because it will be clear in which dimension we are working in and we denote 
$v(\alpha_1,\dots,\alpha_r)=\card{V(\alpha_1,\dots,\alpha_r)}$ 
where this body is defined in~\eqref{eq:disjunctionspace} and, as we mentioned after \cref{def:SJAmechanism}, we can use the equivalent to~\eqref{eq:sja} condition
$$
v(p_1,\dots,p_r)=rk
$$ 
to determine the SJA payments. Also, we set $k=\frac{1}{m+1}$ throughout this section.

\begin{itemize}
\item\underline{$r=1$ and any $m$:} As we said before, it is easy to see that for any dimension $m$, 
\begin{equation}\label{eq:volsliceSJA1}
v(p_1)=1-p_1.
\end{equation} 
From this, and applying the transformation~\eqref{eq:transfpm}, we solve
$$
v(p_1)=1\cdot k\ifif 1-(1-\mu_1k)=k\ifif \mu_1=1.
$$
So
$$
p_1^{(m)}=\frac{m}{m+1}\quad\text{and}\quad \mu_1=1.
$$
\end{itemize}
For higher orders $r>1$ we can utilize the recursive way of computing the expressions for the volumes $v_r$, given by formula~\eqref{eq:recsliceSJAvol} and the initial condition~\eqref{eq:volsliceSJA1}.
\begin{itemize}
\item \underline{$r=2$ and any $m$:}
Using the recursive formula~\eqref{eq:recsliceSJAvol} and~\eqref{eq:volsliceSJA1} we can compute that for every $p_2$ such that $0\leq p_2-p_1\leq p_1$ it would be
$$
v(p_1,p_2)=\int_0^{p_2-p_1}v(p_1)\,dt+\int_{p_2-p_1}^{p_1}v(p_2-t)\,dt+\int_{p_1}^1\,dt=p_1^2+\frac{p_2^2}{2}-2p_1p_2+1
$$
and by applying the transformation~\eqref{eq:transfpm} and plugging in the already computed value $\mu_1=1$ from the previous order $r=1$, we get
\begin{equation}\label{eq:volsliceSJA2}
v(1-\mu_1k,2-\mu_2k)=2k\ifif \mu_2^2-4 \mu_2+2=0
\end{equation}
If we pick the largest root of this equation $\mu_2=2+\sqrt{2}$ we can see that indeed condition $0\leq p_2-p_1\leq p_1$ is respected (it is equivalent to $0\leq k\leq 1/(1+\sqrt{2})$ which holds since $k\leq\frac{1}{r+1}$ and $r\geq 2$), so we have computed that for any $m$
$$
p_2^{(m)}=\frac{2m-\sqrt{2}}{m+1}\quad\text{and}\quad \mu_2=2+\sqrt{2}\approx 3.41421.
$$ 
\item \underline{$r=3$ and any $m$:}
In the same way, using again recursive formula~\eqref{eq:recsliceSJAvol} and the volume of the previous order $r=2$ from~\eqref{eq:volsliceSJA2} we can compute that for every $p_3$ such that $0\leq p_3-p_2\leq p_2-p_1$ it would be
\begin{multline}
\ \hspace{-1cm}
v(p_1,p_2,p_3) =\int_0^{p_3-p_2}v(p_1,p_2)\,dt+\int_{p_3-p_2}^{p_2-p_1}v(p_1,p_3-t)\,dt+\int_{p_2-p_1}^{p_1}v(p_2-t,p_3-t)\,dt+\int_{p_1}^1\,dt\\
		=\frac{1}{6} \left(-3 p_1^3+9 p_1^2 p_3+9 p_1 \left(2 p_2^2-4 p_2 p_3+p_3^2\right)-6 p_2^3+9 p_2^2 p_3-p_3^3+6\right)\label{eq:eq:volsliceSJA3}
\end{multline}
and by applying the transformation~\eqref{eq:transfpm} and plugging in the already computed values for $\mu_1=1$ and $\mu_2=2+\sqrt{2}$ from the previous orders, we get
\begin{equation}\label{eq:mu3defpoly}
v(1-\mu_1k,2-\mu_2k,3-\mu_3k)=3k\ifif \mu_3^3-9 \mu_3^2+9 \mu_3+12 \sqrt{2}+15=0
\end{equation}
If we pick the largest again root of this equation 
$$\mu_3=\sqrt[3]{6-6 \sqrt{2}+6 i \sqrt{3+2 \sqrt{2}}}+\frac{6^{2/3}}{\sqrt[3]{1-\sqrt{2}+i \sqrt{3+2 \sqrt{2}}}}+3\approx 7.09717$$ 
we can see that indeed condition $0\leq p_3-p_2\leq p_2-p_1$ is respected (it is equivalent to $0\leq k\leq 0.271521$ which holds since $k\leq\frac{1}{r+1}$ and $r\geq 3$), so we have computed that for any $m$
$$
p_3^{(m)}\approx 3-\frac{7.09717}{m+1}\quad\text{and}\quad \mu_3\approx 7.09717 .
$$ 
\item \underline{$r=4$ and any $m$:}
Continuing up the same way, we compute that for every $p_4$ such that $0\leq p_4-p_3\leq p_3-p_2$ it is
\begin{multline*}
v(p_1,p_2,p_3,p_4) =\int_0^{p_4-p_3}v(p_1,p_2,p_3)\,dt+\int_{p_4-p_3}^{p_3-p_2}v(p_1,p_2,p_4-t)\,dt\\+\int_{p_3-p_2}^{p_2-p_1}v(p_1,p_3-t,p_4-t)\,dt
+\int_{p_2-p_1}^{p_1}v(p_2-t,p_3-t,p_4-t)\,dt+\int_{p_1}^1\,dt	
\end{multline*}
which equals
\begin{multline}
\ \hspace{-1.3cm}
\frac{1}{24} \left(4 p_1^4-16 p_1^3 p_4-24 p_1^2 \left(3 p_3^2-6 p_3 p_4+2 p_4^2\right)-16 p_1 \left(3 p_2^3-9 p_2^2 p_4-9 p_2 \left(2 p_3^2-4 p_3 p_4+p_4^2\right)\right.\right.\\
\ \hspace{-1.3cm}
\left.\left.+6 p_3^3-9 p_3^2 p_4+p_4^3\right)+18 p_2^4-48 p_2^3 p_4-36 p_2^2 \left(2 p_3^2-4 p_3 p_4+p_4^2\right)+12 p_3^4-16 p_3^3 p_4+p_4^4+24\right).\label{eq:eq:volsliceSJA4}
\end{multline}
By applying the transformation~\eqref{eq:transfpm}, plugging in the already computed values for $\mu_1=1$ and $\mu_2=2+\sqrt{2}$ and using the fact that $\mu_3$ is the root of Equation~\eqref{eq:mu3defpoly}, we get that equation $v(1-\mu_1k,2-\mu_2k,3-\mu_3k,4-\mu_4k)=4k$ is equivalent to
\begin{equation}\label{eq:mu4defpoly}
\mu_4^4-16 \mu_4^3+24 \mu_4^2+96 \sqrt{2} \mu_4+128 \mu_4+72 \mu_3^2-144 \sqrt{2} \mu_3-288 \mu_3+48 \sqrt{2}+88=0.
\end{equation}
If we pick the largest again root $\mu_4\approx 11.9972$ of this equation we can see that indeed condition $0\leq p_4-p_3\leq p_3-p_2$ is respected (it is equivalent to $0\leq k\leq 0.204082$ which holds since $k\leq\frac{1}{r+1}$ and $r\geq 4$), so we have computed that for any $m$
$$
p_4^{(m)}\approx 4-\frac{11.9972}{m+1}\quad\text{and}\quad \mu_4\approx 11.9972 .
$$ 

\item \underline{$r=5$ and $m=5$:}
At this point we need to modify a little bit our procedure of computing the volumes in the usual recursive way, and consider the case where the new $p_5$ price is such that $p_3\leq p_5\leq p_4$  instead of $p_5\geq p_4$ (and in fact the even stronger condition that $p_5-p_4\leq p_4-p_3$). This is again a straightforward calculation, since as we argued before, $v(p_1,p_2,p_3,p_4,p_5)=v(p_1,p_2,p_3,p_5,p_5)$ and so
\begin{multline*}
v(p_1,p_2,p_3,p_4,p_5) =\int_0^{p_5-p_3}v(p_1,p_2,p_3,p_5-t)\,dt+\int_{p_5-p_3}^{p_3-p_2}v(p_1,p_2,p_5-t,p_5-t)\,dt\\+\int_{p_3-p_2}^{p_2-p_1}v(p_1,p_3-t,p_5-t,p_5-t)\,dt
+\int_{p_2-p_1}^{p_1}v(p_2-t,p_3-t,p_5-t,p_5-t)\,dt+\int_{p_1}^1\,dt	
\end{multline*}
which equals
\begin{multline}
\ \hspace{-1.7cm}
\frac{1}{120} \left(-5 p_1^5+25 p_1^4 p_5-50 p_1^3 p_5^2+50 p_1^2 \left(6 p_3^3-18 p_3^2 p_5+18 p_3 p_5^2-5 p_5^3\right)+25 p_1 \left(4 p_2^4-16 p_2^3 p_5\right.\right.\\ 
\ \hspace{-1.7cm}
\left.\left.+24 p_2^2 p_5^2-16 p_2 \left(3 p_3^3-9 p_3^2 p_5+9 p_3 p_5^2-2 p_5^3\right)+18 p_3^4-48 p_3^3 p_5+36 p_3^2 p_5^2-3 p_5^4\right)-2 \left(20 p_2^5 \right.\right.\\
\ \hspace{-1.7cm}
 \left.\left. -75 p_2^4 p_5+100 p_2^3 p_5^2-50 p_2^2 \left(3 p_3^3-9 p_3^2 p_5+9 p_3 p_5^2-2 p_5^3\right)+30 p_3^5-75 p_3^4 p_5+50 p_3^3 p_5^2-2 p_5^5-60\right)\right)\label{eq:eq:volsliceSJA55}
\end{multline}
By applying the transformation~\eqref{eq:transfpm}, plugging in the already computed values for $\mu_1=1$ and $\mu_2=2+\sqrt{2}$ and using the fact that $\mu_3$ is the root of Equation~\eqref{eq:mu3defpoly}, we get that equation $v(1-\mu_1k,2-\mu_2k,3-\mu_3k,5-\mu_5k,5-\mu_5k)=5k$ is equivalent to
\begin{multline}\label{eq:mu55defpoly}
\ \hspace{-1cm}
4 \mu_5^5-225 \mu_5^4+4350 \mu_5^3+800 \sqrt{2} \mu_5^2-34950 \mu_5^2+900 \mu_5 \mu_3^2-1800 \sqrt{2} \mu_5 \mu_3-3600 \mu_5 \mu_3-14600 \sqrt{2} \mu_5\\+121175 \mu_5+720 \sqrt{2} \mu_3^2-14220 \mu_3^2+22680 \sqrt{2} \mu_3+49680 \mu_3+41080 \sqrt{2}-161215=0
\end{multline}
If we pick the second largest  root $\mu_5\approx 18.0865$ of this equation we can see that indeed condition $p_3\leq p_5\leq p_4$ is respected (it is equivalent to $0.16422\leq k\leq 0.181994$ which holds since $k=\frac{1}{m+1}$ and $m= 5$), so we have computed that for $m=5$
$$
p_5^{(5)}\approx 5-\frac{18.0865}{6}=1.98558\quad\text{and}\quad \mu_5^{(5)}\approx 18.0865.
$$ 

\item \underline{$r=5$ and any $m\geq 6$:}
We can compute that for every $p_5$ such that $0\leq p_5-p_4\leq p_4-p_3$ it is
\begin{align*}
v(p_1,p_2,p_3,p_4,p_5) &=\int_0^{p_5-p_4}v(p_1,p_2,p_3,p_4)\,dt+\int_{p_5-p_4}^{p_4-p_3}v(p_1,p_2,p_3,p_5-t)\,dt\\ +&\int_{p_4-p_3}^{p_3-p_2}v(p_1,p_2,p_4-t,p_5-t)\,dt+\int_{p_3-p_2}^{p_2-p_1}v(p_1,p_3-t,p_4-t,p_5-t)\,dt
\\ +& \int_{p_2-p_1}^{p_1}v(p_2-t,p_3-t,p_4-t,p_5-t)\,dt+\int_{p_1}^1\,dt	
\end{align*}
which equals
\begin{multline}
\frac{1}{120} \left(-5 p_1^5+25 p_1^4 p_5+50 p_1^3 \left(4 p_4^2-8 p_4 p_5+3 p_5^2\right)+50 p_1^2 \left(6 p_3^3-18 p_3^2 p_5-18 p_3 \left(2 p_4^2\right.\right.\right. \\ \left. \left.\left.-4 p_4 p_5+p_5^2\right)+16 p_4^3-24 p_4^2 p_5+3 p_5^3\right)+25 p_1 \left(4 p_2^4-16 p_2^3 p_5-24 p_2^2 \left(3 p_4^2-6 p_4 p_5+2 p_5^2\right)\right.\right. \\ \left. \left.-16 p_2 \left(3 p_3^3-9 p_3^2 p_5-9 p_3 \left(2 p_4^2-4 p_4 p_5+p_5^2\right)+6 p_4^3-9 p_4^2 p_5+p_5^3\right)+18 p_3^4-48 p_3^3 p_5\right.\right. \\ \left. \left. -36 p_3^2 \left(2 p_4^2-4 p_4 p_5+p_5^2\right)+12 p_4^4-16 p_4^3 p_5+p_5^4\right)-40 p_2^5+150 p_2^4 p_5+200 p_2^3 \left(3 p_4^2 \right.\right. \\ \left. \left. -6 p_4 p_5+2 p_5^2\right)+100 p_2^2 \left(3 p_3^3-9 p_3^2 p_5-9 p_3 \left(2 p_4^2-4 p_4 p_5+p_5^2\right)+6 p_4^3-9 p_4^2 p_5+p_5^3\right) \right. \\ \left. -60 p_3^5+150 p_3^4 p_5+200 p_3^3 p_4^2-400 p_3^3 p_4 p_5+100 p_3^3 p_5^2-20 p_4^5+25 p_4^4 p_5-p_5^5+120\right)\label{eq:eq:volsliceSJA5}
\end{multline}
By applying the transformation~\eqref{eq:transfpm}, plugging in the already computed values for $\mu_1=1$ and $\mu_2=2+\sqrt{2}$ and using the fact that $\mu_3$ is the root of Equation~\eqref{eq:mu3defpoly} and $\mu_4$ is the root of~\eqref{eq:mu4defpoly}, we get that equation $v(1-\mu_1k,2-\mu_2k,3-\mu_3k,4-\mu_4k,5-\mu_5 k)=5k$ is equivalent to
\begin{multline}\label{eq:mu5defpoly}
\ \hspace{-1.6cm}
\mu_5^5-25 \mu_5^4+50 \mu_5^3-100 \mu_5^2 \mu_3^3+900 \mu_5^2 \mu_3^2-900 \mu_5^2 \mu_3-800 \sqrt{2} \mu_5^2-950 \mu_5^2-150 \mu_5 \mu_3^4+400 \mu_5 \mu_3^3 \mu_4\\
\ \hspace{-1.6cm}
+1200 \mu_5 \mu_3^3-3600 \mu_5 \mu_3^2 \mu_4-900 \mu_5 \mu_3^2+3600 \mu_5 \mu_3 \mu_4-25 \mu_5 \mu_4^4+400 \mu_5 \mu_4^3-600 \mu_5 \mu_4^2+2400 \sqrt{2} \mu_5 \mu_4\\
\ \hspace{-1.6cm}
+2800 \mu_5 \mu_4-1600 \sqrt{2} \mu_5-2225 \mu_5+60 \mu_3^5-150 \mu_3^4-200 \mu_3^3 \mu_4^2-400 \mu_3^3 \mu_4-1900 \mu_3^3+1800 \mu_3^2 \mu_4^2+3600 \mu_3^2 \mu_4\\
\ \hspace{-1.6cm}
-1800 \mu_3 \mu_4^2-3600 \mu_3 \mu_4+1800 \mu_3+20 \mu_4^5-275 \mu_4^4-1200 \sqrt{2} \mu_4^2-800 \mu_4^2-2400 \sqrt{2} \mu_4-2800 \mu_4+8960 \sqrt{2}\\
\ \hspace{-1.6cm}
+12185=0.
\end{multline}
If we pick the largest again (real) root $\mu_5\approx 18.0843$ of this equation we can see that indeed condition $0\leq p_5-p_4\leq p_4-p_3$ is respected (it is equivalent to $0\leq k\leq 0.16428$ which holds since $k\leq\frac{1}{m+1}$ and $m\geq 6$), so we have computed that for any $m\geq 6$
$$
p_5^{(m)}\approx 5-\frac{18.0843}{m+1}\quad\text{and}\quad \mu_5^{(m)}\approx 18.0843 .
$$ 

\item \underline{$r=6$ and $m=6$:}
If the new $p_6$ price is such that $p_4\leq p_6\leq p_5$, similar to the case of $r=m=5$,  we have that $v(p_1,p_2,p_3,p_4,p_5,p_6)=v(p_1,p_2,p_3,p_4,p_6,p_6)$ and so
\begin{multline*}
\ \hspace{-1cm}
v(p_1,p_2,p_3,p_4,p_5,p_6) =\int_0^{p_6-p_4}v(p_1,p_2,p_3,p_4,p_6-t)\,dt+\int_{p_6-p_4}^{p_4-p_3}v(p_1,p_2,p_3,p_5-t,p_5-t)\,dt\\+\int_{p_4-p_3}^{p_3-p_2}v(p_1,p_2,p_4-t,p_6-t,p_6-t)\,dt+\int_{p_3-p_2}^{p_2-p_1}v(p_1,p_3-t,p_4-t,p_6-t,p_6-t)\,dt\\
+\int_{p_2-p_1}^{p_1}v(p_2-t,p_3-t,p_4-t,p_6-t,p_6-t)\,dt+\int_{p_1}^1\,dt	
\end{multline*}
which equals
\begin{multline}
\ \hspace{-1.7cm}
\frac{1}{720} \left(6 p_1^6-36 p_1^5 p_6+90 p_1^4 p_6^2-120 p_1^3 \left(10 p_4^3-30 p_4^2 p_6+30 p_4 p_6^2-9 p_6^3\right)-90 p_1^2 \left(10 p_3^4-40 p_3^3 p_6\right.\right. \\ 
\ \hspace{-1.7cm}
\left. \left.+60 p_3^2 p_6^2-40 p_3 \left(3 p_4^3-9 p_4^2 p_6+9 p_4 p_6^2-2 p_6^3\right)+60 p_4^4-160 p_4^3 p_6+120 p_4^2 p_6^2-11 p_6^4\right)-36 p_1 \left(5 p_2^5\right.\right. \\ 
\ \hspace{-1.7cm}
\left. \left. -25 p_2^4 p_6+50 p_2^3 p_6^2-50 p_2^2 \left(6 p_4^3-18 p_4^2 p_6+18 p_4 p_6^2-5 p_6^3\right)-25 p_2 \left(4 p_3^4-16 p_3^3 p_6+24 p_3^2 p_6^2\right.\right.\right. \\ 
\ \hspace{-1.7cm}
\left. \left. \left.-16 p_3 \left(3 p_4^3-9 p_4^2 p_6+9 p_4 p_6^2-2 p_6^3\right)+18 p_4^4-48 p_4^3 p_6+36 p_4^2 p_6^2-3 p_6^4\right)+2 \left(20 p_3^5-75 p_3^4 p_6 \right.\right.\right. \\ 
\ \hspace{-1.7cm}
\left. \left. \left.+100 p_3^3 p_6^2-50 p_3^2 \left(3 p_4^3-9 p_4^2 p_6+9 p_4 p_6^2-2 p_6^3\right)+30 p_4^5-75 p_4^4 p_6+50 p_4^3 p_6^2-2 p_6^5\right)\right)+5 \left(15 p_2^6 \right.\right. \\ 
\ \hspace{-1.7cm}
\left. \left. -72 p_2^5 p_6+135 p_2^4 p_6^2-120 p_2^3 \left(6 p_4^3-18 p_4^2 p_6+18 p_4 p_6^2-5 p_6^3\right)-45 p_2^2 \left(4 p_3^4-16 p_3^3 p_6+24 p_3^2 p_6^2- \right.\right.\right. \\ 
\ \hspace{-1.7cm}
\left. \left.\left. 16 p_3 \left(3 p_4^3-9 p_4^2 p_6+9 p_4 p_6^2-2 p_6^3\right)+18 p_4^4-48 p_4^3 p_6+36 p_4^2 p_6^2-3 p_6^4\right)+40 p_3^6-144 p_3^5 p_6+180 p_3^4 p_6^2\right.\right. \\ 
\ \hspace{-1.7cm}
\left. \left.-80 p_3^3 \left(3 p_4^3-9 p_4^2 p_6+9 p_4 p_6^2-2 p_6^3\right)+30 p_4^6-72 p_4^5 p_6+45 p_4^4 p_6^2-p_6^6+144\right)\right)\label{eq:eq:volsliceSJA66}
\end{multline}

By applying the transformation~\eqref{eq:transfpm}, plugging in the already computed values for $\mu_1=1$ and $\mu_2=2+\sqrt{2}$ and using the fact that $\mu_3$ is the root of Equation~\eqref{eq:mu3defpoly} and $\mu_4$ is the root of~\eqref{eq:mu4defpoly} , we get that equation $v(1-\mu_1k,2-\mu_2k,3-\mu_3k,4-\mu_4 k,6-\mu_6k,6-\mu_6k)=6k$ is equivalent to
\begin{multline}\label{eq:mu66defpoly}
\ \hspace{-1.7cm}
\mu_6^6-36 \mu_6^5+270 \mu_6^4-160 \mu_6^3 \mu_3^3+1440 \mu_6^3 \mu_3^2-1440 \mu_6^3 \mu_3-1200 \sqrt{2} \mu_6^3-2160 \mu_6^3-180 \mu_6^2 \mu_3^4+720 \mu_6^2 \mu_3^3 \mu_4\\
\ \hspace{-1.7cm}
+2160 \mu_6^2 \mu_3^3-6480 \mu_6^2 \mu_3^2 \mu_4-7560 \mu_6^2 \mu_3^2+6480 \mu_6^2 \mu_3 \mu_4+6480 \mu_6^2 \mu_3-45 \mu_6^2 \mu_4^4+720 \mu_6^2 \mu_4^3-1080 \mu_6^2 \mu_4^2\\
\ \hspace{-1.7cm}
+4320 \sqrt{2} \mu_6^2 \mu_4+5040 \mu_6^2 \mu_4+4320 \sqrt{2} \mu_6^2+5760 \mu_6^2+144 \mu_6 \mu_3^5-720 \mu_6 \mu_3^3 \mu_4^2-2880 \mu_6 \mu_3^3 \mu_4-8640 \mu_6 \mu_3^3\\
\ \hspace{-1.7cm}
+6480 \mu_6 \mu_3^2 \mu_4^2+25920 \mu_6 \mu_3^2 \mu_4+12960 \mu_6 \mu_3^2-6480 \mu_6 \mu_3 \mu_4^2-25920 \mu_6 \mu_3 \mu_4-6480 \mu_6 \mu_3+72 \mu_6 \mu_4^5\\
\ \hspace{-1.7cm}
-900 \mu_6 \mu_4^4-1440 \mu_6 \mu_4^3-4320 \sqrt{2} \mu_6 \mu_4^2-720 \mu_6 \mu_4^2-17280 \sqrt{2} \mu_6 \mu_4-20160 \mu_6 \mu_4+8928 \sqrt{2} \mu_6+13104 \mu_6\\
\ \hspace{-1.7cm}
-40 \mu_3^6-144 \mu_3^5+1440 \mu_3^4+240 \mu_3^3 \mu_4^3+1440 \mu_3^3 \mu_4^2+2880 \mu_3^3 \mu_4+10560 \mu_3^3-2160 \mu_3^2 \mu_4^3-12960 \mu_3^2 \mu_4^2\\
\ \hspace{-1.7cm}
-25920 \mu_3^2 \mu_4-7560 \mu_3^2+2160 \mu_3 \mu_4^3+12960 \mu_3 \mu_4^2+25920 \mu_3 \mu_4-2160 \mu_3-30 \mu_4^6+288 \mu_4^5+1440 \mu_4^4\\
\ \hspace{-1.7cm}
+1440 \sqrt{2} \mu_4^3+1680 \mu_4^3+8640 \sqrt{2} \mu_4^2+5760 \mu_4^2+17280 \sqrt{2} \mu_4+20160 \mu_4-42048 \sqrt{2}-58344=0
\end{multline}
If we pick the second largest  root $\mu_6\approx 25.3585$ of this equation we can see that indeed condition $p_4\leq p_6\leq p_5$ is respected (it is equivalent to $0.137473\leq k\leq 0.149686$ which holds since $k=\frac{1}{m+1}$ and $m= 6$), so we have computed that for $m=6$
$$
p_6^{(6)}\approx 6-\frac{25.3585}{7}=2.37736\quad\text{and}\quad \mu_6^{(6)}\approx 25.3585.
$$ 
\end{itemize}

\end{document}